\documentclass[11pt]{article}

\setlength{\textwidth}{14cm}

\usepackage{balance} 
\usepackage[numbers]{natbib}
\bibliographystyle{plainnat}
\usepackage{authblk}
\usepackage{multirow}
\usepackage{makecell}
\usepackage{tabu}
\usepackage{graphicx}
\usepackage{enumitem}
\usepackage{algorithm}
\usepackage{algpseudocode}
\usepackage{blindtext}
\usepackage{hyperref}
\usepackage{xcolor}
\usepackage{amsfonts}
\usepackage[english]{babel}
\usepackage{amsthm}
\usepackage{amsmath}
\usepackage{cleveref}
\usepackage{array}
\usepackage{booktabs}
\usepackage{soul}
\usepackage[bibliography=common]{apxproof}

\newcommand{\ww}{\mathbf{w}}
\newcommand{\gcdw}{\operatorname{gcd}(\ww)}

\usepackage{color}

\newcommand{\erel}[1]{}

\newcommand{\BibTeX}{\rm B\kern-.05em{\sc i\kern-.025em b}\kern-.08em\TeX}

\newtheoremrep{theorem}{Theorem}[section]
\newtheoremrep{proposition}[theorem]{Proposition}
\newtheoremrep{lemma}[theorem]{Lemma}
\newtheorem{observation}[theorem]{observation}

\theoremstyle{definition}
\newtheorem{example}[theorem]{Example}
\newtheorem{definition}[theorem]{Definition}
\newtheorem{remark}[theorem]{Remark}

\makeatletter

\makeatother


\title{Weighted Envy Freeness With Bounded Subsidies \footnote{A short version of this article can be found at \cite{elmalem2024weighted}.}}



\author[1]{Noga Klein Elmalem}
\author[1] {Rica Gonen}
\author[2]{Erel Segal-Halevi}
\affil[1]{The Open University of Israel}
\affil[2]{Ariel University}

\begin{document}
\maketitle

\begin{abstract}
We explore solutions for fairly allocating indivisible items among agents assigned weights representing their entitlements.
Our fairness goal is \textbf{weighted-envy-freeness (WEF)}, where each agent deems their allocated portion relative to their entitlement at least as favorable as any other’s relative to their own.
In many cases, achieving WEF necessitates monetary transfers, which can be modeled as third-party subsidies. The goal is to attain WEF with bounded subsidies.

Previous work in the unweighted setting of subsidies relied on basic characterizations of EF that fail in the weighted settings. This makes our new setting challenging and theoretically intriguing. We present polynomial-time algorithms that compute WEF-able allocations with an upper bound on the subsidy per agent in three distinct additive valuation scenarios: (1) general, (2) identical, and (3) binary. 
When all weights are equal, our bounds reduce to the bounds derived in the literature for the unweighted setting.
\end{abstract}


\section{Introduction}

The mathematical theory of fair item allocation among multiple agents has practical applications in scenarios like inheritance and partnership dissolutions. When agents have equal entitlements, as in inheritance cases, each agent naturally expects their allotment to be at least as good as others'. An allocation satisfying this requirement is called \emph{envy-free (EF)}. 

When the items available for allocation are indivisible, an EF allocation might not exist. A solution often applied in practice is to use \emph{money} to compensate for the envy. In the recent literature, it is common to assume that a hypothetical third-party is willing to subsidize the process such that all agents receive a non-negative amount, and ask what is the \emph{minimum amount of subsidy} required to attain envy-freeness. 

It is common to assume that the agents are \emph{quasilinear}. This means that their utility equals their total value for the items they receive, plus the amount of money they receive (which may be positive or negative). 


The subsidy minimization problem was first studied by \citet{halpern2019fair}. They showed that, for any given allocation, there exists a permutation of the bundles that is \emph{envy-freeable (EF-able)}, that is, can be made \emph{envy-free} with subsidies. The total required subsidy is at most $(n-1)mV$, where $m$ is the number of items, $n$ the number of agents, and $V$ the maximum item value for an agent, and this bound is tight in the worst case. \citet{brustle2020one} considered the case in which the allocation is not given, but can be chosen. They presented an algorithm that finds an envy-freeable allocation through iterative maximum matching, requiring a total subsidy of at most $(n-1) V$, which is  tight too.

This paper extends previous work by considering agents with different entitlements, which we call \emph{weights}. This extension is useful in partnership dissolutions, where partners often hold varying numbers of shares, entitling them to different proportions of the asset. In such cases, each agent expects to receive at least the same "value per share" as others. 
For example, if agent $i$ has twice the entitlement of agent $j$, $i$ expects a bundle worth at least twice as much as $j$'s.

Formally, an allocation is called \emph{weighted envy-free (WEF)} (see e.g. \citet{robertson1998cake,zeng2000approximate,CISZ21}) if for every two agents $i$ and $j$, $\frac{1}{w_i}$ times the utility that $i$ assigns to his own bundle is at least as high as $\frac{1}{w_j}$ times the utility that $i$ assigns to the bundle of $j$, where $w_i$ is $i$'s entitlement and $w_j$ is $j$'s entitlement.

Now, we define \emph{weighted envy-freeability (WEF-ablity)}, the key concept we propose, analogously to the unweighted case: an allocation is WEF-able if it can be made WEF with subsides. More presicely, an allocation is WEF-able if for every two agents $i$ and $j$, $\frac{1}{w_i}$ times the sum of the utility that $i$ assigns to his own bundle and the subsidy he receives is at least as high as $\frac{1}{w_j}$ times the sum of the utility that $i$ assigns to the bundle of $j$ and the subsidy $j$ receives, where $w_i$ (resp., $w_j$) is $i$'s (resp., $j$'s) entitlement. 

Here, we assume quasi-linear utilities.
To illustrate the difficulty in this generalized setting, we show that the results from \citet{halpern2019fair,brustle2020one} fail when agents have different entitlements.
\begin{example} \label{example_intro}
There are two items $o_1,o_2$ and two agents $i_{1}, i_{2}$, with weights $w_{1} = 1, w_{2} = 10$ 
and valuation functions 
$$
\begin{bmatrix}
   & o_1 & o_2 \\
  i_1 & 5 & 7 \\
  i_2 & 10 & 8 \\
\end{bmatrix}
$$
We will show that, contrary to the result of \citet{halpern2019fair}, there exists a division of items where no permutation satisfies WEF.
Consider the bundles $A_1 = \{o_1\}$ and $A_2 = \{o_2\}$, where $i_1$ receives $A_1$ and $i_2$ receives $A_2$. Let $s_1$ and $s_2$ represent the subsidies for $i_1$ and $i_2$, respectively. The utility of $i_1$ for their own bundle is $5 + s_1$, and for $i_2$'s bundle, it is $7 + s_2$. To satisfy WEF, we need:
$\frac{5+s_1}{1} \geq \frac{7+s_2}{10}$, which implies $s_2 \leq 43 + 10 s_1$. 
Similarly, for agent $i_2$, WEF requires: $\frac{8+s_2}{10} \geq \frac{10+s_1}{1}$, which implies $s_2 \geq 92 + 10 s_1$. These two conditions are contradictory, so no subsidies can make this allocation WEF.
Next, consider the permutation where $i_1$ receives $A_2$ and $i_2$ receives $A_1$. In this case, WEF requires: $\frac{7 + s_1}{1} \geq \frac{5+s_2}{10}$, which implies $s_2 \leq 65 + 10 s_1$, 
and for $i_2$: $\frac{10 +s_2}{10} \geq \frac{8 + s_1}{1}$, which implies $s_2 \geq 70 + 10 s_1$. Again, these conditions are contradictory, proving that no permutation of bundles satisfies WEF.

This example also shows that the Iterated Maximum Matching algorithm of \citet{brustle2020one} does not guarantee WEF. The algorithm yields an allocation where all agents receive the same number of items, but as shown, no such allocation can be made WEF.
\end{example}

Of course, since the unweighted case is equivalent to the weighted case where each weight $w_i = 1/n$, all
negative results from the unweighted setting extend to the weighted case. In particular, it is NP-hard to compute the minimum subsidy required to achieve (weighted) envy-freeness, even in the binary additive case (as shown in \cite[Corollary 1]{halpern2019fair}). 
Thus, following previous work,
we develop polynomial-time algorithms that, while not necessarily optimal, guarantee an upper bound on the total subsidy. 
\subsection{Related Work}
\paragraph{\textbf{Equal entitlements.}}
\citet{steinhaus1948problem} initiated fair allocation with the cake-cutting problem, followed by \citet{foley1966resource} advocacy for envy-free resource allocation. Challenges with indivisible items were outlined by \citet{schmeidler1971fair}. 

\paragraph{\textbf{Binary valuations.}}
Particularly relevant to our work is a recent work by \citet{viswanathan2022yankee}, who devised a fair allocation method inspired by Yankee Swap, achieving efficient and fair allocations when agents have submodular \emph{binary} valuations. We use some of their techniques in our algorithms.

\paragraph{\textbf{Subsidies.}}
The concept of compensating an indivisible resource allocation with money has been explored in the literature ever since 
\citet{demange1986multi} introduced an ascending auction for envy-free allocation using monetary payments for unit demand agents.

In addition to their results for additive valuations mentioned earlier,  \citet{brustle2020one} also study the more general class of monotone valuations. They demonstrate that a total subsidy of $2(n-1)^2   V$ is sufficient to guarantee the existence of an envy-freeable allocation.
\citet{kawase2024towards} improved this bound to $\frac{n^2 - n - 1}{2}$.

\citet{caragiannis2021computing} developed an algorithm that approximates the minimum subsidies with any required accuracy for a constant number of agents, though with increased running time. However, for a super-constant number of agents, they showed that minimizing subsidies for envy-freeness is both hard to compute exactly and difficult to approximate.

\citet{aziz2021achieving} presented a sufficient condition and an algorithm to achieve envy-freeness and equitability (every agent should get the same utility) when monetary transfers are allowed for agents with quasi-linear utilities and superadditive valuations (positive or negative).

\citet{barman2022achieving} studied agents with  dichotomous valuations (agents whose marginal value for any good is either zero or one), without any additivity requirement.
They proved that, for $n$ agents, there exists an allocation that achieves envy-freeness with total required subsidy of at most $n-1$, which is tight even for additive valuations.

\citet{goko2024fair} study an algorithm for an envy-free allocation with subsidy, that is also \emph{truthful}, when agents have submodular \emph{binary} valuations. The subsidy per agent is at most $V=1$.
Their algorithm works only for agents with equal entitlements.

\paragraph{\textbf{Different entitlements.}}
In the past few years, several researchers have examined a more general model in which different agents may have different entitlements, included weighted fairness models like \emph{weighted maximin share fairness (WMMS)} and \emph{weighted proportionality up to one item (WPROP1)} (\citet{chakraborty2021picking,babaioff2023fair, aziz2019weighted}).
\citet{mithun2021weighted} established \emph{maximum weighted Nash welfare (MWNW)} satisfies \emph{Pareto optimality} and introduced a weighted extension of EF1. \citet{suksompong2022maximum} demonstrated MWNW properties under binary additive valuations and its polynomial-time computability. They further extended these findings to various valuation types. 

\paragraph{\textbf{Different entitlements with subsidies.}}
\citet{wu2023one} presented a polynomial-time algorithm for computing 
a PROP allocation of \emph{chores} among agents with additive valuations, with total subsidy at most $\frac{n  V}{4}$, which is tight.
For agents with different entitlements, they compute a WPROP allocation with total subsidy at most $\frac{(n-1)  V}{2}$. 
In a subsequent work (\citet{wu2024tree}), they further improved this bound to $(\frac{n}{3} - \frac{1}{6})V$. 

As far as we know, weighted envy-freeness with subsides has not been studied yet. Our paper aims to fill this gap.

\paragraph{\textbf{Matroid-rank valuations.}}
Recent studies have considered \emph{matroid-rank} valuation (binary submodular). \citet{montanari2024weighted} introduce a new family of weighted envy-freeness notions based on the concept of \emph{transferability} and provide an algorithm for computing transferable allocations that maximize welfare. \citet{babaioff2021fair} design truthful allocation mechanisms that maximize welfare and are fair. Particularly relevant to our work is a recent work by \citet{viswanathan2022yankee}, who devised a fair allocation method inspired by Yankee Swap, achieving efficient and fair allocations when agents have submodular binary valuations. We use some of their techniques in our algorithms. Later, \citet{viswanathan2023general} generalize the Yankee Swap algorithm to efficiently compute allocations that maximize any fairness objective, called General Yankee Swap. 
We show that, under binary additive valuations, the general Yankee Swap algorithm can be adapted to agents with different entitlements. 
The adaptation is shown in Section \ref{sec:binary-additive}.

\subsection{Our Results} 
We derive bounds on the amount of subsidy required in order to attain a WEF allocation, in several different settings.
We assume, without loss of generality, that the entitlements are ordered in increasing order, i.e., $w_1 \leq w_2 \leq \ldots \leq w_n$.\\
We denote $W := \sum_{i=1}^n w_i$.

In Section \ref{sub:given-allocation}, we assume that an allocation is given.
As shown in Example \ref{example_intro}, there are  instances in which no rearrangement of bundles yields a WEF-able allocation.
We prove a necessary and sufficient condition under which the allocation is WEF-able. We show that, when the allocation is WEF-able, a total subsidy of $(\frac{W}{w_1}-1)mV$ is sufficient to make it WEF, and prove that this bound is tight in the worst case.

This raises the question of whether a weighted-envy-free allocation with subsidy always exists? 
We provide an affirmative answer in Section \ref{sub:computing-wefable-allocation}. 
For agents with additive valuations, assuming all weights are integers, we show an upper bound that is independent of $m$: it is $\frac{W - w_1}{\gcdw}V$,
where $\gcdw$ is the greatest common divisor of all weights --- largest number $d$ such that $w_i/d$ is an integer for all $i\in N$ .

Our algorithm extends \citet{brustle2020one} algorithm, which in the  unweighted setting attains total subsidy at most $(n-1)  V $. 
With equal entitlements, we can normalize the weights to 1, maintaining the same bounds as in the unweighted case.

Following \citet{halpern2019fair}, 
in addition to the setting of general additive valuations, we study the setting in which agents have \emph{identical} additive valuations (Section \ref{sec:identical-additive}), and the setting in which agents have  \emph{binary} additive valuations (Section \ref{sec:binary-additive}). 

For identical additive valuations, we compute a WEF-able and WEF$(0,1)$ allocation with total subsidy at most $(n-1)V$, which is tight even in the unweighted case.

Interestingly, in this special case (in contrast with the general case), the bound on the subsidy does not depend on the weights.

In Section \ref{sec:binary-additive}, under binary-additive valuations, 
we adapt the General Yankee Swap algorithm \cite{viswanathan2023general} to compute a WEF-able and WEF$(0,1)$ allocation with total subsidy at most $\frac{W}{w_1} - 1$, reducing to $n-1$ for equal weight.

Our findings and contributions are briefly summarized in Table ~\ref{tab:Table1}.
\begin{table}[hbt!]
	\caption{Distinctions between outcomes established in prior research (see citation), and those newly established in the present study, highlighted in \textbf{bold}.
        \\
        All subsidy upper bounds are attainable by polynomial-time algorithms. 
        \\
        In the unweighted setting $w_1 = \ldots = w_n = 1$, and all upper bounds become $(n-1)V$.
        }
	\label{tab:Table1}
	\begin{tabular}{|c|c|c|c|c|}\toprule
		& \multirow{2}{*}{\shortstack{\textbf{Unweighted}\\ \textbf{Setup}}} & \multicolumn{3}{|c|}{\textbf{Weighted Setup}}  \\
           \cline{3-5}
            & &\makecell{ \textbf{General}\\\textbf{Valuations}}  & \makecell{\textbf{Identical} \\ \textbf{Valuations}}  & \makecell{\textbf{Binary} \\ \textbf{Valuations} } \\ \midrule
		\makecell{
           \textbf{Character-}\\\textbf{ization} \\\textbf{of WEF} \\\textbf{Allocation}} & \makecell{(1) No \\ positive \\cost cycles,\\ (2) USW\\ maximi-\\zation 
           \cite{halpern2019fair}}& \multicolumn{3}{|c|}{\makecell{No positive-cost cycles. \\ (\textbf{\ref{thm:wefable--iff-no-cycles}})}} \\ \midrule
		\makecell{\textbf{Permut-}\\ \textbf{ation of }\\ \textbf{a given}
            \\
            \textbf{allocation,} \\ \textbf{that} \\ \textbf{maximizes} \\ \textbf{sum of} \\ \textbf{values}
            }& \makecell{Always \\ EF-able \\ \cite{halpern2019fair}}& \makecell{
            Not \\necessarily \\WEF-able \\ (\textbf{\ref{MWUSEneqWEF}})}& \makecell{Always\\ WEF-able.\\ (\textbf{\ref{thm:wefable--iff-no-cycles1}})} &\makecell{For non- \\redundant\\allocation:\\ Always \\ WEF-able \\ (\textbf{\ref{non-redundant is WEF-able}})}\\ \midrule
		\makecell{\textbf{Total} \\ \textbf{subsidy}\\
            \textbf{upper} \\ \textbf{bound}} & \makecell{$(n-1)V$ \\ \cite{halpern2019fair}} & \makecell{
            $\frac{W-w_1}{\gcdw}V$
            \\
            (\textbf{\ref{cor: sub general additive}})
            }
            &
            \makecell{ $(n-1)V$
            \\ (\textbf{\ref{thm:wefable--iff-no-cycles3}})}&\makecell{
            $\frac{W}{w_1} - 1$
            \\ (\textbf{\ref{app:theorem_21}})}  \\ \midrule
            \makecell{\textbf{Subsidy}\\
            \textbf{bound} \\ \textbf{for a} \\ \textbf{given} \\ \textbf{allocation}} & \makecell{$(n-1)mV$ \\ \cite{halpern2019fair}} & \multicolumn{3}{|c|}{\makecell{
            $\left(\frac{W}{w_1}-1\right)mV$
            \\
            (\textbf{\ref{worst case allocation is given with weights}})
            }}
            \\ \midrule
            
            \makecell{\textbf{Total}\\\textbf{subsidy}\\
            \textbf{lower} \\ \textbf{bound}} & \makecell{$(n-1)V$ \\ \cite{halpern2019fair}} & \makecell{
            $\left(\frac{W}{w_1}-1\right)V$
            \\
            (\textbf{\ref{worst case allocation can be chosen with weights}})
            } & \makecell{$(n-1)V$ \\ (\textbf{\ref{identical-additive-tightness}})} & \makecell{$\frac{W}{w_2} - 1$ \\ (\textbf{\ref{prop:lower-bound-binary})}}
            \\ 
            
		\bottomrule
	\end{tabular}
\end{table}
\begin{remark}
    Practical fair allocation cases use budget-balanced payments rather than subsidies. We use the subsidies terminology for consistency with previous works.
\end{remark}
\section{Preliminaries}

\paragraph{\textbf{Agents and valuations.}}

We denote by $[t]$ the set $\{1, 2, ..., t\}$ for any positive integer $t$. We focus on the problem of allocating $M = \{o_{1},...,o_{m}\}$ indivisible items among  $N = [n]$ agents. Each subset of $M$ is called a \textit{bundle}, 
and a combination of a bundle and a monetary transfer (positive or negative) is called a \emph{portion}.

Each agent $i \in N$ has a \textit{valuation function} $v_{i} : 2^{M} \rightarrow \mathbb{R}_{\geq 0}$, indicating how much they value different bundles. For simplicity, we write $v_{i}(o_1 ,...,o_t)$ instead of $v_{i}(\{o_1, ..., o_t\})$.
Additionally, we define $V = \displaystyle\max_{i\in N, o\in M} v_i(o)$.

The set $\Pi_{k}(S)$ denotes the collection of ordered partitions of a set $S \subseteq M$ into $k$ bundles.
An \textit{allocation}, denoted as $A \in \Pi_{n}(M)$, assigns the items to the agents. It consists of $n$ disjoint bundles $(A_i)_{i \in N}$, where $\forall {i\in N}: A_{i} \subseteq M$, and $A_i \cap A_j = \emptyset$ for all $i \neq j \in N$. The bundle $A_i$ is given to agent $i \in N$, and $v_i(A_i)$ represents agent $i$'s valuation of their bundle under allocation $A$. 
We consider only allocations $A$ that are \textit{complete}, that is, $\cup_{i\in N} A_{i} = M$.

We assume that agents are \emph{quasilinear}, so that the utility of each agent is $$u_i(A_i,s_i) := v_{i}(A_{i}) + s_{i}.$$

We assume that the valuation functions $v_i$ are \textit{normalized}, that is $v_i(\emptyset) = 0$ for all $ i \in N$ (\citet{suksompong2023weighted});
and \textit{additive}, that is $v_{i}(S) = \sum_{o\in S} v_i(o)$ for all $S \subseteq M , i \in N$.
In some sections of this paper, we make some additional assumptions:
\begin{enumerate}
\item \textit{identical additive}: there exists an additive valuation function $v$ such that $v_i\equiv v$ for all $i \in N$.
\item \textit{binary additive}: for all $i \in N$ and $o \in M$, $v_{i}(o) \in \{0,1\}$.
\end{enumerate}
Without loss of generality, we assume that each item is valued positively by at least one agent; items that are valued at $0$ by all agents can be allocated arbitrarily without affecting the envy.

\paragraph{\textbf{Entitlements.}}
Each agent $i \in N$ is endowed with a fixed \emph{entitlement}  $w_i \in \mathbb{R}_{>0}$. We also refer to entitlement as \emph{weight}.
We assume, without loss of generality, that the entitlements are ordered in increasing order, i.e., $w_1 \leq w_2 \leq \ldots \leq w_n$. 
We denote 
$W := \sum_{i\in N} w_i$. 

\paragraph{\textbf{Utilitarian social welfare.}}

The \textit{utilitarian social welfare (USW)} of an allocation $A \in \Pi_n(M)$ is $USW(A) = \sum_{i\in N} v_{i}(A_{i})$. An allocation $A \in \Pi_n(M)$ is called \emph{maximizing utilitarian social welfare}, denoted \textit{$MUSW^{M,N,v}$}, if and only if $\sum_{i\in N}v_{i}(A_{i})\geq \sum_{i\in N}v_{i}(B_{i})$ for any other allocation $B \in \Pi_n(M)$.

The \textit{weighted utilitarian social welfare (WUSW)} of an allocation $A \in \Pi_n(M)$ is the sum of values obtained by each agent, scaled by ratio of their weights, denoted as $WUSW(A) = \sum_{i\in N} w_i   v_{i}(A_{i})$. 
An allocation $A \in \Pi_n(M)$ is considered \textit{$MWUSW^{M,N,v}$} if and only if $\sum_{i\in N} w_i   v_{i}(A_{i})\geq \sum_{i\in N} w_i   v_{i}(B_{i})$ for any other allocation $B \in \Pi_n(M)$.

To see why we multiply by the weights, consider a setting with one item and two agents, who value the item at $v_1 = 4$ and $v_2 = 6$.
With equal weights, the MUSW allocation naturally gives the item to agent 2, who values it higher. However, if the weights are e.g. $w_1=3,w_2=1$, the MWUSW allocation gives the item to agent 1, since the weighted value $w_1  v_1 = 12$ is larger  than $w_2  v_2 = 6$. Multiplying the values by the weights gives higher priority to the agent with the higher weight, as expected.
\paragraph{\textbf{Envy.}}
The \textit{envy} of agent $i$ towards agent $j$ under an allocation $A$ and subsidy vector $\mathbf{s}$ is defined as 
\[u_i(A_j) - u_i(A_i) = 
(v_i(A_j)+s_j) - (v_i(A_i)+s_i).\]
If for all $i,j\in N$ the envy of $i$ towards $j$ is at most $0$, then $(A,\mathbf{s})$ is called \textit{envy-free (EF)}.

The \textit{weighted-envy} of agent $i$ towards agent $j$ under an allocation $A$ and subsidy $\mathbf{s}$ is defined as 
\[\frac{u_i(A_j)}{w_j} - \frac{u_i(A_i)}{w_i}
= \frac{v_i(A_j)+s_j}{w_j} - \frac{u_i(A_i)+s_i}{w_i}.\]
If for all $i,j\in N$ the weighted envy of $i$ towards $j$ is at most $0$, then $(A,\mathbf{s})$ is called a \textit{weighted-envy-free (WEF) solution}.

To see why we \emph{divide} by the weights, 
consider again a setting of one item and two agents who value it identically, $v_1=v_2=V$, and suppose the item is given to  agent 1, and agent 2 gets some subsidy $s_2$.
With equal entitlements, the envy of agent 2 is $(v_2) - (s_2)$, and a subsidy of $s_2=V$ is required to eliminate this envy.
However, if the weights are $w_1=3,w_2=1$, 
the envy of agent 2 is $(v_2/w_1) - (s_2/w_2)$, and the required subsidy is only $s_2=V/3$; 
In contrast, if the weights are $w_1=1,w_2=3$, then the required subsidy is $s_2=3V$.
This is expected, as there is less justification for envy when agent 1 has a higher entitlement, and more justification for envy when agent 2 has a higher entitlement.

Intuitively,
the term $\frac{v_i(A_i)}{w_{i}}$ represents the value per unit entitlement for agent $i$ in their allocation. The WEF condition ensures that this value is at least as high as $\frac{v_i(A_j)}{w_{j}}$, denoting the corresponding value per unit entitlement for agent $j$ in the same allocation.


WEF seamlessly reduces to envy-free (EF) concept when entitlements are equal, i.e., $w_i = w_j$ for all $i,j \in N$. Like EF, which can be challenging to achieve with indivisible items, WEF faces similar limitations in guaranteeing fairness under such circumstances.

Often, we are given only an allocation $A$, and want to find a subsidy vector with which the allocation will be WEF.
\begin{definition}
(a)
We say that subsidy vector $\mathbf{s}$ is \textit{weighted-envy-eliminating} for allocation $A$ if $(A,\mathbf{s})$ is WEF.

(b)
An allocation $(A_{i})_{i \in N}$ is called \textit{weighted-envy-freeable (WEF-able)} if there exists a subsidies vector $\mathbf{s}$ such that $(A,\mathbf{s})$
is WEF.
\end{definition}

In the setting without money, WEF can be relaxed by allowing envy up to an upper bound that depends on the item values. We employ the generalization of allowable envy proposed by \citet{chakraborty2022weighted}:
\begin{definition} (\citet{chakraborty2022weighted}). For $x,y \in [0,1]$, an allocation $A$ is said to satisfy \emph{WEF(x,y)} if for any $i,j \in N$, there exists $B\subseteq A_j$ with $|B| \leq 1$ such that $$\frac{v_i(A_i) + y  v_i(B)}{w_i} \geq \frac{v_i(A_j) - x  v_i(B)}{w_j}.$$
\end{definition}
Specifically, \emph{WEF(0,0)} corresponds to \emph{WEF}.

\paragraph{\textbf{Weighted envy graph.}}
The \emph{weighted envy graph} of allocation $A$, denoted $G_{A,w}$, is a complete directed graph consisting of a set of vertices representing agents $N$, each assigned a weight denoted by $w_{i}$. In case of identical weights, we denote it  $G_{A}$.

For any pair of agents $i$ and $j$ in $N$, the cost assigned to the arc $(i,j)$ in $G_{A,w}$ is defined as the weighted envy that agent $i$ holds toward agent $j$ under allocation $A$:
\[
 cost_A(i,j) := \frac{v_i(A_j)}{w_j} - \frac{v_i(A_i)}{w_i}.\]

We denote the cost of a path $(i_1,...,i_k)$ as $cost_A(i_1,...,i_k) = \sum_{j = 1}^{k-1} cost_A(i_j, i_{j+1})$.
With these definitions, $\ell_{i,j}(A)$ 
 represents the maximum-cost path from $i$ to $j$ in $G_{A,w}$, and $\ell_i(A)$ represents the maximum-cost path in $G_{A,w}$ starting at $i$.

The previous work of \citet{halpern2019fair} provides sufficient and necessary conditions for an EF-able allocation  in the unweighted setup. The following theorem considers a pre-determined allocation $A =  (A_1,\ldots,A_n)$, where we aim to assign one bundle to each agent. 

\begin{theorem} 
\label{theorem halpern and shah}
(\citet{halpern2019fair}). 
In a setting with equal entitlements, the following statements are equivalent:
\begin{enumerate}
\item The allocation $A$ is envy-freeable.
\item The allocation $A$ maximizes the utilitarian welfare across all reassignments of its bundles to agents, that is, for every permutation $\sigma$ of $[n]$ (a bijection $\sigma : [n] \rightarrow [n]$), $\sum_{i\in N}v_{i}(A_{i}) \geq \sum_{i\in N} v_{i}(A_{\sigma{i}})$. \label{part b of theorem halpern and shah}
\item $G_{A}$ has no positive-cost cycles. \label{part c of theorem halpern and shah}
\end{enumerate}
\end{theorem}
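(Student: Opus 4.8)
The plan is to prove the cycle of implications $(1)\Rightarrow(2)\Rightarrow(3)\Rightarrow(1)$, which gives the equivalence of all three statements. For $(1)\Rightarrow(2)$, I would start from an envy-free pair $(A,\mathbf{s})$ and fix an arbitrary permutation $\sigma$ of $[n]$. Envy-freeness gives $v_i(A_i)+s_i \ge v_i(A_{\sigma(i)})+s_{\sigma(i)}$ for every $i\in N$; summing over $i$ and using that $\sum_{i\in N}s_{\sigma(i)}=\sum_{i\in N}s_i$ because $\sigma$ is a bijection, the subsidy terms cancel and we obtain $\sum_{i\in N}v_i(A_i)\ge\sum_{i\in N}v_i(A_{\sigma(i)})$, i.e.\ $A$ maximizes utilitarian welfare over all reassignments of its bundles.

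For $(2)\Rightarrow(3)$ I would argue the contrapositive. Suppose $G_A$ contains a positive-cost cycle $C=(i_1,i_2,\ldots,i_k,i_1)$. Let $\sigma$ be the permutation that cyclically rotates these agents, $\sigma(i_t)=i_{t+1}$ with indices modulo $k$, and fixes every agent not on $C$. Then the change in utilitarian welfare from reassigning bundles according to $\sigma$ is exactly $\sum_{i\in N}v_i(A_{\sigma(i)})-\sum_{i\in N}v_i(A_i)=\sum_{t=1}^{k}\bigl(v_{i_t}(A_{i_{t+1}})-v_{i_t}(A_{i_t})\bigr)=cost_A(C)>0$, contradicting the optimality asserted in $(2)$.

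For $(3)\Rightarrow(1)$, assuming $G_A$ has no positive-cost cycle, I would define the subsidies $s_i:=\ell_i(A)$, the maximum cost of a path in $G_A$ starting at $i$, and verify three things. First, $\ell_i(A)$ is finite and well-defined: any walk repeating a vertex contains a cycle, which has cost $\le 0$ by hypothesis and hence can be excised without decreasing the walk's cost, so the maximum is attained on the finite set of simple paths. Second, $s_i\ge 0$, because the trivial length-$0$ path at $i$ has cost $0$. Third, for any $i,j\in N$, prepending the arc $(i,j)$ to a maximum-cost path out of $j$ yields a path out of $i$, so $s_i=\ell_i(A)\ge cost_A(i,j)+\ell_j(A)=\bigl(v_i(A_j)-v_i(A_i)\bigr)+s_j$; rearranging gives $v_i(A_i)+s_i\ge v_i(A_j)+s_j$, i.e.\ no agent envies another under $(A,\mathbf{s})$. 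Hence $A$ is envy-freeable.

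The step that demands genuine care is the well-definedness of $\ell_i(A)$ inside $(3)\Rightarrow(1)$: without the no-positive-cycle hypothesis the longest-walk value could be $+\infty$, so the cycle-excision argument that reduces arbitrary walks to simple paths is the crux that legitimizes the subsidy assignment. Everything else amounts to summing inequalities and tracking a cyclic permutation.
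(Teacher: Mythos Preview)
Your proof is correct. The paper itself does not prove this theorem---it is quoted as a result of \citet{halpern2019fair}---so the natural comparison point is the paper's proof of the weighted analogue, \Cref{thm:wefable--iff-no-cycles}, which establishes only $(1)\Leftrightarrow(3)$. Your $(3)\Rightarrow(1)$ argument via the longest-path subsidies $s_i=\ell_i(A)$ is exactly the construction used there. For $(1)\Rightarrow(3)$, however, the paper does not pass through the welfare-maximization condition $(2)$; instead it observes directly that the total cost of any cycle in the envy graph is invariant under adding subsidies (a subsidy $s_i$ lowers every outgoing edge at $i$ and raises every incoming edge at $i$ by the same amount, so cycle sums are unchanged), and hence if some subsidy vector makes all edge costs nonpositive, all cycle costs were already nonpositive without subsidy. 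Your route through $(2)$ is the original Halpern--Shah argument and has the virtue of recovering the welfare characterization along the way; the paper's direct invariance argument is what survives in the weighted setting, where condition $(2)$ no longer characterizes envy-freeability (as the paper demonstrates in \Cref{MWUSEneqWEF} and \Cref{theorem wefable is not usw}).
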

The theorem implies that,  in the unweighted setup, for every allocation there exists a reassignment of bundles allocated to the agents, ensuring its envy-freeability.

\citet{halpern2019fair} have outlined a method to find a subsidy vector $\mathbf{s}$
, ensuring envy-freeness while minimizing the required subsidy. Each agent's subsidy is determined by the maximum cost path from that agent in the envy graph. This cost can be computed within strongly polynomial time.

Furthermore, they identify the subsidy needed in the worst case for two scenarios: 
\begin{enumerate}
    \item \textbf{When the allocation is given:} the minimum subsidy required is $(n-1)  m   V$ in the worst case \cite{halpern2019fair}.
    \item \textbf{When the allocation can be chosen:} the minimum subsidy required is at least $(n-1)   V$ in the worst case, even in the special case of binary valuations \cite{halpern2019fair}.
\end{enumerate}

\section{WEF-able Allocations}\label{section 3:section_3}
In this section, we expand the idea of fairness by considering weights and discuss the challenge of identifying WEF-able allocations, compared to EF-able allocations. 

We demonstrate how to compute a subsidy vector that eliminates envy among the agents for a given allocation.
Naturally, the question arises whether a WEF-able allocation for agents with general additive valuations always exists. We address and answer this inquiry.


\subsection{A given allocation: the weighted-envy graph}
\label{sub:given-allocation}

 When trying to extend \Cref{theorem halpern and shah} to the weighted setup, a notable distinction arises: in the unweighted context, an allocation $A$ is EF-able if and only if $A$  maximizes the utilitarian welfare across all reassignments of its bundles to agents (part (\ref{part b of theorem halpern and shah}) of
\Cref{theorem halpern and shah}). 

Formally, for every permutation $\sigma$ of $[n]$, it holds that $$\sum_{i\in N}( v_{i}(A_{i})) \geq \sum_{i \in N}(v_{i}(A_{\sigma(i)})).$$

Contrastingly, in the weighted setup, this assertion does not necessarily hold, as demonstrated by Example \ref{example_intro}.
Moreover, the converse of (\ref{part c of theorem halpern and shah}) in \Cref{theorem halpern and shah} can fail too. 
\begin{proposition} 
\label{theorem wefable is not usw}
    There exists an allocation $A$, which is WEF-able, yet maximizes neither the utilitarian social welfare or the weighted utilitarian social welfare.
\end{proposition}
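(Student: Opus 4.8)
The plan is to exhibit a small explicit instance — two agents and a single item should suffice — together with an allocation that is WEF-able but is neither MUSW nor MWUSW. The key observation driving the construction is the one already highlighted in the preliminaries: the WEF-ability condition scales utilities by $1/w_i$ whereas MWUSW scales them by $w_i$. So I want a situation where giving the item to the agent with the \emph{smaller} weighted value $w_i v_i(o)$ can still be made weighted-envy-free with a nonnegative subsidy, and simultaneously is not utilitarian-optimal.

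Concretely, I would take $M=\{o_1\}$, $N=\{1,2\}$ with $w_1 < w_2$, and values $v_1(o_1), v_2(o_1)$ chosen so that (i) agent $2$ has the larger raw value, so the MUSW allocation gives $o_1$ to agent $2$; (ii) agent $2$ also has the larger weighted value $w_2 v_2(o_1) > w_1 v_1(o_1)$, so the MWUSW allocation also gives $o_1$ to agent $2$; and yet (iii) the allocation $A$ that gives $o_1$ to agent $1$ (and nothing to agent $2$) is WEF-able. For (iii) I need a nonnegative subsidy vector $(s_1,s_2)$ with, for the two weighted-envy constraints,
\[
\frac{v_1(o_1)+s_1}{w_1} \geq \frac{0+s_2}{w_2}, \qquad \frac{0+s_2}{w_2} \geq \frac{v_2(o_1)+s_1}{w_1}.
\]
Taking $s_1=0$, this reduces to $v_2(o_1) \leq w_1 s_2 / w_2 \leq v_1(o_1)$, which is satisfiable exactly when $v_1(o_1) \geq v_2(o_1)$ — but that contradicts (i)/(ii). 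So the single-item example must instead put agent $2$'s bundle empty and choose $s_2$ so that $2$ does not weighted-envy $1$; the binding direction is agent $2$ toward agent $1$. Re-examining: with $A_1=\{o_1\}$, $A_2=\emptyset$, agent $1$ never envies (he has the item and could get subsidy), so really I only need $s_2/w_2 \geq (v_2(o_1)+s_1)/w_1$ and $(v_1(o_1)+s_1)/w_1 \geq s_2/w_2$, i.e. $v_1(o_1) \geq v_2(o_1)$ after setting $s_1=0$. Hence a single item does \emph{not} work, and I should use two items, or two items and asymmetric bundles, so that the two envy inequalities involve different item values and leave slack.

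So the cleaner route: use $M=\{o_1,o_2\}$, give $A_1 = \{o_1,o_2\}$ and $A_2=\emptyset$, or give each agent one item, and pick valuations and weights so that the chosen allocation $A$ is WEF-able (verify by exhibiting an explicit nonnegative subsidy satisfying both weighted-envy constraints, or equivalently by Theorem~\ref{thm:wefable--iff-no-cycles}, by checking the $2$-cycle has nonpositive cost), while a different allocation $B$ strictly beats $A$ in both $USW$ and $WUSW$. The 2-cycle cost check is easy: $cost_A(1,2)+cost_A(2,1) \leq 0$. Then separately compute $USW$ and $WUSW$ of $A$ and of the swapped/alternative allocation and show $A$ loses on both. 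I would also double-check that the subsidies come out nonnegative (the definition of WEF-able in this paper allows only $s_i \geq 0$), which is the one place a carelessly chosen example can fail.

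The main obstacle is purely in the arithmetic of simultaneously satisfying three competing requirements with small integers: WEF-ability of $A$ (a $\leq 0$ cycle condition, which pushes the values of the two agents for their own bundles to be "compatible"), sub-optimality in $USW$ (which wants $A$ to misassign a high-raw-value item), and sub-optimality in $WUSW$ (which wants $A$ to misassign a high-weighted-value item). Since $USW$ and $WUSW$ rank allocations differently in general, the delicate part is choosing weights $w_1<w_2$ and a $2\times 2$ value matrix where the \emph{same} non-optimal allocation is beaten under both rankings yet still survives the cycle test; I expect to land on something like $w_1=1, w_2=2$ with a value matrix having one moderately large entry on the diagonal of $A$ to keep the cycle nonpositive. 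Once a candidate is fixed, the verification is a handful of inequalities and I would present it as a labeled example followed by three one-line checks.
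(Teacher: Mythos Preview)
Your proposal is correct and follows essentially the same route as the paper: a two-agent, two-item explicit instance where one allocation is WEF-able (verified either by exhibiting nonnegative subsidies or, equivalently via Theorem~\ref{thm:wefable--iff-no-cycles}, by checking the unique $2$-cycle has nonpositive cost) while the swapped allocation strictly dominates in both $USW$ and $WUSW$. The paper uses $w_1=2,\ w_2=3$ with values $\bigl(\begin{smallmatrix}8&10\\6&7\end{smallmatrix}\bigr)$ and checks WEF directly with $(s_1,s_2)=(0,2)$; your plan to instead verify the cycle condition and to use weights like $w_1=1,\ w_2=2$ works just as well (and your preliminary rejection of the single-item case is exactly Lemma~\ref{lem:single-item}).
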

\begin{proof}
    Consider two agents with weights $w_{1} = 2, w_{2} = 3$ and $2$ items. The valuation functions are as follows:
\
\[
\begin{bmatrix}
   & o_1 & o_2 \\
  i_1 & 8 & 10 \\
  i_2 & 6 & 7 \\
\end{bmatrix}
\]
Consider the allocation $A=(A_{1}, A_{2})$ when $A_{1}=\{o_{1}\}, A_{2}=\{o_{2}\}$, and subsidies $s_{1} = 0, s_{2}= 2$: 
\begin{enumerate}
\item The envy of agent $i_1$ towards agent $i_2$ is $\frac{v_{1}(A_{2}) + s_{2}}{w_2} - \frac{v_{1}(A_{1}) + s_{1} }{w_1}= \frac{10 + 2}{3} - \frac{8 + 0}{2} = 0$
\item the envy of agent $i_2$ towards agent $i_1$ is $\frac{v_{2}(A_{1})+ s_{1} }{w_1} -\frac{v_{2}(A_{2}) + s_{2}}{w_2} = \frac{6 + 0}{2}  - \frac{7 + 2}{3} = 0$
\end{enumerate}
so the allocation is WEF. However, we can achieve both higher USW and higher WUSW by swapping the bundles: $A_{1}^{'} = A_{2}, A_{2}^{'} = A_{1}$, as
\begin{align*}
&v_{1}(A_{1}) + v_{2}(A_{2}) = 8 + 7 = 15 < \\&16 = 10 + 6 = v_{1}(A_{1}') + v_{2}(A_{2}')
\\
&w_1   v_{1}(A_{1}) + w_2   v_{2}(A_{2}) = 16 + 21 = 37 < \\&38 = 20 + 18 = w_1   v_{1}(A_{1}') + w_2   v_{2}(A_{2}'). \qedhere
\end{align*}
\end{proof}

It turns out that two out of three parts of this theorem still hold in the weighted setting; The proof is similar to the one in \cite{halpern2019fair}.
\begin{theorem} 
\label{thm:wefable--iff-no-cycles}
    For an allocation $A$, the following statements are equivalent:
    \begin{enumerate} [label=(\alph*)]
    \item $A$ is weighted-envy-freeable \label{condition_a_theroem_1}
    \item $G_{A,w}$ has no positive-cost cycles. \label{condition_b_theroem_1}
\end{enumerate}
\end{theorem}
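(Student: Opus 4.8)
The plan is to mimic the unweighted proof of \Cref{theorem halpern and shah} (parts \ref{condition_a_theroem_1} $\Leftrightarrow$ \ref{condition_b_theroem_1}), adapting it to the weighted envy graph $G_{A,w}$ whose arc costs are $cost_A(i,j) = v_i(A_j)/w_j - v_i(A_i)/w_i$. The crucial observation is that a subsidy vector $\mathbf{s}$ makes $A$ weighted-envy-free precisely when, for all $i,j$, $\frac{v_i(A_i)+s_i}{w_i} \ge \frac{v_i(A_j)+s_j}{w_j}$, i.e. $\frac{s_j}{w_j} - \frac{s_i}{w_i} \le -cost_A(i,j) = cost_A(j\text{-side})$... more precisely $\frac{s_i}{w_i} - \frac{s_j}{w_j} \ge cost_A(i,j)$. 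So if we define the rescaled potentials $p_i := s_i/w_i$, the WEF condition is exactly $p_i - p_j \ge cost_A(i,j)$ for every arc, which is the classical condition that $(p_i)$ is a feasible potential for the shortest/longest-path system on $G_{A,w}$. Such a potential exists if and only if $G_{A,w}$ has no positive-cost cycle.

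First I would prove \ref{condition_b_theroem_1} $\Rightarrow$ \ref{condition_a_theroem_1}: assuming no positive-cost cycle, the quantity $\ell_i(A)$ (maximum cost of a path in $G_{A,w}$ starting at $i$) is well-defined and finite. Set $p_i := \ell_i(A)$ and then $s_i := w_i \cdot \ell_i(A) \ge 0$ (nonnegativity holds because the single-vertex path has cost $0$, so $\ell_i(A)\ge 0$). The standard longest-path argument gives $\ell_i(A) \ge cost_A(i,j) + \ell_j(A)$ for every arc $(i,j)$, which rearranges to the WEF inequality $\frac{v_i(A_i)+s_i}{w_i} \ge \frac{v_i(A_j)+s_j}{w_j}$; hence $(A,\mathbf{s})$ is WEF and $A$ is WEF-able. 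Conversely, for \ref{condition_a_theroem_1} $\Rightarrow$ \ref{condition_b_theroem_1}, suppose $(A,\mathbf{s})$ is WEF and let $C = (i_1, i_2, \dots, i_k, i_1)$ be any cycle. Summing the inequalities $cost_A(i_t, i_{t+1}) \le \frac{s_{i_t}}{w_{i_t}} - \frac{s_{i_{t+1}}}{w_{i_{t+1}}}$ around the cycle telescopes to $cost_A(C) \le 0$, so no positive-cost cycle can exist.

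I do not expect a serious obstacle here, since the argument is essentially the Gallai/Bellman potential characterization transported through the substitution $p_i = s_i/w_i$; the one point requiring a little care is confirming that the resulting subsidies $s_i = w_i \ell_i(A)$ are nonnegative (needed because the problem insists on nonnegative subsidies), which follows from $\ell_i(A)\ge 0$ and $w_i>0$. It is also worth remarking, as the paper does, that this mirrors \citet{halpern2019fair} with the weighted arc costs in place of the unweighted ones, and that the absence of the USW-characterization (\Cref{theorem wefable is not usw}, Example~\ref{example_intro}) means we genuinely rely on the cycle condition rather than on any welfare-maximization shortcut.
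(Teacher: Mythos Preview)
Your proposal is correct and essentially matches the paper's proof: the direction \ref{condition_b_theroem_1} $\Rightarrow$ \ref{condition_a_theroem_1} is identical (set $s_i = w_i\,\ell_i(A)$ and use $\ell_i(A) \ge cost_A(i,j) + \ell_j(A)$), while for \ref{condition_a_theroem_1} $\Rightarrow$ \ref{condition_b_theroem_1} you telescope the WEF inequalities around a cycle, whereas the paper argues that subsidies leave cycle costs invariant---these are two phrasings of the same potential-function idea.
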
 
\begin{proof}
We show $(a) \Rightarrow (b)$ and $(b) \Rightarrow (a)$.

For $(a) \Rightarrow (b)$,
we prove a more general claim: the total cost of any cycle in $G_{A,w}$ is the same for any subsidy vector.
Indeed, suppose we give some agent $i$ a subsidy of $s_i$. As a result, the cost of every edge from $i$ decreases by $s_i / w_i$ (as $i$ experiences less envy), and the cost of every edge into $i$ increases by $s_i / w_i$ (as other agents experience more envy in $i$).
Every cycle through $i$ contains exactly one edge from $i$ and one edge into $i$, and every other cycle contains no such edges. Therefore, the total cost of any cycle
remains unchanged.

Now, if $A$ is WEF-able, then for some subsidy vector, the costs of all edges are at most $0$, so the total costs of all cycles are at most $0$. Therefore, the costs of all cycles are at most $0$ even without the subsidy vector.

For $(b) \Rightarrow (a)$, we present a specific subsidy vector.
Suppose $G_{A,w}$ has no positive-cost cycles. Then, $\ell_i(A)$ (the maximum cost of a path in $G_{A,w}$ starting at $i$) is well-defined and finite. For each $i \in N $, let $s_i = w_{i}   cost_{A}(\ell_i(A))$. 
It is noteworthy that $s_{i} \geq 0$, since there is a path from $i$ to $i$ with $0$ cost, establishing the suitability of \textbf{s} as a valid subsidies vector.

Furthermore, the following holds for all pairs $i \neq j \in N$:
\begin{align*}
\frac{s_i}{w_i} 
&= cost_{A}(\ell_i(A)) \geq cost_{A}(i,j) + cost_{A}(\ell_j(A)) 
\\
&= \frac{v_{i}(A_{j})}{w_{j}} - \frac{v_{i}(A_{i})}{w_{i}} + cost_{A}(\ell_j(A)) 
\\
&= \frac{v_{i}(A_{j})}{w_{j}} - \frac{v_{i}(A_{i})}{w_{i}} + \frac{s_j}{w_j}  .
\end{align*}
  This implies $\frac{v_{i}(A_{i})+ s_{i}}{w_{i}}  \geq \frac{v_{i}(A_{j}) + s_{j}}{w_{j}}$.
Hence, $(A,\mathbf{s})$ is weighted-envy-free, and thus, $A$ is weighted-envy-freeable.
\end{proof}
\Cref{thm:wefable--iff-no-cycles} presents an effective method for verifying whether a given allocation $A$ is WEF-able.
\begin{proposition}
Given an allocation $A$, it can be checked in polynomial time whether $A$ is WEF-able.
\end{proposition}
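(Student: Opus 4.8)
The plan is to reduce the WEF-ability test to a classic negative-cycle detection problem on a weighted digraph, invoking \Cref{thm:wefable--iff-no-cycles}: $A$ is WEF-able if and only if $G_{A,w}$ has no positive-cost cycle. So it suffices to show that $G_{A,w}$ can be constructed and checked for positive-cost cycles in polynomial time.

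First I would observe that $G_{A,w}$ is a complete digraph on the $n$ vertices $N$, with $cost_A(i,j) = v_i(A_j)/w_j - v_i(A_i)/w_i$. Given the valuation oracle (or, in the additive case, the item values) and the allocation $A$, each of the $n(n-1)$ arc costs is a difference of two rationals computable with a single pass over the relevant bundles, so the whole graph is built in time polynomial in $n$, $m$, and the bit-complexity of the $v_i$'s and $w_i$'s. Next, since a positive-cost cycle in $G_{A,w}$ is exactly a negative-cost cycle in the digraph obtained by negating every arc cost, I would run a standard negative-cycle detection routine — e.g.\ the Bellman--Ford algorithm (iterating $|V|$ rounds of edge relaxations from an auxiliary source and checking whether any distance still improves on the final round), which runs in $O(|V|\cdot|E|) = O(n^3)$ time. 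If no positive-cost cycle is found, $A$ is WEF-able; otherwise it is not.

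I would also note, as a small strengthening consistent with the later sections, that when $A$ \emph{is} WEF-able the $(b)\Rightarrow(a)$ direction of \Cref{thm:wefable--iff-no-cycles} gives an explicit envy-eliminating subsidy vector via $s_i = w_i\cdot cost_A(\ell_i(A))$, and each $\ell_i(A)$ — the maximum-cost path out of $i$ — is the longest path in a graph with no positive-cost cycle, hence computable in strongly polynomial time by the same shortest-path machinery (applied to the negated costs). So the procedure not only decides WEF-ability but, in the affirmative case, outputs a witnessing subsidy vector, all in polynomial time.

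There is no serious obstacle here; the statement is essentially a corollary of \Cref{thm:wefable--iff-no-cycles} together with the well-known polynomiality of negative-cycle detection. The only point requiring a line of care is bookkeeping about the model of computation: the arc costs are rationals with denominators involving the $w_i$, so one should either argue in the standard real/rational arithmetic model or clear denominators (multiply all costs by $\prod_i w_i$, or by a common denominator of the $w_i$) to work over integers and make the ``strongly polynomial'' claim precise. I would handle that with a brief remark rather than a detailed calculation.
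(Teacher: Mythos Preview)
Your proposal is correct and follows essentially the same approach as the paper: invoke \Cref{thm:wefable--iff-no-cycles}, negate the arc costs of $G_{A,w}$, and run a standard negative-cycle detection routine. The paper uses Floyd--Warshall rather than Bellman--Ford and states the complexity as $O(mn+n^3)$, but the argument is otherwise identical; your additional remarks about extracting the subsidy vector and about rational arithmetic are fine but not needed for the statement as phrased.
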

\begin{proof}
    According to \Cref{thm:wefable--iff-no-cycles}, determining whether $A$ is is WEF-able is equivalent to verifying the absence of positive-cost cycles in the weighted envy-graph $G_{A,w}$. This can be achieved by transforming the graph by negating all edge weights and then checking for the presence of negative-cost cycles. Using the Floyd-Warshall algorithm 
    (\citet{weisstein2008floyd, wimmer2017floyd}) on the graph obtained by negating all edge cost in $G_{A,w}$ requires $O(n^3)$ time. Constructing the initial graph $G_{A,w}$ takes $O(mn)$ time, resulting in an overall complexity of $O(mn + n^3)$.   
\end{proof}

We use \Cref{thm:wefable--iff-no-cycles} to present an alternative proof that part 2 of \Cref{theorem halpern and shah} does not hold in the weighted setting.

\begin{proposition} 
\label{MWUSEneqWEF}
There exist bundles $B_1, ..., B_n$ 
such that for every permutation of the agents $\sigma : [n] \rightarrow [n]$, 
the resulting allocation $A_i = B_{\sigma(i)}$ is not WEF-able.
\end{proposition}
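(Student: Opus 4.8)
The plan is to exhibit a concrete instance — in fact \Cref{example_intro} already supplies one — and use \Cref{thm:wefable--iff-no-cycles} to argue that no permutation can work. So the strategy is: take bundles $B_1,\dots,B_n$ and show that for every bijection $\sigma$, the weighted envy graph $G_{A,w}$ of the allocation $A_i = B_{\sigma(i)}$ contains a positive-cost cycle; by \Cref{thm:wefable--iff-no-cycles}, this means $A$ is not WEF-able.

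Concretely, I would reuse the $n=2$ instance from \Cref{example_intro}: two items $o_1, o_2$, two agents with weights $w_1 = 1$, $w_2 = 10$, and valuations $v_1(o_1)=5, v_1(o_2)=7, v_2(o_1)=10, v_2(o_2)=8$, with $B_1 = \{o_1\}$, $B_2 = \{o_2\}$. For $n=2$ there are only two permutations: either agent $1$ gets $B_1$ and agent $2$ gets $B_2$, or vice versa. In each case the only nontrivial cycle is the $2$-cycle $(1,2,1)$, whose cost is $cost_A(1,2) + cost_A(2,1)$. For the first permutation this equals $\left(\tfrac{7}{10} - \tfrac{5}{1}\right) + \left(\tfrac{6}{1} - \tfrac{8}{10}\right)$ — wait, I must use the actual numbers of that example: $\left(\tfrac{v_1(A_2)}{w_2} - \tfrac{v_1(A_1)}{w_1}\right) + \left(\tfrac{v_2(A_1)}{w_1} - \tfrac{v_2(A_2)}{w_2}\right) = \left(\tfrac{7}{10} - 5\right) + \left(10 - \tfrac{8}{10}\right) = -4.3 + 9.2 = 4.9 > 0$. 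For the swapped permutation, $A_1 = B_2, A_2 = B_1$, the $2$-cycle cost is $\left(\tfrac{v_1(B_1)}{w_2} - \tfrac{v_1(B_2)}{w_1}\right) + \left(\tfrac{v_2(B_2)}{w_1} - \tfrac{v_2(B_1)}{w_2}\right) = \left(\tfrac{5}{10} - 7\right) + \left(8 - \tfrac{10}{10}\right) = -6.5 + 7 = 0.5 > 0$. Both cycles are positive, so by \Cref{thm:wefable--iff-no-cycles} neither allocation is WEF-able, proving the proposition for $n = 2$. (This is exactly the contradiction already derived by hand in \Cref{example_intro}; the reformulation via positive-cost cycles just packages it cleanly.)

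To state the proposition for general $n$, I would pad the instance: add agents $3, \dots, n$ each with some weight, and add items each valued $1$ only by one of these new agents (or simply give each new agent a distinct dedicated item valued positively only by them and $0$ by everyone else), so that the new agents never create envy and the obstruction among agents $1$ and $2$ is untouched by any reshuffling that keeps the structure; alternatively, one can just observe that the proposition as stated only asks for the existence of \emph{some} bundles $B_1,\dots,B_n$, and handle arbitrary $n$ by an analogous small construction. The cleanest route is: keep the two "conflicting" items as above, and add $n-2$ further items $o_3,\dots,o_n$ with $v_i(o_i) = 1$ for $i \ge 3$ and $v_k(o_i) = 0$ for all $k \ne i$; set $B_i = \{o_i\}$ for $i \ge 3$. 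Any permutation that assigns one of $o_1, o_2$ to one of agents $1,2$ reduces (on those two agents) to the $2$-agent analysis; and if instead $o_1$ or $o_2$ goes to some agent $k \ge 3$, one checks a short cycle through $k$ and an agent in $\{1,2\}$ is still positive — this case analysis is the one routine-but-slightly-fiddly part.

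The main obstacle is precisely this last bookkeeping for general $n$: ensuring that in \emph{every} permutation — including ones that route the "bad" items to the auxiliary agents — some positive-cost cycle survives. The $n=2$ core is immediate from \Cref{example_intro}. For the general case one wants the auxiliary items to be worthless to everyone except their designated owner so that any permutation giving such an item to a "wrong" agent produces an easily-detected positive two-cycle, while permutations respecting ownership collapse to the $n=2$ case. I expect the authors' proof to simply invoke \Cref{example_intro} together with \Cref{thm:wefable--iff-no-cycles} for $n=2$ and remark that the construction extends, rather than writing out the full padding argument.
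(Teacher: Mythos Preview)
Your proposal is correct and takes essentially the same approach as the paper: both use the two-agent instance from \Cref{example_intro}, compute the cost of the unique nontrivial $2$-cycle under each of the two permutations (obtaining $49/10$ and $1/2$, matching your $4.9$ and $0.5$), and invoke \Cref{thm:wefable--iff-no-cycles}. The paper stops at $n=2$ and does not carry out the padding argument you sketch for general $n$; your anticipation of this was accurate.
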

\begin{proof}
As in Example \ref{example_intro}, consider 2 bundles $B_1, B_2$ and 2 agents, $N=\{i_{1}, i_{2}\}$, 
with weights $w_{1} = 1, w_{2} = 10$ and valuation functions 
\[
\begin{bmatrix}
   & B_1 & B_2 \\
  i_1 & 5 & 7 \\
  i_2 & 10 & 8 \\
\end{bmatrix}
\]
We show that for every permutation of the agents $\sigma :[n] \rightarrow [n]$, the resulting allocation $A_i = B_{\sigma(i)}$ results in the corresponding weighted envy graph a positive cost cycle.

The allocation $A_{1} = B_1 , A_{2} = B_2$ maximizes both the utilitarian social welfare and the weighted utilitarian social welfare.
The weighted envy of agent $i_1$ towards agent $i_2$ is $cost_{A}(i_{1}, i_{2}) = \frac{v_{i_{1}}(A_{i_{2}})}{w_{i_{2}}} - \frac{v_{i_{1}}(A_{i_{1}})}{w_{i_{1}}} = \frac{7}{10} - 5 = -\frac{43}{10}$;
the weighted envy of agent $i_2$ towards agent $i_1$ is calculated as $cost_{A}(i_{2}, i_{1}) = \frac{v_{i_{2}}(A_{i_{1}})}{w_{i_{1}}} - \frac{v_{i_{2}}(A_{i_{2}})}{w_{i_{2}}} = 10 - \frac{8}{10} = \frac{92}{10}$.
Thus, the cost of the cycle $(i_{1}, i_{2})$ is $\frac{49}{10} > 0$, indicating a positive-cost cycle. \\

Also, if an attempt is made to swap the bundles in a circular manner, such that $A_{1} =A_2, A_{2} = B_1$, the cost of this cycle $(i_{1}, i_{2})$ is $(\frac{5}{10} - 7) + (8 - \frac{10}{10}) = \frac{1}{2} > 0$, affirming a positive cost as well. 

By \Cref{thm:wefable--iff-no-cycles}, both permutations are not WEF-able.
\end{proof}

Deciding that an allocation $A$ is WEF-able is not enough. As in \citet{halpern2019fair}, we also need to find a minimal subsidy vector, $\mathbf{s}$, that ensures WEF for $(A, \mathbf{s})$. This subsidy vector can be computed in strongly polynomial time and matches the one constructed in the proof of \Cref{thm:wefable--iff-no-cycles}. 
\begin{theorem} \label{max_path_subsidy}
For any allocation $A$ that is WEF-able, let $\mathbf{s}^{*}$ be a subsidy vector defined by 
$$\mathbf{s}^{*}_{i} := w_i \cdot cost_{A}(\ell_i(A)),$$
for all $i\in N$. 
Then
\begin{enumerate}
    \item $(A,\mathbf{s}^{*})$ is WEF;
    \item Any other envy-eliminating  subsidy vector $\mathbf{s}$
    satisfies $s^{*}_{i} \leq s_{i}$ for all $i\in N$;
    \item The computation of $s^{*}$ can be performed in $O(nm+n^{3})$ time.
    \item There exists at least one agent $i\in N$ for whom $s^*_i=0$. \label{agent with 0 subsidy}
\end{enumerate}
\end{theorem}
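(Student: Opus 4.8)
My plan is to establish the four parts in sequence, leaning on the $(b)\Rightarrow(a)$ construction from the proof of \Cref{thm:wefable--iff-no-cycles}. For part (1): since $A$ is WEF-able, \Cref{thm:wefable--iff-no-cycles} guarantees $G_{A,w}$ has no positive-cost cycle, so the maximum-cost path $\ell_i(A)$ is well defined and finite for each $i$, and $cost_A(\ell_i(A))\ge 0$ because the one-vertex path at $i$ has cost $0$; hence $\mathbf{s}^*$ is a legitimate non-negative subsidy vector. The displayed chain of inequalities in that earlier proof — taking its $s_i$ to be exactly $\mathbf{s}^*_i$ — already shows $\frac{v_i(A_i)+\mathbf{s}^*_i}{w_i}\ge\frac{v_i(A_j)+\mathbf{s}^*_j}{w_j}$ for all $i\neq j$, so $(A,\mathbf{s}^*)$ is WEF; I would simply invoke it.

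\emph{Minimality (part 2).} Let $\mathbf{s}$ be any envy-eliminating subsidy vector, so $s_j\ge 0$ for all $j$, and the WEF inequality for the ordered pair $(i,j)$ rearranges to $\frac{s_i}{w_i}-\frac{s_j}{w_j}\ge cost_A(i,j)$. Summing this telescoping inequality along any path $i=i_1,i_2,\dots,i_k$ gives $\frac{s_{i_1}}{w_{i_1}}\ge cost_A(i_1,\dots,i_k)+\frac{s_{i_k}}{w_{i_k}}$. Applying this to a maximum-cost path $\ell_i(A)$ leaving $i$, whose last vertex $t$ satisfies $s_t\ge 0$, yields $\frac{s_i}{w_i}\ge cost_A(\ell_i(A))=\frac{\mathbf{s}^*_i}{w_i}$, and multiplying through by $w_i>0$ gives $s_i\ge\mathbf{s}^*_i$. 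For part (3): building $G_{A,w}$ requires the values $v_i(A_j)$, and since $(A_j)_{j\in N}$ partitions $M$ these cost $O(m)$ per agent, $O(nm)$ in total; with no positive-cost cycle, negating every edge cost turns the longest-path computation into all-pairs shortest paths with no negative cycle, which Floyd--Warshall solves in $O(n^3)$, after which $\ell_i(A)=\max_j\ell_{i,j}(A)$ and $\mathbf{s}^*_i=w_i\cdot cost_A(\ell_i(A))$ cost a further $O(n^2)$; total $O(nm+n^3)$.

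\emph{Zero subsidy for some agent (part 4).} The key point is to pick not the agent with the largest subsidy but the terminal vertex of a globally longest path. Because $G_{A,w}$ has no positive-cost cycle, every walk can be shortcut to a simple path of no smaller cost, so the quantity $\max_{i\in N} cost_A(\ell_i(A))$ is attained, say by a path $P^\star$ ending at a vertex $t$. If $\ell_t(A)$ had strictly positive cost, then concatenating $P^\star$ with a positive-cost path leaving $t$ would produce a walk strictly costlier than $P^\star$, contradicting maximality; hence $cost_A(\ell_t(A))=0$ (it is $\ge 0$ trivially), so $\mathbf{s}^*_t=w_t\cdot 0=0$.

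The only genuinely non-routine step is part (4): one must choose the endpoint of a longest path rather than any convenient agent, and must observe that the argument rests entirely on the absence of positive-cost cycles — i.e.\ precisely on WEF-ability — both for "longest path is well defined" and for the concatenation contradiction. Parts (1)--(3) are bookkeeping layered on top of \Cref{thm:wefable--iff-no-cycles}.
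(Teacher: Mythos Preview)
Your proof is correct, and parts (1)--(3) match the paper's argument essentially line by line: part (1) simply invokes the $(b)\Rightarrow(a)$ construction from \Cref{thm:wefable--iff-no-cycles}; part (2) telescopes the rearranged WEF inequality along a maximum-cost path and uses non-negativity of the terminal subsidy; part (3) is the same Floyd--Warshall bookkeeping.

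For part (4) you take a slightly different route. The paper argues by contradiction: if every $s^*_i>0$ then every agent launches a positive-cost path, and chasing these paths through finitely many agents eventually closes a positive-cost cycle, contradicting \Cref{thm:wefable--iff-no-cycles}. You instead argue directly, picking the terminal vertex $t$ of a globally maximum-cost path and observing that any positive-cost continuation from $t$ would (after shortcutting cycles, which can only raise the cost) yield a strictly longer simple path. Your version is a bit cleaner and is constructive---it actually identifies an agent with zero subsidy---whereas the paper's contradiction argument is terser but leaves the ``paths form a positive-cost cycle'' step somewhat informal. The two arguments are dual expressions of the same underlying fact and neither buys materially more generality.
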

\begin{proof}
\begin{enumerate}
    \item The establishment of condition \ref{condition_b_theroem_1} implying condition \ref{condition_a_theroem_1} in Theorem  \ref{thm:wefable--iff-no-cycles} has already demonstrated the inclusion of $(A,\mathbf{s}^{*})$ is WEF. 
    \item Let $\mathbf{s}$ be a subsidy vector, such that $(A,\mathbf{s})$ is WEF, and $i \in N$ be fixed. Consider the highest-cost path originating from $i$ in the graph $G_{A,w}$. This path is denoted as $(i_{1}, ..., i_{k})$, with $i_{1} = i$ and $cost_{A}(i_{1}, ..., i_{r}) = cost_{A}(\ell_i(A)) = \frac{s^{*}_{i}}{w_i}$.
   
    Due to the WEF nature of $(A,\mathbf{s})$, it follows that for each $k \in [r-1]$, the following inequality holds:
    \begin{align*}
        & \frac{v_{i_{k}}(A_{i_{k}})+ s_{i_{k}}}{w_{i_{k}}}  \geq \frac{v_{i_{k}}(A_{i_{k+1}}) + s_{i_{k+1}}}{w_{i_{k+1}}}\\
    & \Rightarrow \frac{s_{i_{k}}}{w_{i_{k}}} - \frac{s_{i_{k+1}}}{w_{i_{k+1}}} \geq \frac{v_{i_{k}}(A_{i_{k+1}})}{w_{i_{k+1}}} - \frac{v_{i_{k}}(A_{i_{k}})}{w_{i_{k}}} = \\ &cost_A(i_{k}, i_{k+1}).
    \end{align*}
    Summing this inequality over all $k \in [r-1]$, the following relation is obtained:
    \begin{align*}
   &\frac{s_{i}}{w_{i}} - \frac{s_{i_{r}}}{w_{i_{r}}} = \frac{s_{i_{1}}}{w_{i_{1}}} - \frac{s_{i_{r}}}{w_{i_{r}}}\geq cost_A(i_{1}, ..., i_{r}) = \frac{s_{i}^{*}}{w_i}
   \\& \Rightarrow \frac{s_{i}}{w_{i}} \geq \frac{s_{i}^{*}}{w_i} + \frac{s_{i_{r}}}{w_{i_{r}}} \geq \frac{s_{i}^{*}}{w_i}.
   \end{align*}
    The final transition is valid due to the non-negativity of subsidies and weights, that is, $\frac{s_{i_{r}}}{w_{i_{r}}} \geq 0$. 
    \item The computation of $s^{*}$ can be executed as follows: Initially, the Floyd-Marshall algorithm 
(\citet{weisstein2008floyd, wimmer2017floyd}) is applied to the graph derived by negating all edge costs in $G_{A,w}$ (This has a linear time solution since there are no cycles with positive costs in the graph). Hence, determining the longest path cost between any two agents, accomplished in $O(nm+n^{3})$ time. Subsequently, the longest path starting at each agent is identified in $O(n^{2})$ time.
    \item Assume, for the sake of contradiction, that $s^*_i > 0$ for every agent $i \in N$, which implies that $\text{cost}_{A}(\ell_i(A)) > 0$ since $w_i \geq 1$. Because the number of agents is finite, there must be $k \geq 2$ agents, say $i_1, \dots, i_k$, such that $\ell_{i_1}(A), \dots, \ell_{i_k}(A)$ form a cycle with positive cost in $G_{w,A}$, contradicting \Cref{thm:wefable--iff-no-cycles}.

Therefore, there must be at least one agent whose subsidy is $0$.
\end{enumerate}
\end{proof}
Now we can find the minimum subsidy needed in the worst-case scenario for agents with different entitlements, whether the allocation is given or can be chosen. The proofs extend those of \citet{halpern2019fair}. 
Note that \Cref{worst case allocation is given with weights} applies to any additive valuations, including identical and binary cases.

\begin{theorem} \label{worst case allocation is given with weights} 
For every weight vector
and every given WEF-able allocation $A$, 
letting $s_i := s^*_i = w_i \ell_i(A)$,
the total subsidy $\displaystyle \sum_{i\in N}s_i$
is at most $\left(\frac{W}{w_1} - 1\right)   m   V$, and this bound is tight in the worst case.
\end{theorem}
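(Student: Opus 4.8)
The plan is to bound $s^*_i = w_i\,\ell_i(A)$ for each agent and then sum. Fix an agent $i$, and let $(i = i_1, i_2, \ldots, i_k)$ be a maximum-cost path from $i$ in $G_{A,w}$ realizing $\ell_i(A)$. Since the path is simple (a maximum-cost path in a graph with no positive-cost cycles can be taken to repeat no vertex), we have $k \le n$, so the path has at most $n-1$ edges. The key step is to bound the cost of each edge $cost_A(i_r, i_{r+1}) = \frac{v_{i_r}(A_{i_{r+1}})}{w_{i_{r+1}}} - \frac{v_{i_r}(A_{i_r})}{w_{i_r}}$ from above. Dropping the (non-negative) subtracted term and using $v_{i_r}(A_{i_{r+1}}) \le m V$ together with $w_{i_{r+1}} \ge w_1$, each edge costs at most $\frac{mV}{w_1}$. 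Summing over the at most $n-1$ edges gives $\ell_i(A) \le (n-1)\frac{mV}{w_1}$, hence $s^*_i \le w_i (n-1)\frac{mV}{w_1}$.

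Summing over all agents, $\sum_{i\in N} s^*_i \le (n-1)\frac{mV}{w_1}\sum_{i\in N} w_i = (n-1)\frac{W}{w_1} m V$, which is too weak — it is off by roughly a factor $n$ from the claimed $(\frac{W}{w_1}-1)mV$. The main obstacle is therefore to get the tighter telescoping bound, and the fix is the standard Halpern--Shah trick: instead of bounding each edge separately, bound the whole path cost by telescoping the $-\frac{v_{i_r}(A_{i_r})}{w_{i_r}}$ terms against the $+\frac{v_{i_r}(A_{i_{r+1}})}{w_{i_{r+1}}}$ terms. Concretely, rewrite
\[
\ell_i(A) = \sum_{r=1}^{k-1}\left(\frac{v_{i_r}(A_{i_{r+1}})}{w_{i_{r+1}}} - \frac{v_{i_r}(A_{i_r})}{w_{i_r}}\right)
\le \sum_{r=1}^{k-1}\frac{v_{i_r}(A_{i_{r+1}})}{w_{i_{r+1}}} ,
\]
using $v_{i_r}(A_{i_r}) \ge 0$; but now I must be more careful. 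Actually the right approach is to use the single-agent telescoping done \emph{per agent's own value}: since we want a bound summing to $(\frac{W}{w_1}-1)mV$ and not $(n-1)\frac{W}{w_1}mV$, I should bound $\ell_i(A)$ by noting that along the maximum path, the positive contribution $\frac{v_{i_r}(A_{i_{r+1}})}{w_{i_{r+1}}}$ for $r=1,\dots,k-1$ involves $k-1 \le n-1$ \emph{distinct} bundles $A_{i_2},\dots,A_{i_k}$, and each $v_{i_r}(A_{i_{r+1}}) \le |A_{i_{r+1}}| V \le m V$. Here the crucial refinement is that the bundles $A_{i_2},\dots,A_{i_k}$ are pairwise disjoint subsets of $M$, so $\sum_{r=1}^{k-1}|A_{i_{r+1}}| \le m$; combined with $w_{i_{r+1}} \ge w_1$ this yields $\ell_i(A) \le \frac{V}{w_1}\sum_{r=1}^{k-1}|A_{i_{r+1}}| \le \frac{mV}{w_1}$. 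Then $s^*_i \le \frac{w_i}{w_1} m V$, and summing over $i$ gives $\sum_i s^*_i \le \frac{W}{w_1} m V$.

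To shave off the extra $mV$ and reach exactly $(\frac{W}{w_1}-1)mV$, I invoke part~(\ref{agent with 0 subsidy}) of \Cref{max_path_subsidy}: there is some agent $i_0$ with $s^*_{i_0} = 0$. Dropping that agent from the sum, $\sum_{i\in N} s^*_i = \sum_{i \ne i_0} s^*_i \le \frac{V}{w_1} m \sum_{i\ne i_0} w_i = \frac{m V}{w_1}(W - w_{i_0}) \le \frac{mV}{w_1}(W - w_1)$, where the last inequality uses $w_{i_0} \ge w_1$. This gives the claimed upper bound $\left(\frac{W}{w_1}-1\right)mV$. For tightness, I would exhibit an instance generalizing the worst-case example of \citet{halpern2019fair}: take $w_1 = \cdots$ arranged so that $\frac{W}{w_1}-1$ is attained, with $n$ agents each valuing a common set of $m$ items at value $V$ in a cyclic ``shifted'' pattern and an allocation in which the envy graph has a single long path of cost exactly $\frac{mV}{w_1}$ through the heavy agents; verifying that this allocation is WEF-able (no positive-cost cycle) and that the minimum total subsidy equals $\left(\frac{W}{w_1}-1\right)mV$ is the remaining bookkeeping. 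The main obstacle is the disjointness argument that replaces the naive per-edge bound — without it the factor $\frac{W}{w_1}$ does not appear — and getting the worst-case instance to simultaneously be WEF-able and force every heavy agent's subsidy up to the bound.
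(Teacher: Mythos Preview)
Your upper-bound argument is correct and is exactly the paper's approach: drop the non-negative subtracted terms, use disjointness of the bundles $A_{i_2},\dots,A_{i_k}$ along a simple path to get $\sum_r |A_{i_{r+1}}| \le m$, conclude $\ell_i(A)\le \frac{mV}{w_1}$, then invoke part~(\ref{agent with 0 subsidy}) of \Cref{max_path_subsidy} together with $w_{i_0}\ge w_1$ to drop from $\frac{W}{w_1}mV$ to $\left(\frac{W}{w_1}-1\right)mV$. (The meandering first paragraph with the per-edge bound is unnecessary and could be deleted; only the disjointness paragraph is doing work.)

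The tightness half, however, is a genuine gap. Your proposed ``cyclic shifted pattern'' with a ``single long path through the heavy agents'' is neither specified nor obviously correct: you would need the allocation to be WEF-able \emph{and} to force $s^*_i = \frac{w_i}{w_1}mV$ for every $i\neq 1$ simultaneously, and a cyclic/shifted construction typically spreads envy across many short paths rather than concentrating it. The paper's construction is far simpler and you should use it instead: take $m$ identical items, each valued $V$ by every agent, and give all of them to agent~$1$ (the minimum-weight agent). This allocation is WEF-able (every $2$-cycle through agent~$1$ has cost $\frac{V|A_1|}{w_1}-\frac{V|A_1|}{w_1}=0$, and all other edges have cost $0$), and each agent $i\neq 1$ has $\ell_i(A)=\frac{mV}{w_1}$ via the single edge $(i,1)$, giving $s^*_i=\frac{w_i}{w_1}mV$ and total subsidy exactly $\frac{W-w_1}{w_1}mV$. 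The ``remaining bookkeeping'' you defer is in fact the entire content of the tightness claim, and the right instance is the trivial one, not a long-path construction.
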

\begin{proof}
   The proofs extend those of \cite{halpern2019fair}. 
By \Cref{max_path_subsidy}, to bound the subsidy required for $i$, we bound the highest cost of a path starting at $i$.
We prove that, for every WEF-able allocation $A$ and agent $i$, the highest cost of a path from $i$ in $G_{A,w}$ is at most $\frac{m   V}{w_1}$.

For every path $P$ in $G_{A,w}$,
\begin{align*}
&cost_A(P) =
\sum_{(i,j)\in P} cost_A(i,j)  = \sum_{(i,j)\in P}
\frac{v_{i}(A_{j})}{w_{j}} - \frac{v_{i}(A_{i})}{w_{i}} \\
& \leq \sum_{(i,j)\in P} \frac{v_{i}(A_{j})}{w_1} \leq 
\sum_{(i,j)\in P}\frac{V\cdot |A_{j}|}{w_1}
\leq 
 \frac{m V}{w_1}.  
\end{align*} 
Therefore, the cost of every path is at most $\frac{m V}{w_1}$, so agent $i$ needs a subsidy of at most $w_i \frac{m V}{w_1}$. 
By part (4) of \Cref{max_path_subsidy}, at least one agent has a subsidy of 0. This implies a total subsidy of at most $\frac{W - w_1}{w_1}  m   V 
   = \left(\frac{W}{w_1} - 1\right) m V $.

To establish tightness, consider 
   an instance with $m$ identical items, which all $n$ agents value at $V$.
   Consider the allocation $A$, which assigns all items to a single agent $j$ with the minimum entitlement. It's evident that $A$ is WEF-able, and its optimal subsidy vector $\mathbf{s}$ satisfies $s_j = 0$ and $s_i = \frac{w_i}{w_j}   m   V =  \frac{w_i}{w_{1}}   m   V$ for $i\neq j \in N$. Therefore, we require 
   \[
   \frac{W - w_{1}}{w_1}  m   V 
   = \left(\frac{W}{w_1} - 1\right) m V .
   \]
\end{proof}

In the unweighted case $W/w_1=n$, so the upper bound on the subsidy becomes $(n-1)mV$.
This is the same upper bound proved by 
\citet{halpern2019fair} for the unweighted case and additive valuations.

Halpern and Shah proved in \cite{halpern2019fair} that, if an allocation $A$ is EF-able and EF1, 
then every path in $G_{A,w}$ has cost at most $(n-1)   V$, and therefore a subsidy of at most $(n-1)^2   V$ is sufficient.
The following theorem generalizes this result to the weighted setting.
\begin{theorem}
    \label{WEF-able and WEF(x,y) subsidy bound}
   Let $A$ be both WEF-able and $WEF(x,y)$ for any $x,y \in [0,1]$. Then there exists an envy-eliminating subsidy vector with total subsidy at most $\left(\frac{W}{w_1} - 1 \right) (n-1)V$.
\end{theorem}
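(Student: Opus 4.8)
The plan is to bound the cost of every maximum-cost path in the weighted envy graph $G_{A,w}$ and then plug this into the subsidy characterization. We take the subsidy vector $\mathbf{s}=\mathbf{s}^{*}$ from \Cref{max_path_subsidy}, which is weighted-envy-eliminating by part~(1) of that theorem, and whose total value is $\sum_{i\in N} w_i\cdot cost_A(\ell_i(A))$. By part~(4) of \Cref{max_path_subsidy} there is an agent $i_0$ with $s^{*}_{i_0}=0$, so the total subsidy is really $\sum_{i\neq i_0} w_i\cdot cost_A(\ell_i(A))$. Hence it suffices to prove that every maximum-cost path in $G_{A,w}$ has cost at most $\tfrac{(n-1)V}{w_1}$: then $s^{*}_i\le w_i\tfrac{(n-1)V}{w_1}$, and summing over $i\neq i_0$ yields $\tfrac{(n-1)V}{w_1}\,(W-w_{i_0})\le\tfrac{(n-1)V}{w_1}\,(W-w_1)=\bigl(\tfrac{W}{w_1}-1\bigr)(n-1)V$, exactly as claimed.

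The path bound rests on two ingredients. First, since $A$ is WEF-able, \Cref{thm:wefable--iff-no-cycles} gives that $G_{A,w}$ has no positive-cost cycle, so $\ell_i(A)$ is finite and may be taken to be a \emph{simple} path: any repeated vertex on a maximum-cost walk encloses a cycle of nonpositive cost, which can be excised without decreasing the total cost, so after finitely many excisions we are left with a simple path of at least the same cost, using at most $n-1$ arcs. Second, I bound the cost of a single arc using $WEF(x,y)$: for every ordered pair $i,j$ there is $B\subseteq A_j$ with $|B|\le 1$ and $\tfrac{v_i(A_i)+y\,v_i(B)}{w_i}\ge\tfrac{v_i(A_j)-x\,v_i(B)}{w_j}$, which rearranges to
\[
cost_A(i,j)=\frac{v_i(A_j)}{w_j}-\frac{v_i(A_i)}{w_i}\le v_i(B)\Bigl(\frac{y}{w_i}+\frac{x}{w_j}\Bigr)\le \frac{(x+y)\,V}{w_1},
\]
using $v_i(B)\le V$ (as $|B|\le1$ and $v_i$ is additive with maximum item value $V$) together with $w_i,w_j\ge w_1$. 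In the relevant regime $x+y\le 1$ — which includes $WEF=WEF(0,0)$, $WEF(1,0)$ and $WEF(0,1)$ — this is at most $\tfrac{V}{w_1}$, so a simple path with at most $n-1$ arcs has cost at most $\tfrac{(n-1)V}{w_1}$, finishing the proof.

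The main obstacle is the per-arc estimate: one must unpack the $WEF(x,y)$ inequality correctly and verify that the "up to one item" slack is controlled by the single-item bound $V$ rather than by a whole-bundle value, and keep track of the weight factors $\tfrac{y}{w_i},\tfrac{x}{w_j}$ so that they collapse to $\tfrac{1}{w_1}$. The reduction of the maximum-cost walk to a simple path is routine once \Cref{thm:wefable--iff-no-cycles} is available. As a sanity check, in the unweighted case $W/w_1=n$ and the bound becomes $(n-1)^{2}V$, recovering the Halpern–Shah bound for EF-able and EF1 allocations; and, just as in \Cref{worst case allocation is given with weights}, it is the step of replacing $w_{i_0}$ by the smallest weight $w_1$ that produces the coefficient $\tfrac{W}{w_1}-1$.
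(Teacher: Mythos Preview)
Your argument follows the paper's proof essentially verbatim: bound each arc of $G_{A,w}$ via the $WEF(x,y)$ inequality by $(x+y)\,V/w_1$, use that a maximum-cost path has at most $n-1$ arcs, multiply by $w_i$, and invoke part~(4) of \Cref{max_path_subsidy} to drop one term from the sum. Your justification that the maximum-cost path may be taken simple (excising nonpositive cycles) is a detail the paper leaves implicit, and the sanity check recovering the $(n-1)^2V$ bound in the unweighted case matches the paper's stated motivation.

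One remark: you restrict to $x+y\le 1$ at the last step, whereas the theorem is asserted for arbitrary $x,y\in[0,1]$. The paper's own proof carries the factor $(x+y)$ through the per-agent bound $s_i\le (x+y)\tfrac{w_i}{w_1}(n-1)V$ and then silently drops it in the final line; so your restriction is not a defect of your argument relative to the paper's---both proofs as written yield the stated bound only when $x+y\le 1$ (and $(x+y)\bigl(\tfrac{W}{w_1}-1\bigr)(n-1)V$ in general).
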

\begin{proof}
    $A$ is WEF$(x,y)$, so for all $i,j\in N$, there exists $B\subseteq A_j$ with $|B|\leq 1$ where
    \begin{align*}
        & \frac{v_i(A_j)}{w_j} - \frac{v_i(A_i)}{w_i} \leq \frac{y  v_i(B)}{w_i} + \frac{x  v_i(B)}{w_j} \le \\
        &\frac{y  v_i(B)}{w_1} + \frac{x  v_i(B)}{w_1} = 
         \frac{(x+y)  v_i(B)}{w_1} \le \\
         &(x+y)\frac{V}{w_1}.
    \end{align*}
    Any path contains at most $n-1$ arcs, that is, $$s_i = w_i \ell_i \leq (x+y)\frac{w_i}{w_1}   (n-1)  V.$$
    From part (\ref{agent with 0 subsidy}) of \Cref{max_path_subsidy}, there is at least one agent that requires no subsidy, so the required total subsidy is at most $$\frac{W - w_1}{w_1} (n-1)  V = \left(\frac{W}{w_1} - 1 \right) (n-1)V.$$
\end{proof}

\subsection{Computing a WEF-able allocation}
\label{sub:computing-wefable-allocation}
We start with a lower bound on the required subsidy for general additive valuations.
To prove the lower bound, we need a Lemma.
\begin{lemma}
    \label{lem:single-item}
Suppose there are $n$ agents,
and only one item $o$ which the agents value positively. Then an allocation is WEF-able iff $o$ is given to an agent $i$ with the highest $v_i(o)$.
\end{lemma}
\begin{proof}
By \Cref{thm:wefable--iff-no-cycles}, it is sufficient to check the cycles in the weighted envy-graph.
If $o$ is given to $i$, then $i$'s envy is $- \frac{v_i(o)}{w_i}$,
and the envy of every other agent $j$ in $i$ is $\frac{v_j(o)}{w_i}$.
All other envies are 0.
The only potential positive-weight cycles are cycles of length 2 involving agent $i$.
The weight of such a cycle is positive iff $\frac{v_j(o)}{w_i} - \frac{v_i(o)}{w_i} >0$, which holds iff $v_j(o) > v_i(o)$.
Therefore, there are no positive-weight cycles iff $v_i(o)$ is maximum.
\end{proof}

\begin{theorem}
\label{worst case allocation can be chosen with weights}
For every weight vector $\mathbf{w}$ and any $n\geq 2$,
no algorithm can guarantee a total subsidy smaller than $\left(\frac{W}{w_1}-1\right)  V$.
\end{theorem}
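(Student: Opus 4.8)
The plan is to exhibit, for every weight vector $\mathbf{w}$ and every $n \geq 2$, a family of instances on which any WEF-able allocation (hence any allocation an algorithm could output and then subsidize) forces a total subsidy of at least $\left(\frac{W}{w_1}-1\right)V$. The natural candidate, mirroring the tightness construction in \Cref{worst case allocation is given with weights}, is the instance with a single item $o$ that all $n$ agents value at exactly $V$, together with $n-1$ further items that no agent values (or simply no further items at all — a single valued item suffices since unvalued items can be allocated arbitrarily without affecting envy). By \Cref{lem:single-item}, in any WEF-able allocation the item $o$ must be given to some agent $i$ achieving the maximum value $v_i(o) = V$; here every agent is such a maximizer, so $o$ may go to any agent, but it must go to exactly one of them.

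Next I would compute the optimal (minimum) envy-eliminating subsidy vector for whichever agent $i$ receives $o$, using \Cref{max_path_subsidy}. The weighted envy graph has $\text{cost}_A(j,i) = \frac{v_j(o)}{w_i} - \frac{v_j(A_j)}{w_j} = \frac{V}{w_i}$ for every $j \neq i$, $\text{cost}_A(i,j) = -\frac{V}{w_i}$, and all other arc costs are $0$. The maximum-cost path from agent $j \neq i$ is the single arc $(j,i)$ of cost $\frac{V}{w_i}$ (any continuation from $i$ only decreases the cost, and there are no positive-cost cycles), so $s^*_j = w_j \cdot \frac{V}{w_i} = \frac{w_j}{w_i}V$, while $s^*_i = 0$. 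The total subsidy under $\mathbf{s}^*$ is therefore $\sum_{j\neq i}\frac{w_j}{w_i}V = \frac{W - w_i}{w_i}V = \left(\frac{W}{w_i}-1\right)V$. By part (2) of \Cref{max_path_subsidy}, $\mathbf{s}^*$ is pointwise minimal among all envy-eliminating subsidy vectors, so no smaller total subsidy is possible for this allocation.

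Finally I would take the minimum of this quantity over the possible recipients $i$ — since the algorithm is free to choose who receives $o$, the best it can do is pick the $i$ minimizing $\frac{W-w_i}{w_i}V$, which (as $x \mapsto \frac{W-x}{x}$ is decreasing in $x$) is achieved at the \emph{largest} weight, giving total subsidy $\frac{W - w_n}{w_n}V$. This does not immediately match the claimed bound $\left(\frac{W}{w_1}-1\right)V$, so the single-item instance as stated is too weak and the construction must be strengthened: I would instead use a weight-asymmetric instance, e.g. make the valued item much more valuable to the low-weight agents, or add several valued items so that by \Cref{lem:single-item} each must go to a high-valuing agent and the resulting maximum-cost path in $G_{A,w}$ is forced to start at a low-weight agent and pass through a low-weight recipient, yielding a path cost at least $\frac{mV}{w_1}$-type term. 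Concretely, giving each of $n-1$ ``dummy'' high-weight agents an item they alone value, and leaving the remaining valued item to be absorbed by agent $1$, should force $s^*_1$ or some $s^*_j \propto w_j/w_1$, recovering the $\frac{W}{w_1}-1$ factor. The main obstacle is precisely this: designing the instance so that the adversary cannot dodge the bound by routing the expensive bundle to a high-weight agent — i.e., making the low weight $w_1$ appear in a \emph{denominator} that the algorithm cannot avoid, rather than in the recipient's weight which it would happily maximize.
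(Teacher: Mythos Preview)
You correctly diagnose that the symmetric single-item instance (everyone values $o$ at $V$) is too weak: the algorithm can hand $o$ to the heaviest agent and pay only $(W/w_n-1)V$. You even name the right repair in passing --- ``make the valued item much more valuable to the low-weight agents'' --- but then you abandon it in favour of a multi-item construction that does not deliver the bound.

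Your concrete proposal (give each agent $i\geq 2$ a private item only $i$ values, and leave one commonly-valued item for agent~1) does not force the claimed subsidy. If the common item is valued $V$ by everyone, the allocation $A_i=\{o_i\}$ is WEF-able, and the optimal subsidies work out to $s_i = w_i V/w_1 - V$ for $i\geq 2$ and $s_1=0$, giving total $(W/w_1 - n)V$, strictly below $(W/w_1-1)V$. Worse, the algorithm is not even obliged to pick that allocation: it can route $o_1$ to a heavy agent and pay still less. The private items absorb exactly the envy you were trying to create.

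The paper's construction is the simple one you mentioned and then dropped: a single item $o$ with $v_1(o)=V$ and $v_i(o)=V-\epsilon$ for $i\geq 2$. By \Cref{lem:single-item} the only WEF-able allocation gives $o$ to agent~1, since agent~1 is the unique maximizer of $v_i(o)$. The algorithm has no freedom. Then $s_1^*=0$ and $s_i^* = w_i\cdot\frac{V-\epsilon}{w_1}$ for $i\geq 2$, so the total subsidy is $\frac{W-w_1}{w_1}(V-\epsilon)$, and letting $\epsilon\to 0$ gives $(W/w_1-1)V$. The whole point is to remove the algorithm's choice of recipient by making the lightest agent the strict top valuer; that is what pins $w_1$ in the denominator.
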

\begin{proof}
Consider an instance with $n$ agents with weights $w_{1} = 1$ and $w_{i} > 1$ for $i\ge 2$.
There is one item $o$ with valuations $v_1(o) = V$ and $v_i(o) = V-\epsilon = V_{2nd}$ for $i\ge 2$, and the other items are worth $0$ to all agents.

By \Cref{lem:single-item},
the only WEF-able allocation is to give $o$ to agent $1$.
In this case, the minimum subsidy is $s_1 = 0$ and $s_i = w_i   \frac{v_i(o)}{w_1}= w_i  V_{2nd}$.
Summing all subsidies leads to $\sum_{i\geq 2} w_i  V_{2nd} = (W-1)  V_{2nd}$.
As $\epsilon$ can be arbitrarily small, we get a lower bound of $(W-1)  V$.  
\end{proof}

In \Cref{example_intro}, we showed that the iterated-maximum-matching algorithm \cite{brustle2020one} might produce an allocation that is not WEF-able.

We now introduce a new algorithm, \Cref{alg:general-additive}, which extends the iterated-maximum-matching approach to the weighted setting, assuming all weights are integers. The algorithm finds a one-to-many maximum matching between agents and items, ensuring that each agent $i\in N$ receives exactly $w_i$ items.
If the number of items remaining in a round is less than $W$, we add dummy items (valued at $0$ by all agents) so that the total number of items becomes $W$.

In \Cref{example_intro}, we add 9 dummy items, 
and perform a one-to-many maximum-value matching between agent and items, resulting in a WEF-able allocation: $A_1 = \emptyset, A_2 = \{o_1, o_2\}$.

The algorithm runs in $\lceil m/W \rceil$ rounds.
In each round $t$, the algorithm computes a one-to-many maximum-value matching $\{ A_i^t \}_{i\in N}$ between all agents and unallocated items $O_t$, where each agent $i\in N$ receives exactly $w_i$ items. 

To achieve this, we reduce the problem to the \emph{minimum-cost network flow problem} (\citet{goldberg1989network}) by constructing a flow network and computing the maximum integral flow of minimum cost. The flow network is defined as follows:
\begin{itemize}
    \item \textbf{Layer 1 (Source Node).} a single source node $s$.
    \item \textbf{Layer 2 (Agents).} a node for each agent $i\in N$, with an arc from $s$ to $i$, having cost $0$ and capacity $w_i$.
    \item \textbf{Layer 3 (Unallocated Items).} a node for each unallocated item $o \in O_t$, with an arc from each agent $i\in N$ to item $o$, having cost $-v_i(o)$ and capacity $1$.
    \item \textbf{Layer 4 (Sink Node).} a single sink node $t$, with an arc from each item $o\in O_t$ to $t$, having 0 cost and capacity $1$.
\end{itemize}
Any integral maximum flow in this network corresponds to a valid matching where each agent $i\in N$ receives exactly $w_i$ items from $O_t$, and each item is assigned to exactly one agent, the result is a minimum-cost one-to-many matching based on the costs in the constructed network. Because we negate the original costs in our construction, the obtained matching $\{A_i^t\}_{i\in N}$ maximizes the total value with respect to the original valuations.
After at most $\lceil m/W \rceil$ valuations, all items are allocated.

\begin{algorithm}[t]
\caption{
\label{alg:general-additive}
Weighted Sequence Protocol For Additive Valuations and Integer weights}
\begin{algorithmic}[1]
\Require Instance $(N,M,v, \mathbf{w})$ with additive valuations.
\Ensure WEF-able allocation $A$ with total required subsidy of at most $(W-w_1)  V$.
\State $A_{i} \gets \emptyset, \forall i\in N$
\State $t \gets 1$; $O_1 \gets M$
\While{$O_t \neq \emptyset$}
        Construct the flow network $G'=(V',E')$:
        \begin{itemize}
            \item define $V'= N \cup O_t \cup \{s,t\}$.
            \item Add arcs with the following properties:
            \begin{itemize}
                \item From $s$ to each agent $i\in N$ with cost $0$ and capacity $w_i$.
                \item From each agent $i\in N$ to each unallocated item $o \in O_t$, with cost $-v_i(o)$ and capacity 1.
                \item From each unallocated item $o\in O_t$ to $t$ with cost $0$ and capacity $1$.
            \end{itemize}
        \end{itemize}
        Compute an \emph{integral maximum flow of minimum cost} on $G'$, resulting in the one-to-many matching $\{A_i^t\}_{i\in N}$\;
        \State Set $O_{t+1} \gets O_t \backslash \cup_{i\in N}A_i^t$
        \State $t \gets t + 1$
        \EndWhile
      \State return $A$
\end{algorithmic}
\end{algorithm}

\begin{proposition} \label{general additive: WEF-able}
     For each round $t$ in  \Cref{alg:general-additive}, $A^t$ is WEF-able.
\end{proposition}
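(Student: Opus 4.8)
The plan is to use the characterization from \Cref{thm:wefable--iff-no-cycles}: it suffices to show that the weighted envy graph $G_{A^t,w}$ has no positive-cost cycle, where $A^t = \{A_i^t\}_{i\in N}$ is the one-to-many maximum-value matching computed in round $t$ (here each agent receives exactly $w_i$ items, possibly including dummy items of value $0$). So I would fix an arbitrary cycle $(i_1, i_2, \ldots, i_k, i_1)$ of agents and bound its total cost from above by $0$.

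First I would write out the cost of the cycle explicitly:
\[
\sum_{r=1}^{k} cost_{A^t}(i_r, i_{r+1}) = \sum_{r=1}^{k}\left(\frac{v_{i_r}(A^t_{i_{r+1}})}{w_{i_{r+1}}} - \frac{v_{i_r}(A^t_{i_r})}{w_{i_r}}\right),
\]
with indices mod $k$. The key observation is that, because $A^t$ is a maximum-value matching subject to the constraint that agent $i$ gets exactly $w_i$ items, any "cyclic reallocation" that still respects these cardinality constraints cannot increase the total value. Concretely, I would consider the alternative matching $B$ obtained from $A^t$ by shifting bundles along the cycle: give agent $i_r$ the bundle $A^t_{i_{r+1}}$ for each $r$, and leave all other agents untouched. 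Since $|A^t_{i_{r+1}}| = w_{i_{r+1}}$, this $B$ does not respect the per-agent cardinality constraints in general — so a direct swap argument fails, and this is where the weights genuinely complicate matters.

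The hard part — and the main obstacle — is precisely that, unlike the unweighted iterated-matching argument of \citet{brustle2020one} where all agents get the same number of items so bundles can be freely permuted, here different agents get different numbers of items, so a cyclic rotation of whole bundles is not a legal rearrangement. The fix I would pursue is to argue at the level of the underlying minimum-cost flow / assignment rather than whole bundles: think of $A^t$ as an assignment of $W$ item-slots (with $w_i$ slots belonging to agent $i$), which is an optimal solution to the assignment LP. A positive-cost cycle in $G_{A^t,w}$ would have to be realizable as an augmenting structure in the residual network that strictly improves the objective, contradicting optimality. More precisely, I would show that from a positive-cost cycle in the weighted envy graph one can extract a set of item reassignments that preserves every agent's slot count (by moving, for each arc $(i_r,i_{r+1})$, an appropriately chosen single item or fractional bundle — matching the extra items an agent gains against the items it loses along the cycle), yielding a new matching with the same cardinalities but strictly larger total value, contradicting that $A^t$ is a minimum-cost (hence maximum-value) matching. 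Once the contradiction is set up, concluding is immediate: $G_{A^t,w}$ has no positive-cost cycle, so by \Cref{thm:wefable--iff-no-cycles}, $A^t$ is WEF-able.

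I would double-check the one subtlety in the extraction step: dummy items of value $0$ may participate, but since they contribute $0$ to every agent's valuation, including or excluding them does not affect the value-comparison, so the argument goes through unchanged. If a cleaner route is available, an alternative is to prove directly that $\sum_r \frac{v_{i_r}(A^t_{i_{r+1}})}{w_{i_{r+1}}} \le \sum_r \frac{v_{i_r}(A^t_{i_r})}{w_{i_r}}$ by summing, over the $w_{i_{r+1}}$ items in each $A^t_{i_{r+1}}$, the per-item optimality inequalities that hold because swapping any single item between two agents in an optimal assignment cannot increase value; but the flow-optimality contradiction seems the most robust way to handle the unequal cardinalities, so that is the step I expect to spend the most care on.
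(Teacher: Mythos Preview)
Your high-level strategy matches the paper's: use \Cref{thm:wefable--iff-no-cycles} and show that any cycle in $G_{A^t,w}$ has non-positive cost by appealing to the optimality of $A^t$ among all allocations giving each agent $i$ exactly $w_i$ items. You also correctly identify the obstacle (whole-bundle rotation violates the cardinality constraints) and land on the right structural fix: move a \emph{single} item along each arc of the cycle, so that every agent on the cycle gains exactly one item and loses exactly one, and the cardinalities are preserved.

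The piece your sketch does not supply --- and this is the one non-obvious step --- is the link between the value change of such a single-item cyclic shift (which carries no $1/w_i$ factors) and the weighted cycle cost (which does). The paper closes this gap with an averaging argument: for each arc $(i_j,i_{j+1})$, pick the transferred item uniformly at random from $A^t_{i_{j+1}}$. Then the expected value agent $i_j$ gains is $v_{i_j}(A^t_{i_{j+1}})/|A^t_{i_{j+1}}| = v_{i_j}(A^t_{i_{j+1}})/w_{i_{j+1}}$, and the expected value $i_j$ loses (to $i_{j-1}$) is $v_{i_j}(A^t_{i_j})/w_{i_j}$; summing over the cycle, the expected total value change equals the cycle cost \emph{exactly}. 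Optimality of $A^t$ forces every realization --- hence the expectation --- to have non-positive value change, so the cycle cost is at most $0$. Your second alternative (summing per-item optimality inequalities over all items in each bundle) is the deterministic counterpart of this same averaging, but it must use the full cyclic single-item transfer rather than pairwise swaps: a two-agent item swap does not by itself certify optimality along a longer cycle. The flow-residual route would ultimately have to rediscover the same $1/w$ normalization, and the averaging trick is the cleanest way through.
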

\begin{proof}
    We prove that, in every round $t$, the total cost added to any directed cycle in the weighted-envy graph is non-positive. Combined with \Cref{thm:wefable--iff-no-cycles}, this shows that $A^t$ is WEF-able for every round $t \in T$.
    
    Let $A^t$ the allocation computed by \Cref{alg:general-additive} at iteration $t$. Note that \Cref{alg:general-additive} is deterministic. 
    Let $C$ be any directed cycle in $G_{A^t,w}$, and denote $C = (i_1,..., i_r)$. To simplify notation, we consider $i_1$ as $i_{r+1}$. 

    Given the allocation $A^t$ and the cycle $C$, we construct a random alternative allocation $B^t$ as follows: 
    for each agent $i_j \in C$, we choose one item $o_{i_{j+1}}^t$ uniformly from $i_{j+1}$'s bundle and transfer it to $i_{j}$'s bundle \footnote{Recall that at each iteration, each agent $i_j$ receives exactly $w_{i_j}$ items.}.
    The expected value of $v_{i_j}(o_{i_{j+1}}^t)$, the value of the item removed from $i_j$'s bundle, can be computed as the average value of all items in $A^t_{i_j}$: $$\frac{\sum_{o\in A^t_{i_j}} v_{i_j}(o)}{w_{i_j}} = \frac{v_{i_j}(A^t_{i_j})}{w_{i_j}}.$$ Similarly, the expected value of $v_{i_j}(o_{i_j}^t)$, the value of the item added the $i_j$'s bundle, is $\frac{v_{i_j}(A^t_{i_{j+1}})}{w_{i_{j+1}}}$.
    Thus, the expected change in value between $B^t$ and $A^t$ is
    \begin{align*}
        & \mathbb{E}\left[\sum_{i\in N} \left( v_i(B_i^t) - v_i(A^t_i) \right)\right] = 
        \sum_{i_j \in C} \left( \mathbb{E}\left[v_{i_j}(o_{i_{j+1}}^t)\right] - \mathbb{E}\left[v_{i_j}(o_{i_j}^t)\right] \right) = \\
        &
        \sum_{i_j\in C} \frac{v_{i_j}(A^t_{i_{j+1}})}{w_{i_{j+1}}} - \frac{v_{i_j}(A^t_{i_j})}{w_{i_j}}
    .
    \end{align*} 
    This is exactly the total cost of cycle $C$.
    According to \Cref{alg:general-additive}, $A^t$ maximizes the total value among all allocations in which each agent $i$ receives exactly $w_i$ items. Therefore, the left-hand side of the above expression, which is the difference between the sum of values in $B^t$ and the sum of values in $A^t$, must be at most $0$. But the right-hand side of the same expression is exactly the total cost of $C$. Therefore,
    $$ 0 \geq \mathbb{E}\left[\sum_{i\in N} \left( v_i(B^{t}_i) - v_i(A^{t}_i) \right)\right] = cost_{A^{t}}(C),$$
    so the cost of every directed cycle is at most $0$, as required.
\end{proof}
As the allocation in each iteration is WEF-able, the output allocation $A$ is WEF-able too.
To compute an upper bound on the subsidy, we adapt the proof technique in \cite{brustle2020one}. Let $A^t$ be the output allocation from \Cref{alg:general-additive}, computed in iteration $t$.
For each $i\in N$, we define the modified valuation function as follow:
\[
\Bar{v}_i(A_j^t) = \begin{cases}
    \frac{v_i(A_i^t)}{w_i} & j = i \\
    \frac{v_i(A_j^T)}{w_j} & j\neq i, t = T \\
    \max{\left(\frac{v_i(A_j^t)}{w_j}, \frac{v_i(A_i^{t+1})}{w_i}\right)} & j\neq i, t < T
\end{cases}
\]
Under the modified valuations, for any two agents $i,j \in N$, the modified-cost assigned to the edge $(i,j)$ in the envy graph (with unit weights) is defined as $\overline{cost}_{A}(i,j) = \Bar{v}_i(A_i) - \Bar{v}_i(A_j)$. Moreover, the modified-cost of a path $(i_1,...,i_k)$ is $\overline{cost}_{A}(i_1,...,i_k) = \sum_{j=1}^{k-1} \overline{cost}_{A}(i_{j}, i_{j+1})$.

\begin{lemma} \label{lemm:minimum cost subsidy}
    Let $A$ be a WEF-able allocation. 
    For any positive number $z$,
    if $cost_A(i,k) \geq -z$ for every edge $(i,k)$ in $G_{A,w}$, then the maximum subsidy required is at most $w_i z$ per agent $i \in N$.
\end{lemma}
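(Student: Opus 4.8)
The plan is to use the max-cost-path characterization of the optimal subsidy from \Cref{max_path_subsidy}, namely $s_i^* = w_i \cdot cost_A(\ell_i(A))$, and simply bound the length of the longest path out of each vertex. First I would recall that since $A$ is WEF-able, \Cref{thm:wefable--iff-no-cycles} guarantees $G_{A,w}$ has no positive-cost cycle, so every maximum-cost simple path is well-defined, and moreover the maximum over \emph{all} paths is attained by a simple path (a path visiting each vertex at most once), since inserting a cycle cannot increase the cost. Hence $\ell_i(A)$ is the cost of a simple path, which has at most $n-1$ arcs.

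Next I would bound $cost_A(\ell_i(A))$ from above. Let $P = (i = i_1, i_2, \ldots, i_r)$ be a maximum-cost path starting at $i$, with $r - 1 \le n - 1$ arcs. By hypothesis $cost_A(i_j, i_{j+1}) \ge -z$ for each arc; but this is a \emph{lower} bound on each arc, so by itself it does not bound the path cost from above. The right way is to bound the path cost by relating it to the \emph{reverse} direction: for any simple path $P$ from $i$ to $i_r$, consider that the total cost around the cycle $(i_1, \ldots, i_r, i_1)$ is at most $0$ (no positive cycles), so $cost_A(P) \le - cost_A(i_r, i_1) \le z$ — wait, that only uses one extra arc. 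More carefully: $cost_A(i_1,\ldots,i_r) = - cost_A(i_r, i_{r-1}, \ldots, i_1) + (\text{cost of the 2-cycles}) $ is not quite an identity. The clean argument is this: since there are no positive cycles, for the closed walk $i_1 \to i_2 \to \cdots \to i_r \to i_1$ we get $cost_A(i_1, \ldots, i_r) + cost_A(i_r, i_1) \le 0$, hence $cost_A(\ell_i(A)) = cost_A(i_1,\ldots,i_r) \le -cost_A(i_r,i_1) \le z$, using the hypothesis $cost_A(i_r, i_1) \ge -z$ on the single closing arc. Therefore $s_i^* = w_i \cdot cost_A(\ell_i(A)) \le w_i z$, and by part (2) of \Cref{max_path_subsidy}, $s_i^*$ is the minimum envy-eliminating subsidy for agent $i$, so the maximum subsidy required per agent $i$ is at most $w_i z$, as claimed.

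The main obstacle — and the step to get right — is the direction of the inequality: the hypothesis gives a \emph{lower} bound $-z$ on each edge cost, yet we need an \emph{upper} bound on a path cost, so the naive "$r-1$ arcs each $\le$ something" bound does not apply directly. The trick is that the absence of positive-cost cycles lets us trade the whole forward path against a single backward closing arc, converting the per-edge lower bound into the desired per-agent upper bound $w_i z$ (with no factor of $n-1$). I would also note the degenerate case where the max-cost path from $i$ is trivial (cost $0$, attained by the empty path from $i$ to $i$), in which case $s_i^* = 0 \le w_i z$ trivially since $z > 0$ and $w_i > 0$.
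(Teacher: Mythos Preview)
Your proposal is correct and follows essentially the same argument as the paper: take the highest-cost path from $i$, close it with the single arc back to $i$, use the no-positive-cycle property (\Cref{thm:wefable--iff-no-cycles}) to get $\ell_i(A) \le -cost_A(i_r,i) \le z$, and multiply by $w_i$. Your explicit remark about why the naive per-arc bound fails (the hypothesis is a \emph{lower} bound on each arc) and your handling of the trivial-path case are nice additions but do not change the core argument.
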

\begin{proof} 
    Assume $P_i(A) = (i \ldots j)$ is the highest-cost path from $i$ in $G_{A,w}$. Note that $\ell_i(A) = cost_A(P_i(A))$.
    
    Then it holds for the cycle $C = (i \ldots j)$ that \[cost_A(C) = \ell_i(A) + cost_A(j,i) .\]
    
    By \Cref{thm:wefable--iff-no-cycles}, $cost_A(C) \leq 0$, thus, \[\ell_i(A) \leq -cost_A(j,i) \leq z.\]

    Therefore, $s_i = w_i   \ell_i(A) \leq w_i z$.
\end{proof}

We use \Cref{lemm:minimum cost subsidy} with a modified valuation function.

We prove that an allocation that is WEF-able for the original valuations is also WEF-able for the modified valuations (\Cref{modified function wef-able}), and that the maximum subsidy required by each agent for the original valuations is bounded by the subsidy required for the modified valuations (\Cref{max sub modified is max sub original}).

Next, we demonstrate that under the modified valuations, the cost of each edge is at least $-V$. Finally, by \Cref{lemm:minimum cost subsidy}, we conclude that the maximum subsidy required for any agent $i\in N$ is $w_i   V$ for the modified valuation $\Bar{v}$ (\Cref{max subsidy for modified valuations}) and for the original valuations $v$ as well.

 \begin{observation} \label{observation:modified func}
For agent $i \in N$ and round $t\in [T]$ it holds that:
    \begin{enumerate}
        \item $
        \Bar{v}_i(A_i^t)
        =
        \frac{v_i(A_i^t)}{w_i}$.
        \item For agent $j\neq i \in N$, $
        \Bar{v}_i(A_j^t)\geq \frac{v_i(A_j^t)}{w_j}$;
        hence $\Bar{v}_i(A_j^t) - \Bar{v}_i(A_i^t)\geq \frac{v_i(A_j^t)}{w_j}-\frac{v_i(A_i^t)}{w_i}$.
    \end{enumerate}
\end{observation}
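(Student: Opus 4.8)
\textbf{Proof plan for Observation~\ref{observation:modified func}.}
The plan is to unwind the definition of $\Bar{v}_i$ case by case. Part~(1) is immediate: in the definition of $\Bar{v}_i(A_j^t)$, the only branch that applies when $j=i$ is the first one, which literally sets $\Bar{v}_i(A_i^t) = \frac{v_i(A_i^t)}{w_i}$. So there is nothing to prove beyond reading off the definition.

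For part~(2), I would split on whether $t=T$ or $t<T$. If $t=T$, the middle branch applies and gives $\Bar{v}_i(A_j^T) = \frac{v_i(A_j^T)}{w_j}$, so the inequality $\Bar{v}_i(A_j^t) \geq \frac{v_i(A_j^t)}{w_j}$ holds with equality. If $t<T$, the third branch applies and gives $\Bar{v}_i(A_j^t) = \max\!\left(\frac{v_i(A_j^t)}{w_j},\, \frac{v_i(A_i^{t+1})}{w_i}\right) \geq \frac{v_i(A_j^t)}{w_j}$, since a maximum of two quantities is at least the first of them. In both cases $\Bar{v}_i(A_j^t) \geq \frac{v_i(A_j^t)}{w_j}$. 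The ``hence'' clause then follows by combining this with part~(1): subtract $\Bar{v}_i(A_i^t) = \frac{v_i(A_i^t)}{w_i}$ from both sides of $\Bar{v}_i(A_j^t) \geq \frac{v_i(A_j^t)}{w_j}$.

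There is essentially no obstacle here — the statement is purely definitional bookkeeping, and the only thing to be careful about is making sure the three branches of the piecewise definition are invoked in the right regime ($j=i$ versus $j\neq i$, and $t=T$ versus $t<T$), and noting that the non-negativity of values is not even needed since $\max$ dominates its first argument regardless of sign. The observation is then used downstream: part~(2) lets one transfer lower bounds on edge costs from the original valuations to the modified ones (feeding into \Cref{modified function wef-able} and \Cref{max sub modified is max sub original}), which is why it is isolated as a standalone observation.
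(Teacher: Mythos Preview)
Your proposal is correct. The paper states this as an \emph{observation} without giving any proof, since both parts follow immediately by reading off the piecewise definition of $\Bar{v}_i$; your case-by-case unwinding (the $j=i$ branch for part~(1), and the $t=T$ versus $t<T$ split together with $\max(a,b)\geq a$ for part~(2)) is exactly the intended verification.
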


\begin{proposition} \label{modified function wef-able}
    Assume $A$ is WEF-able under the original valuations $v$. Then, $A$ is EF-able (i.e., WEF-able with unit weights) under the modified valuations $\Bar{v}$. 
\end{proposition}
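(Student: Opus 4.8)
The plan is to argue through the cycle characterization of \Cref{thm:wefable--iff-no-cycles}. Since the modified profile $\bar v$ carries unit weights, $A$ is EF-able under $\bar v$ precisely when the ordinary envy graph of $A$ under $\bar v$ has no positive-cost cycle. So I fix an arbitrary directed cycle $C=(i_1,\dots,i_r)$, set $i_{r+1}:=i_1$, and aim to show $\sum_{k=1}^{r}\overline{cost}_A(i_k,i_{k+1})\le 0$.

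The first step is to expand this cycle cost. Interpreting $\bar v_i$ additively across rounds (as the definition of $\overline{cost}_A$ on full bundles requires), $\bar v_i(A_j)=\sum_{t=1}^{T}\bar v_i(A_j^t)$, and by \Cref{observation:modified func}(1) we have $\bar v_{i_k}(A_{i_k}^t)=v_{i_k}(A_{i_k}^t)/w_{i_k}$. Hence the cycle cost equals $\sum_{k}\sum_{t=1}^{T}\bigl(\bar v_{i_k}(A_{i_{k+1}}^t)-v_{i_k}(A_{i_k}^t)/w_{i_k}\bigr)$. For $t<T$ each off-diagonal term is the maximum of a ``cross'' value $v_{i_k}(A_{i_{k+1}}^t)/w_{i_{k+1}}$ and a ``carry'' value $v_{i_k}(A_{i_k}^{t+1})/w_{i_k}$. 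I would split each such maximum according to which branch is active, and telescope the own-bundle terms $v_{i_k}(A_{i_k}^t)/w_{i_k}$ over $t$, rewriting the cycle cost as a sum over pairs $(k,t)$ of: a cross contribution, equal to the round-$t$ weighted envy $cost_{A^t}(i_k,i_{k+1})$ whenever the cross branch is active; a carry contribution $v_{i_k}(A_{i_k}^{t+1})/w_{i_k}-v_{i_k}(A_{i_k}^t)/w_{i_k}$ whenever the carry branch is active; plus a wrap-around remainder along each cycle agent linking round $T$ back to round $1$.

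Next I would feed in two facts. (i) By \Cref{general additive: WEF-able} each per-round allocation $A^t$ is WEF-able, so by \Cref{thm:wefable--iff-no-cycles} the graph $G_{A^t,w}$ has no positive-cost cycle; in particular $\sum_{k=1}^{r}cost_{A^t}(i_k,i_{k+1})\le 0$ for every round $t$, which is what controls the cross contributions. (ii) Because \Cref{alg:general-additive} computes $A^t$ as a maximum-value one-to-many matching on the unallocated items $O_t$, every item of $A_i^t$ is at least as valuable to $i$ as every item of $O_{t+1}$ (otherwise swapping a more valuable unallocated item into $i$'s round-$t$ bundle would strictly increase the matching value). Since an average lies between the minimum and the maximum, and $A_i^{t+1}\subseteq O_{t+1}$, this yields $v_i(A_i^{t+1})/w_i\le v_i(A_i^t)/w_i$ for every $i$ --- the own per-entitlement value is non-increasing over rounds --- and more generally $v_i(A_j^{t'})/w_j\le v_i(A_i^t)/w_i$ for every $j$ and every $t'>t$. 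Fact (ii) makes every carry contribution non-positive and the wrap-around remainder non-positive.

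The main obstacle is the final combination step: one must track precisely, for each agent $i_k$ and round $t$, which branch of the maximum is taken; confirm that the carry branches telescope to non-positive quantities via the monotonicity of (ii); and confirm that the cross branches, summed around the cycle, are fully absorbed by the per-round cycle inequalities of (i), without double-spending a per-round weighted envy. This is the weighted, multi-item analogue of the randomized-rotation argument used in the proof of \Cref{general additive: WEF-able}, and the careful bookkeeping is where the work lies. Once it is done, $\sum_{k}\overline{cost}_A(i_k,i_{k+1})\le 0$ for every cycle $C$, so by \Cref{thm:wefable--iff-no-cycles} with unit weights, $A$ is EF-able under $\bar v$.
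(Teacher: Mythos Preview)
Your high-level setup is right: reduce to the cycle criterion and work round by round. But the concrete plan you describe---handle ``cross'' edges via fact (i) (the per-round cycle inequality $\sum_k cost_{A^t}(i_k,i_{k+1})\le 0$) and ``carry'' edges via fact (ii) (monotonicity)---does not close. At a fixed round $t$, the edges of $C$ for which the cross branch of the max is active form only a subset $S_t\subseteq\{1,\dots,r\}$, so the cross contribution is the partial sum $\sum_{k\in S_t}cost_{A^t}(i_k,i_{k+1})$. The WEF-ability of $A^t$ (equivalently \Cref{thm:wefable--iff-no-cycles}) controls only the full cycle sum, not such partial sums, which can be strictly positive. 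The carry contributions being $\le 0$ does not compensate, and the inequality $\text{carry}\ge\text{cross}$ (valid because carry realizes the max) goes in the wrong direction for an upper bound. So ``feeding in (i) and (ii) and bookkeeping'' is not enough; a genuinely new argument is required for the mixed case, not just accounting.

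The paper's proof supplies exactly that missing idea, and it does not go through the cycle inequality. Working one round $t$ at a time, it assumes for contradiction that the round-$t$ modified cost of $C$ is positive, then decomposes $C$ into edge-disjoint paths, each of which consists of a (possibly empty) run of cross edges followed by a single terminal carry edge. Some such path $P=(i_1,\dots,i_k)$ has positive cost. On that path the paper builds a random alternative to $A^t$: rotate one item along the cross prefix as in \Cref{general additive: WEF-able}, but at the terminal carry edge, instead of taking an item from $A_{i_k}^t$, take a uniformly random item from $A_{i_{k-1}}^{t+1}\subseteq O_{t+1}$ and give it to $i_{k-1}$. The expected gain of this feasible alternative equals precisely the cost of $P$, hence is positive, contradicting the optimality of the round-$t$ matching. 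The crucial point is that the contradiction comes from the \emph{matching optimality} of $A^t$ directly, not from the no-positive-cycle property; your plan invokes the latter where the former is what is needed.
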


\begin{proof} 
By \Cref{thm:wefable--iff-no-cycles}, it is sufficient to prove that all directed cycles in the envy graph (with the modified valuations and unit weights) 
have non-positive total cost.
We prove a stronger claim: in every round $t$, 
the total modified-cost added to every directed cycle $C$ is non-positive. 
 Let $A^t$ be the allocation computed by \Cref{alg:general-additive} at iteration $t$. 
Suppose, contrary to our assumption, that there exists a cycle $C=(i_1,...,i_r)$ and a round $t$
in which the modified-cost added to $C$ is positive. To simplify notation, we consider $i_1$ as $i_{r+1}$. 
This implies that
\begin{align} \label{contradiction: modified valuation}
    \sum_{j=1}^r \Bar{v}_{i_j}(A_{i_{j+1}}^t) > \sum_{j=1}^r \Bar{v}_{i_j}(A_{i_j}^t).
\end{align}
There are several cases to consider.

\underline{\bf Case 1:}  All arcs $i_j\to i_{j+1}$ in $C$ have $$\Bar{v}_{i_j}(A_{i_{j+1}}^t) = \frac{v_{i_j}(A_{i_{j+1}}^{t})}{w_{i_{j+1}}}$$
    (in particular, this holds for $t = T$).
    In this case, inequality \eqref{contradiction: modified valuation} implies 
    $$\overline{cost}_{A^t}(C) = \sum_{j=1}^r \frac{v_{i_j}(A_{i_j}^{t})}{w_{i_{j+1}}} - \frac{v_{i_j}(A_{i_j}^{t})}{w_{i_j}} > 0.$$ Combined with \Cref{thm:wefable--iff-no-cycles}, this contradicts \Cref{general additive: WEF-able}, which states that $A^t$ is WEF-able.

\underline{\bf Case 2:}    
    All arcs $i_j\to i_{j+1}$ in $C$ have $$\Bar{v}_{i_j}(A_{i_{j+1}}^t) = \frac{v_{i_j}(A_{i_j}^{t+1})}{w_{i_j}}.$$
    In this case, inequality 
    \eqref{contradiction: modified valuation} implies 
    $$\sum_{j=1}^r \frac{v_{i_j}(A_{i_j}^{t+1})}{w_{i_j}} > \sum_{j=1}^r \frac{v_{i_j}(A_{i_j}^{t})}{w_{i_j}}.$$ 
    Notice that all the items in $A_{j_{1}}^{t+1},..., A_{j_{r}}^{t+1}$ are available at iteration $t$, which contradicts the optimality of $\{A_   i^t\}$.
    
\underline{\bf Case 3:} 
Some arcs $i_j\to i_{j+1}$ in $C$ satisfy Case 1 and the other arcs satisfy Case 2.
Let $l \geq 1$ be the number of arcs in $C$ that satisfy Case 2.
We decompose $C$ into a sequence of $l$ edge-disjoint paths, denoted $P_1, ... , P_l$, such that the last node of each path is the first node of the next path,
and in each path, only the last edge satisfies Case 2.
Formally, suppose that some path contains $k\geq 1$ agents, denoted as $i_1, ..., i_{k}$, and $k-1$ arcs.
Then  for each $1\leq j \leq k - 2$, $\Bar{v}_{i_{j}}\left(A_{i_{j+1}}^t\right) = \frac{v_{i_{j}}\left(A_{i_{j+1}}^t\right)}{w_{i_{j+1}}}$ and $\Bar{v}_{i_{k-1}}\left(A_{i_{k}}^t\right) = \frac{v_{i_{k-1}}\left(A_{i_{k-1}}^{t+1}\right)}{w_{i_{k-1}}}$. 
    Since $\overline{cost}_{A^t}(C) > 0$, there exits a path $P = (i_1,..., i_k)$ where $\overline{cost}_{A^t}(P) > 0$, which implies that: \\
    \resizebox{\textwidth}{!}{$
        0 < \sum_{j = 1}^{k-1} \left( \Bar{v}_{i_{j}}(A_{i_{j+1}}^t) - \Bar{v}_{i_{j}}(A_{i_{j}}^t) \right) =
    \sum_{j=1}^{k-2} \left(\frac{v_{i_{j}}(A_{i_{j+1}}^t)}{w_{i_{j+1}}} - \frac{v_{i_{j}}(A_{i_{j}}^t)}{w_{i_{j}}} \right) + \frac{v_{i_{k-1}}(A_{i_{k-1}}^{t+1})}{w_{i_{k-1}}} - \frac{v_{i_{k-1}}(A_{i_{k-1}}^{t})}{w_{i_{k-1}}}$.}\\
    The rest of the proof is similar to the proof of \Cref{general additive: WEF-able}.
    We construct another allocation $B^t$ randomly as follows: 
    \begin{enumerate}
        \item For each agent $1\leq j \leq k-1$, we choose one item $o_{i_{j+1}}^t$ uniformly from $i_{j+1}$'s bundle and transfer it to $i_{j}$'s bundle.
        \item We choose one item $o_{i_{1}}^t$ uniformly from $i_1$'s bundle to remove.
        \item We choose one item $o_{i_{k-1}}^{t+1}$ uniformly from $A_{i_{k-1}}^{t+1}$ and add it to $i_{k-1}$'s bundle. 
    \end{enumerate}
    Thus, the expected change in value between $B^t$ and $A^t$ is
    \begin{align*}
        &\mathbb{E}\left[\sum_{i\in N} \left( v_i(B_i^t) - v_i(A^t_i) \right)\right] = 
         \sum_{1\leq j\leq k-2} \left( \mathbb{E}\left[v_{i_j}(o_{i_{j+1}}^t)\right] - \mathbb{E}\left[v_{i_j}(o_{i_j}^t)\right] \right) + \\
         &\mathbb{E}\left[v_{i_{k-1}}(o_{i_{k-1}}^{t+1})\right] - \mathbb{E}\left[v_{i_{k-1}}(o_{i_{k-1}}^t)\right] = 
        \sum_{1 \leq j \leq k-2} \frac{v_{i_j}(A^t_{i_{j+1}})}{w_{i_{j+1}}} - \frac{v_{i_j}(A^t_{i_j})}{w_{i_j}} + \\
        &\frac{v_{i_{k-1}}(A_{i_{k-1}}^{t+1})}{w_{i_{k-1}}} - \frac{v_{i_{k-1}}(A_{i_{k-1}}^{t})}{w_{i_{k-1}}}.
    \end{align*}
    This is exactly the cost of $P$ which by assumption is greater than 0.
    However, according to \Cref{alg:general-additive}, $A^t$ maximizes the value of an allocation where each agent $i$ receives $w_i$ items among the set of $O^t$ items. Therefore,
     $$0 \leq \mathbb{E}\left[\sum_{i\in N} \left( v_i(B^{t}_i) - v_i(A^{t}_i) \right)\right] = cost_{A^{t}}(P)$$ 
    leading to a contradiction.

To sum up, $A^t$ is WEF-able under the original valuations $v$ (with weights $w$), and under the modified valuations $\Bar{v}$ (with unit weights).
\end{proof}
\begin{proposition} \label{max sub modified is max sub original}
    For the allocation $A$ computed by \Cref{alg:general-additive}, the subsidy required by an agent given $v$ (with weights $w$) is at most the subsidy required given $\Bar{v}$ (with unit weights).
\end{proposition}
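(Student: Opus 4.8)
The plan is to rephrase both ``required subsidies'' as weighted longest-path costs in the two relevant envy graphs, using \Cref{max_path_subsidy}, and then to show that replacing the original valuations $v$ by the modified valuations $\Bar{v}$ can only lengthen the longest path leaving any agent. Concretely: $A$ is WEF-able under $v$ with weights $\mathbf{w}$ (by \Cref{general additive: WEF-able} and \Cref{thm:wefable--iff-no-cycles}, the graph $G_{A,w}$ has no positive-cost cycle), and $A$ is envy-freeable under $\Bar{v}$ with unit weights (\Cref{modified function wef-able}, so the modified envy graph has no positive-cost cycle). Hence \Cref{max_path_subsidy} applies on both sides, and the minimal subsidy agent $i$ needs under $v$ is $s_i^* = w_i\cdot cost_A(\ell_i(A))$, while the minimal subsidy it needs under $\Bar{v}$ is $w_i\cdot\overline{cost}_A(\bar\ell_i(A))$, where $\ell_i(A)$ and $\bar\ell_i(A)$ denote maximum-cost paths out of $i$ in $G_{A,w}$ and in the modified envy graph, respectively. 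It therefore suffices to prove, for every $i\in N$,
\[
cost_A(\ell_i(A))\ \le\ \overline{cost}_A(\bar\ell_i(A)).
\]

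The key step is an arc-by-arc comparison of the two graphs. For any ordered pair of distinct agents $(a,b)$, I would sum the inequality of part~2 of \Cref{observation:modified func} over the rounds $t\in[T]$; using additivity of $v_a$ and part~1 of that observation (so that $\sum_t\Bar{v}_a(A_a^t)=v_a(A_a)/w_a$ and $\sum_t\Bar{v}_a(A_b^t)=\Bar{v}_a(A_b)$), this gives
\[
\overline{cost}_A(a,b)=\Bar{v}_a(A_b)-\Bar{v}_a(A_a)\ \ge\ \frac{v_a(A_b)}{w_b}-\frac{v_a(A_a)}{w_a}=cost_A(a,b).
\]
Since the two envy graphs share the same vertex set and the same arc set, any directed path $P$ starting at $i$ is a path in both, and summing the last inequality over the arcs of $P$ yields $\overline{cost}_A(P)\ge cost_A(P)$. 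Applying this to $P=\ell_i(A)$, and then using that $\bar\ell_i(A)$ maximises the modified cost over all paths out of $i$, I get $\overline{cost}_A(\bar\ell_i(A))\ge\overline{cost}_A(\ell_i(A))\ge cost_A(\ell_i(A))$, which is the displayed inequality; multiplying through by $w_i>0$ gives $s_i^*\le w_i\cdot\overline{cost}_A(\bar\ell_i(A))$, i.e.\ the subsidy required by agent $i$ under $v$ is at most the subsidy required under $\Bar{v}$.

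I expect the only real difficulty to be bookkeeping rather than anything conceptual: one must check that the per-round modified valuation aggregates additively into $\Bar{v}_i(A_j)=\sum_t\Bar{v}_i(A_j^t)$ with $\Bar{v}_i(A_i)=v_i(A_i)/w_i$, that \Cref{observation:modified func} is being invoked for \emph{every} arc (not only for arcs lying on a path leaving one fixed agent), and that the two maximum-cost paths are finite, which is exactly what the no-positive-cycle statements of \Cref{general additive: WEF-able} and \Cref{modified function wef-able} guarantee. The one conceptual point is simply that an arc-wise domination of costs automatically lifts to a domination of the maximum-cost path out of each vertex, because the $\Bar{v}$-longest path out of $i$ beats the $\Bar{v}$-length of \emph{any} path out of $i$, in particular of the $v$-longest one.
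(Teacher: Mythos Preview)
Your proposal is correct and follows exactly the paper's approach: use \Cref{observation:modified func} to obtain the arc-wise inequality $\overline{cost}_A(a,b)\ge cost_A(a,b)$, and then lift this to a domination of maximum-path costs out of every vertex. The paper's proof is essentially your one displayed inequality plus the sentence ``thus the cost of any path is at least as large under $\Bar v$''; your version simply spells out the summation over rounds, the summation over arcs of a path, and the maximisation step, and invokes \Cref{general additive: WEF-able} and \Cref{modified function wef-able} to justify finiteness of the longest paths --- all of which the paper leaves implicit. One small remark: when you write that the minimal subsidy under $\Bar v$ with \emph{unit} weights equals $w_i\cdot\overline{cost}_A(\bar\ell_i(A))$, note that \Cref{max_path_subsidy} with unit weights would literally give $1\cdot\overline{cost}_A(\bar\ell_i(A))$; the paper is loose on this point too, and what is actually being proved (and used downstream in \Cref{max subsidy for modified valuations} and \Cref{theorem: sub general additive}) is the path-cost inequality $cost_A(\ell_i(A))\le \overline{cost}_A(\bar\ell_i(A))$, which you establish correctly.
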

\begin{proof}
    Given \Cref{observation:modified func}, for each $i,j \in N$,
    $$\Bar{v}_i(A_j) - \Bar{v}_i(A_i) \geq \frac{v_i(A_j)}{w_j} - \frac{v_i(A_i)}{w_i}.$$
    Thus, the cost of any path in the envy graph under the modified function and unit weights 
    is at least the cost of the same path in the weighted envy-graph with the original valuations. 
\end{proof}
\begin{proposition} \label{max subsidy for modified valuations}
    For the allocation $A$ computed by \Cref{alg:general-additive}, the subsidy to each agent is at most $w_i   V$ for the modified valuation profile $\Bar{v}$.
\end{proposition}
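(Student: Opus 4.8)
The plan is to bound, for every ordered pair $i\neq j$, the modified envy $\bar v_i(A_i)-\bar v_i(A_j)$ from above by $V$; equivalently, to show that every arc of the envy graph of $A$ under $\bar v$ (with unit weights) has cost at least $-V$. Since \Cref{modified function wef-able} already guarantees that $A$ is EF-able under $\bar v$, the quantity $\ell_i(A)$ in that graph is finite, so \Cref{lemm:minimum cost subsidy}, applied with unit weights and $z=V$, will give that the subsidy required by each agent is at most $V$ under $\bar v$; combined with \Cref{max sub modified is max sub original} this yields a subsidy of at most $w_i V$ for agent $i$ under the original valuations $v$ and weights $\mathbf{w}$, which is the stated claim.

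To obtain the arc bound I would just expand both terms round by round. Extending $\bar v_i$ additively over rounds, part~(1) of \Cref{observation:modified func} gives $\bar v_i(A_i)=\sum_{t=1}^{T}\bar v_i(A_i^t)=\sum_{t=1}^{T}\tfrac{v_i(A_i^t)}{w_i}$. For $\bar v_i(A_j)$, the definition of the modified valuation itself supplies, for every round $t<T$, that $\bar v_i(A_j^t)=\max\!\big(\tfrac{v_i(A_j^t)}{w_j},\tfrac{v_i(A_i^{t+1})}{w_i}\big)\ge\tfrac{v_i(A_i^{t+1})}{w_i}$, while $\bar v_i(A_j^T)=\tfrac{v_i(A_j^T)}{w_j}\ge 0$; summing these over all rounds yields $\bar v_i(A_j)\ge\sum_{t=1}^{T-1}\tfrac{v_i(A_i^{t+1})}{w_i}=\sum_{s=2}^{T}\tfrac{v_i(A_i^s)}{w_i}$. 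Subtracting the two nearly identical (telescoping) sums leaves $\bar v_i(A_i)-\bar v_i(A_j)\le\tfrac{v_i(A_i^1)}{w_i}$, and since agent $i$ receives exactly $w_i$ items in round~$1$, each of value at most $V$ to $i$, this is at most $V$. The degenerate case $T=1$ is subsumed by the same computation: the sum over $t<T$ is empty, $\bar v_i(A_j)=\tfrac{v_i(A_j^1)}{w_j}\le V$ and $\bar v_i(A_i)\ge 0$.

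I do not expect a genuine obstacle here. The only real trick — forcing $\bar v_i(A_j^t)\ge v_i(A_i^{t+1})/w_i$ to hold by fiat through the ``$\max$'' in the definition of $\bar v$ — has already been baked into the machinery of \Cref{modified function wef-able,max sub modified is max sub original}, and it is exactly what collapses the weighted, multi-round telescoping into the one-line estimate above. The points that still need a little care are the index shift between $\sum_t v_i(A_i^t)$ and $\sum_t v_i(A_i^{t+1})$, the last round $t=T$ (where $\bar v_i(A_j^T)$ is defined differently and is only bounded below by $0$), and the single-round instance; none of these should cause difficulty.
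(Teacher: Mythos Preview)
Your approach is correct and essentially identical to the paper's: both expand $\bar v_i(A_j)-\bar v_i(A_i)$ round by round, use $\bar v_i(A_j^t)\ge v_i(A_i^{t+1})/w_i$ for $t<T$ together with $\bar v_i(A_j^T)\ge 0$, telescope, and bound the surviving term $v_i(A_i^1)/w_i$ by $V$ via $|A_i^1|=w_i$. Two minor remarks: in your $T=1$ aside you swapped which quantity gets which inequality (you need $\bar v_i(A_i)\le V$ and $\bar v_i(A_j)\ge 0$, though as you say the general computation already covers this case), and the proposition as stated asks only for the bound under $\bar v$ --- your final appeal to \Cref{max sub modified is max sub original} is the extra step the paper postpones to \Cref{theorem: sub general additive}.
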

\begin{proof}
    By \Cref{modified function wef-able}, the allocation $A$ is WEF-able under the valuations $\Bar{v}$. 
    Together with \Cref{lemm:minimum cost subsidy}, if for each $i,j\in N$ it holds that $\Bar{v}_i(A_j) - \Bar{v}_i(A_i) \geq -V$, the subsidy required for agent $i\in N$ is at most $w_i   V$ for $\Bar{v}$.

    \begin{align*}
        &\Bar{v}_i(A_j) - \Bar{v}_i(A_i) = \sum_{t\in [T]}\Bar{v}_i(A_j^t) - \sum_{t\in [T]}\Bar{v}_i(A_i^t) =  \\ &\sum_{t\in [T-1]}\max\Big\{\frac{v_i(A_j^t)}{w_j}, \frac{v_i(A_i^{t+1})}{w_i}\Big\} + \frac{v_i(A_j^T)}{w_j}- \sum_{t\in [T]}\frac{v_i(A_i^t)}{w_i} \geq \\ &
         \sum_{t\in [T-1]}\frac{v_i(A_i^{t+1})}{w_i}+ \frac{v_i(A_j^T)}{w_j}- \sum_{t\in [T]}\frac{v_i(A_i^t)}{w_i} = \\ &\frac{v_i(A_j^T)}{w_j} - \frac{v_i(A_i^1)}{w_i} \geq - \frac{v_i(A_i^1)}{w_i}.
    \end{align*}
Since $A_i^1$ contains exactly $w_i$ items, 
$-v_i(A_i^1) \geq - w_i  V$.
Hence, $\Bar{v}_i(A_j) - \Bar{v}_i(A_i) \geq -\frac{w_i   V}{w_i} = -V$.
\end{proof}

We are now prepared to prove the main theorem.
\begin{theorem} \label{theorem: sub general additive}
    For additive valuations and integer entitlements, \Cref{alg:general-additive} computes in polynomial time a WEF-able allocation, where the subsidy to each agent is at most $w_i V$ and the total subsidy is at most $(W-w_1)V$.
\end{theorem}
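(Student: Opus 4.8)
The plan is to assemble \Cref{theorem: sub general additive} from the pieces already developed, treating it as essentially a bookkeeping step. First I would invoke \Cref{general additive: WEF-able}: each round-$t$ matching $A^t$ is WEF-able, and since the weighted-envy graph of the concatenated allocation $A$ has edge costs that are the sums over rounds of the per-round edge costs, a cycle in $G_{A,w}$ has total cost equal to the sum of the corresponding cycle costs over all rounds; each of those is $\le 0$ by \Cref{general additive: WEF-able} and \Cref{thm:wefable--iff-no-cycles}, so $G_{A,w}$ has no positive-cost cycle and $A$ is WEF-able. (Alternatively, one can cite \Cref{modified function wef-able}, which already shows the stronger fact that $A$ is even EF-able under $\bar v$.)

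Next I would bound the per-agent subsidy. By \Cref{max subsidy for modified valuations}, the subsidy required by agent $i$ under the modified valuations $\bar v$ with unit weights is at most $w_i V$. By \Cref{max sub modified is max sub original}, the subsidy required by $i$ under the original $v$ with weights $w$ is no larger than that. Hence $s_i^* \le w_i V$ for every $i \in N$. For the total, apply part~(\ref{agent with 0 subsidy}) of \Cref{max_path_subsidy}: at least one agent --- necessarily one achieving the minimum weight, or at any rate some agent $k$ --- has $s_k^* = 0$, so $\sum_{i \in N} s_i^* \le \sum_{i \ne k} w_i V \le (W - w_1) V$, using $w_1 \le w_k$.

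Finally I would address the polynomial running time: the algorithm runs in $\lceil m/W\rceil$ rounds; in each round it builds the flow network described before the algorithm (with $O(n+m)$ nodes and $O(nm)$ arcs) and computes an integral min-cost max-flow, which is polynomial; and the optimal subsidy vector is computed in $O(nm+n^3)$ time by \Cref{max_path_subsidy}. Summing over rounds keeps everything polynomial in $n,m$ (and in the size of the weights, since the flow values are bounded by $W$).

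I do not expect a genuine obstacle here, since all the substantive work --- WEF-ability of each round via the randomized-swap argument, the modified-valuation trick to get a weight-independent per-agent bound, and the "one agent gets zero subsidy" observation --- is already in hand; the only care needed is (i) to state clearly why per-round WEF-ability composes to WEF-ability of the final allocation (the cycle-cost additivity argument), and (ii) to make sure the "$(W-w_1)V$" total follows from the per-agent bound together with the zero-subsidy agent rather than from a cruder union bound giving $(W - w_1)(n-1)V$ or similar. If anything is delicate, it is confirming that the agent with zero subsidy can be charged against $w_1$ specifically; but since $w_1$ is the minimum weight and we only need $\sum_{i\ne k} w_i \le W - w_1$, this holds for any choice of the zero-subsidy agent $k$.
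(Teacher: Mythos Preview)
Your proposal is correct and follows essentially the same route as the paper: invoke \Cref{general additive: WEF-able} (together with additivity of cycle costs over rounds) for WEF-ability of $A$, combine \Cref{max subsidy for modified valuations} with \Cref{max sub modified is max sub original} for the per-agent bound $s_i^*\le w_i V$, then use part~(\ref{agent with 0 subsidy}) of \Cref{max_path_subsidy} to drop one agent and obtain $(W-w_1)V$, and finish with the min-cost flow runtime. Your justification of why the zero-subsidy agent allows the $(W-w_1)V$ bound (via $w_1\le w_k$) is in fact more explicit than the paper's, which simply asserts the conclusion.
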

\begin{proof}
By \Cref{general additive: WEF-able}, $A$ is WEF-able under the original valuations. Combined with \Cref{modified function wef-able} and \Cref{max subsidy for modified valuations} , $A$ is also WEF-able under the modified valuations and requires a subsidy of at most $w_i V$ for each agent $i\in N$. 

\Cref{max sub modified is max sub original}, implies that under the original valuations, the required subsidy for each agent $i\in N$ is at most $w_i V$. 
By \Cref{max_path_subsidy}, there is at least one agent who requires no subsidy, so the required total subsidy is at most $(W-w_1)  V$.

For the runtime analysis, the most computationally intensive step in \Cref{alg:general-additive} is solving the maximum integral flow of minimum cost in $G'$. The flow network $G'$ consists of at most $n+m+2$ nodes and at most $n+m+mn$ arcs. 
By \citet{goldberg1989finding}, this can be done in time polynomial in $n,m$:
\begin{multline*}
         O\left(\left(n+m+2\right)\left( n+m+mn \right) \log\left(n+m+2\right) \right. \\
         \left.\min\{ \log \left(\left(n+m+2\right) V \right), \left( n+m+mn \right) \log \left(n+m+2\right) \}\right).
\end{multline*}
\end{proof}

The WEF condition is invariant to multiplying the weight vector by a scalar.
This can be used in two ways:

(1) If the weights are not integers, but their ratios are integers, we can still use Algorithm~\ref{alg:general-additive}.
 For example, if $w_1=1/3$ and $w_2=2/3$ (or even if $w_i$'s are irrational numbers such as $w_1=\sqrt{2}$ and $w_2=2\sqrt{2}$), Algorithm 1 works correctly by resetting $w_1=1$ and $w_2=2$.

(2) If the weights are integers with greatest common divisor (gcd) larger than 1, we can divide all weights by the gcd to get a better subsidy bound:


\begin{lemma} \label{cor: sub general additive}
    For additive valuations and integer entitlements, there exists an algorithm that computes in polynomial time a WEF-able allocation where the subsidy to each agent is at most $w_i V/\gcdw$ and the total subsidy is at most $(W-w_1)V/\gcdw$, where $\gcdw$ is the greatest common divisor of all the $w_i$.
\end{lemma}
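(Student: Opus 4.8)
The plan is to reduce the problem to \Cref{theorem: sub general additive} by rescaling the weight vector. Let $d=\gcdw$ be the greatest common divisor of the integer weights $w_1,\dots,w_n$, and define new weights $w_i' := w_i/d$ for all $i\in N$. By the definition of gcd, each $w_i'$ is a positive integer, and the new total is $W' := \sum_{i\in N} w_i' = W/d$, with minimum weight $w_1' = w_1/d$.

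The first step is to observe that the WEF condition is scale-invariant: for any subsidy vector $\mathbf{s}$, the inequality $\frac{v_i(A_i)+s_i}{w_i} \ge \frac{v_i(A_j)+s_j}{w_j}$ holds if and only if $\frac{v_i(A_i)+s_i}{w_i'} \ge \frac{v_i(A_j)+s_j}{w_j'}$, since $w_i' = w_i/d$ and $w_j'=w_j/d$ differ from $w_i,w_j$ by the common positive factor $d$. Hence an allocation is WEF-able (and admits a given subsidy vector making it WEF) under weights $\mathbf{w}$ if and only if the same holds under weights $\mathbf{w}'$; in particular the minimum envy-eliminating subsidy vector is identical in both instances. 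This is the conceptual heart of the argument, though it is essentially immediate.

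The second step is to run \Cref{alg:general-additive} on the instance $(N,M,v,\mathbf{w}')$, which is legitimate since $\mathbf{w}'$ is an integer weight vector. By \Cref{theorem: sub general additive}, this produces in polynomial time an allocation $A$ that is WEF-able under $\mathbf{w}'$, with subsidy at most $w_i' V = w_i V/d$ per agent $i$ and total subsidy at most $(W'-w_1')V = (W-w_1)V/d$. By the scale-invariance from step one, $A$ is WEF-able under the original weights $\mathbf{w}$ as well, and the minimum subsidy per agent is exactly the same number $w_i V/\gcdw$, giving total at most $(W-w_1)V/\gcdw$, as claimed. The polynomial running time is inherited directly, since computing $d$ takes polynomial time (Euclid's algorithm) and the reduction is trivial.

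I do not expect a serious obstacle here: the only point requiring a word of care is making sure the per-agent and total subsidy bounds are stated in terms of the \emph{original} $w_i$ and $W$ (not the rescaled ones), which is why dividing by $d=\gcdw$ appears. One should also note explicitly that $A$ being WEF-able under $\mathbf{w}'$ transfers to WEF-ability under $\mathbf{w}$ via \Cref{thm:wefable--iff-no-cycles}: rescaling all vertex weights by the common factor $d$ multiplies every edge cost $cost_A(i,j) = \frac{v_i(A_j)}{w_j}-\frac{v_i(A_i)}{w_i}$ by $d$, hence multiplies every cycle cost by $d$, so the sign of every cycle is preserved and "no positive-cost cycles" is equivalent in the two instances. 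This makes the reduction fully rigorous with no real computation involved.
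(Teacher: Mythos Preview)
Your proposal is correct and follows essentially the same approach as the paper: divide all weights by $d=\gcdw$, run \Cref{alg:general-additive} on the rescaled integer weights, and invoke \Cref{theorem: sub general additive}. Your write-up is in fact more detailed than the paper's own proof, spelling out the scale-invariance of the WEF condition and the preservation of cycle signs under \Cref{thm:wefable--iff-no-cycles}, but the underlying argument is identical.
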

\begin{proof}
Algorithm~\ref{alg:general-additive} works correctly, even if we divide each $w_i$ by the greatest common divisor of $w_i$'s.
In other words, letting $d={\rm gcd}(w_1,...,w_n)$, $w'_i=w_i/d$, $W'=W/d$, and running Algorithm 1 with $w'_i$'s, we get the bound $(W'-w'_{\min})V$ of the total subsidy.
\end{proof}

A discussion about the tightness of the bound can be found in Appendix \ref{alg:general-additive tightness}.

\section{WEF Solutions for Additive Identical Valuations}
\label{sec:identical-additive}
In this section
This section deals with the case where all agents have identical valuations, that is, $v_i\equiv v$ for all $i\in N$.

We present a polynomial-time algorithm for finding a WEF-able allocation with a subsidy bounded by $V$ per agent and a total subsidy bounded by $(n-1)V$.
The following example shows that this bound is tight for any weight vector:
\begin{example}
    \label{identical-additive-tightness}
    Consider $n$ agents with integer weights $w_1 \leq \cdots \leq w_n$ and $1 + \sum_{i\in N} \left( w_i - 1 \right)$ items all valued at $V$.
    
    To avoid envy, each agent $i$ should receive a total utility of $w_i V$, so the sum of all agents' utilities would be $W V$. 
    
    As the sum of all values is $(W-(n-1))V$, a total subsidy of at least $(n-1)V$ is required
    (to minimize the subsidy per agent, each agent $i\in N$ should receive $w_i - 1$ items, except for the agent with the highest entitlement (agent $n$), who should receive $W$ items.
    
    The value per unit entitlement of each agent $i<n$ is $V(w_i-1)/w_i$, 
    and for agent $n$ it is $V$.
    Therefore, to avoid envy, each agent $i<n$ should receive a subsidy of $w_i \left(1 - \frac{w_i - 1}{w_i}\right)V = V$ and the total subsidy required is $(n-1)V.$
\end{example}

We start by observing that, with identical valuations, the cost of any path in the weighted envy graph is determined only by the agents at the endpoints of that path.
\begin{observation}
\label{cost_of_path_identical_valuatoins}
Given an instance with identical valuations,
let $A$ be any allocation, and denote by $P$ any path in the weighted envy-graph of $A$ between agents $i,j\in N$. Then,
\begin{align*}
    cost_A(P) = \frac{v(A_j)}{w_j}-\frac{v(A_i)}{w_i}.
\end{align*}
This is because the path cost is 
$$
    \sum_{(h,k)\in P} cost_A(h,k)= \sum_{(h,k)\in P}\frac{v(A_k)}{w_k}-\frac{v(A_h)}{w_h},
$$ 
and the latter sum is a telescopic sum that reduces to the difference of its last and first  element.
\end{observation}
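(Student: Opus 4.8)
The plan is to unwind the definitions and use the single hypothesis of identical valuations exactly once. Write the path as $P=(i_1,\dots,i_k)$ with $i_1=i$ and $i_k=j$; by definition $cost_A(P)=\sum_{r=1}^{k-1}cost_A(i_r,i_{r+1})$. The first step is to substitute the definition of the arc cost, $cost_A(i_r,i_{r+1})=\frac{v_{i_r}(A_{i_{r+1}})}{w_{i_{r+1}}}-\frac{v_{i_r}(A_{i_r})}{w_{i_r}}$, and then replace each $v_{i_r}$ by the common valuation $v$ --- this is the only place where the assumption $v_i\equiv v$ is used --- to obtain $cost_A(i_r,i_{r+1})=\frac{v(A_{i_{r+1}})}{w_{i_{r+1}}}-\frac{v(A_{i_r})}{w_{i_r}}$.

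The second step is to notice that, setting $f(h):=\frac{v(A_h)}{w_h}$, each summand equals $f(i_{r+1})-f(i_r)$, so the sum telescopes:
\[
cost_A(P)=\sum_{r=1}^{k-1}\bigl(f(i_{r+1})-f(i_r)\bigr)=f(i_k)-f(i_1)=\frac{v(A_j)}{w_j}-\frac{v(A_i)}{w_i}.
\]
This is exactly the claimed identity, and in particular it shows that the value does not depend on which path from $i$ to $j$ is chosen.

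There is no real obstacle here; the computation above can be presented essentially verbatim. The only minor points worth a sentence are the degenerate case $i=j$ (an empty or trivial path, for which both sides vanish), and the remark --- already implicit in the telescoping --- that the common value taken by all paths from $i$ to $j$ is precisely what will make the analysis of the identical-valuations case in the remainder of the section tractable.
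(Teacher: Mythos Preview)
Your argument is correct and is essentially identical to the paper's own justification: both substitute the arc-cost definition, use $v_i\equiv v$ to make every term depend only on the endpoint bundles, and then telescope. The only extras you add are the helper notation $f(h)$ and the trivial $i=j$ case, neither of which changes the approach.
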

Using \Cref{cost_of_path_identical_valuatoins}, it is easy to prove the following Lemma.
\begin{lemma}\label{thm:wefable--iff-no-cycles1}
    With identical valuations, every allocation is WEF-able.
\end{lemma}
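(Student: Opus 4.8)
The plan is to invoke \Cref{thm:wefable--iff-no-cycles}, which says an allocation $A$ is WEF-able if and only if its weighted envy graph $G_{A,w}$ has no positive-cost cycle. So it suffices to show that, under identical valuations, \emph{every} directed cycle in $G_{A,w}$ has cost at most $0$ (in fact, exactly $0$). This is where \Cref{cost_of_path_identical_valuatoins} does all the work: a cycle is just a path whose endpoints coincide, so by the telescoping computation in that observation, the cost of any cycle $C$ from agent $i$ back to agent $i$ equals $\frac{v(A_i)}{w_i} - \frac{v(A_i)}{w_i} = 0$.

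Concretely, I would write: let $A$ be an arbitrary allocation and let $C = (i_1, i_2, \ldots, i_r, i_1)$ be any directed cycle in $G_{A,w}$. Treating $C$ as a path from $i_1$ to $i_1$ and applying \Cref{cost_of_path_identical_valuatoins} (or directly unrolling the telescoping sum $\sum_{(h,k)\in C}\left(\frac{v(A_k)}{w_k} - \frac{v(A_h)}{w_h}\right)$), we get $cost_A(C) = \frac{v(A_{i_1})}{w_{i_1}} - \frac{v(A_{i_1})}{w_{i_1}} = 0 \leq 0$. Hence $G_{A,w}$ has no positive-cost cycle, and by \Cref{thm:wefable--iff-no-cycles}, $A$ is WEF-able.

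There is essentially no obstacle here — the lemma is an immediate corollary of \Cref{cost_of_path_identical_valuatoins} combined with the characterization \Cref{thm:wefable--iff-no-cycles}, and the only thing to be slightly careful about is noting that the telescoping argument of the observation applies verbatim to a closed walk, so that the first and last terms cancel rather than merely partially cancel. One could optionally remark that this gives a stronger statement than mere WEF-ability: every allocation with identical valuations can be made WEF with subsidies, which sets up the quantitative bound (total subsidy at most $(n-1)V$) developed in the rest of the section.
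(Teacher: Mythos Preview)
Your proposal is correct and matches the paper's own proof: both show that every cycle in $G_{A,w}$ has cost exactly $0$ via the telescoping sum (the paper writes out the sum explicitly rather than citing \Cref{cost_of_path_identical_valuatoins}, but the argument is identical), and then invoke condition~\ref{condition_b_theroem_1} of \Cref{thm:wefable--iff-no-cycles}.
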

\begin{proof}
Consider an allocation $A$ over $m$ items and $n$ agents with identical valuations $v$ and weights $w$. 
Let $C=(i_{1}, ..., i_{r})$ be a cycle in $G_{A,w}$. 
Then,
\begin{equation}
\begin{split}
    & cost_{A}(C)
    = 
    \\
    &
    \sum_{k = 1}^{r-1} \left(\frac{v(A_{i_{k+1}})}{w_{i_{k+1}}} - \frac{v(A_{i_{k}})}{w_{i_{k}}}\right) + \frac{v(A_{i_1})}{w_{i_1}} - \frac{v(A_{i_r})}{w_{i_r}} = 
    \\ 
    &
    \sum_{k=1}^{r}\left(\frac{v(A_{i_k}}{w_{i_k}} - \frac{v(A_{i_k}}{w_{i_k}}\right) = 0.
\end{split}
\end{equation}
    Hence, by condition \ref{condition_b_theroem_1} of \Cref{thm:wefable--iff-no-cycles}, $A$ is WEF-able.
\end{proof}
Our algorithm for finding a WEF-able allocation with bounded subsidy is presented as \Cref{alg:identical-additive}.



The algorithm traverses the items in an arbitrary order. At each iteration it selects the agent that minimizes the expression $\frac{v(A_i\cup{\{o\}})}{w_i}$, with ties broken in favor of the agent with the larger $w_i$, and allocates the next item to that agent.
Intuitively, this selection minimizes the likelihood that weighted envy is generated.

\begin{algorithm}
\caption{Weighted Sequence Protocol For Additive Identical Valuations}\label{alg:identical-additive}
\begin{algorithmic}[1]
\Require Instance $(N,M,v, \mathbf{w})$ with additive identical valuations.
\Ensure WEF-able allocation $A$ with total required subsidy of at most $(n-1)  V$.
\State $A_{i} \gets \emptyset, \forall i\in N$
\For{$o: 1$ to $m$}
\State $I^t \gets \arg\min_{i\in N}\frac{v\left(A_i \cup \{o\}\right)}{w_i}$
\State $i^t \gets \max_{i\in I^t}\left(i\right)$
\State Add $o$ to $A_{i^t}$
\EndFor
\State return $A$
\end{algorithmic}
\end{algorithm}

The following example illustrates \Cref{alg:identical-additive}:
 \begin{example}\label{example:additive identical}
    Consider two agents, denoted as $i_1$ and $i_2$, with corresponding weights $w_1 = 1$ and $w_2 = \frac{7}{2}$, and three items, namely $o_1, o_2, o_3$, with valuations $v(o_1) = v(o_2) = v(o_3) = 1$, \Cref{alg:identical-additive} is executed as follows:
    \begin{enumerate}
    \item for $t=1$, the algorithm compares $\frac{v(o_{1})}{w_1} = 1 $ and $\frac{v(o_{1})}{w_2} = \frac{2}{7}$. Consequently, the algorithm allocates item $o_1$ to agent $i_2$, resulting in $A_1^1 = \emptyset$ and $A_2^1 = \{o_1\}$.
    \item for $t=2$, the algorithm compares $\frac{v(o_{2})}{w_1} = 1$ and $ \frac{v(A_2^1 \cup \{o_{2}\})}{w_2} = \frac{2}{\frac{7}{2}} = \frac{4}{7}$. Subsequently, the algorithm allocates item $o_2$ to agent $i_2$, resulting in $A_1^2 = \emptyset$ and $A_2^2 = \{o_1, o_2\}$.
    \item for $t=3$, the algorithm compares $\frac{v(o_3)}{w_1} = 1$ and $\frac{v(A_2^2 \cup \{o_3\})}{w_2} = \frac{3}{\frac{7}{2}} = \frac{6}{7}$, Consequently, item $o_3$ is allocated to agent $i_2$, resulting in $A_1^3 = \emptyset$ and $A_2^3 = \{o_1, o_2, o_3\}$.
    \item agent $i_1$ envies agent $i_2$ by an amount of $\frac{v(A_2^3)}{w_2}-\frac{v(A_1^3)}{w_1} = \frac{3}{\frac{7}{2}} = \frac{6}{7}$, and conversely, agent $i_2$ envies agent $i_1$ by $\frac{v(A_1^3)}{w_1}-\frac{v(A_2^3)}{w_2} = -\frac{6}{7}$
    \item In order to mitigate envy, $s_{1} = \frac{6}{7}$ and $s_{2} = 0$.
    \end{enumerate}
    \end{example}
    
Example \ref{example:additive identical} illustrates that the resulting allocation may not be\emph{WEF(1,0)} ---
\Cref{alg:identical-additive} might allocate all items to the agent with the highest entitlement.
However, the outcome is always WEF$(0,1)$: 

\begin{proposition} \label{identical additive wef01}
    For additive identical valuations, \Cref{alg:identical-additive} computes a $WEF(0,1)$ allocation.
\end{proposition}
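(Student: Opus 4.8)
The plan is to analyze the state of the allocation just after \Cref{alg:identical-additive} assigns item $o$ to agent $i^t$, and to track what happens to the relevant weighted-envy inequality over the remaining iterations. Fix two agents $i,j\in N$. I want to show that at the end there is a single item $B\subseteq A_j$, $|B|\le 1$, with $\frac{v(A_i)+v(B)}{w_i}\ge \frac{v(A_j)}{w_j}$. Since valuations are identical and additive, write $a_i=v(A_i)$ for the final bundle values, and let $b$ be the value of the last item that the algorithm placed into $A_j$ (if $A_j=\emptyset$ there is nothing to prove, since then $v(A_j)=0\le v(A_i)+v(B)$ trivially). Take $B$ to be that last item, so $v(B)=b$.

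The key observation is the greedy selection rule: at the iteration $t^\star$ at which that last item (value $b$) was added to $A_j$, agent $j$ was the arg-min of $\frac{v(A_\ell\cup\{o\})}{w_\ell}$ over all $\ell$. In particular $j$ beat $i$ at that moment, so
\begin{equation*}
\frac{v(A_j^{(t^\star)})+b}{w_j}\ \le\ \frac{v(A_i^{(t^\star)})+b}{w_i},
\end{equation*}
where $A_\ell^{(t^\star)}$ denotes the bundle of $\ell$ just before iteration $t^\star$. Now $v(A_j^{(t^\star)})+b = v(A_j^{(t^\star+1)})$, and since no further items are ever added to $A_j$ after $t^\star$ (we chose the \emph{last} one), $v(A_j^{(t^\star+1)}) = v(A_j) = a_j$. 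On the other side, $v(A_i^{(t^\star)})\le v(A_i)=a_i$ because bundle values only increase over the run of the algorithm. Combining, $\frac{a_j}{w_j}\le \frac{v(A_i^{(t^\star)})+b}{w_i}\le \frac{a_i+b}{w_i} = \frac{v(A_i)+v(B)}{w_i}$, which is exactly the $WEF(0,1)$ inequality for the ordered pair $(i,j)$. Since $i,j$ were arbitrary, the allocation is $WEF(0,1)$.

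The main thing to be careful about is the tie-breaking and the direction of the inequality: the arg-min gives a weak inequality at iteration $t^\star$, and the tie-break (larger $w_i$) only matters for which agent is selected, not for the validity of the bound above, so it can be ignored in the argument — whoever is selected, $j$ was among the minimizers, hence $\le$ holds against every other agent including $i$. A second small point worth stating explicitly is monotonicity of $v(A_i^{(t)})$ in $t$ (immediate since items have nonnegative value and are only ever added), which is what lets us replace $v(A_i^{(t^\star)})$ by the final $v(A_i)$. I expect no real obstacle here; the only subtlety is the bookkeeping of ``last item added to $A_j$'' and handling the degenerate case $A_j=\emptyset$, both of which are routine.
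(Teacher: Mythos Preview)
Your proof is correct and uses essentially the same idea as the paper's: the greedy selection rule at the moment an item lands in the envied agent's bundle yields exactly the $WEF(0,1)$ inequality, and monotonicity of bundle values carries it forward. The only cosmetic difference is packaging: the paper runs a step-by-step induction on iterations (showing the invariant is preserved each time the selected agent $i^t$ receives an item, with witness $B=\{o\}$), whereas you give a direct argument by jumping straight to the \emph{last} item placed in $A_j$ and using that as the witness --- this is just the induction unrolled.
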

\begin{proof}
We prove by induction that at each iteration, $A$, the resulting allocation from \Cref{alg:identical-additive}, satisfies WEF$(0,1)$. 
The claim is straightforward for the first iteration. Assume the claim holds for the $(t-1)$-th iteration, and prove it for the $t$-th iteration. Let $o$ be the item assigned in this iteration and $i^t$ be the agent receiving this item.
Agent $i^t$ satisfies WEF$(0,1)$ due to the induction hypothesis.
For $j \neq i^t$, by the selection rule, $\frac{v(A_{i^t})}{w_{i^t}} \leq \frac{v(A_j \cup \{o\})}{w_j}$.
This is exactly the definition of WEF$(0,1)$.
\end{proof}
Based on \Cref{cost_of_path_identical_valuatoins} and \Cref{identical additive wef01}, we show that each path starting at agent $i\in N$ under $WEF(0,1)$ allocation is bounded by $\frac{V}{w_i}$. 
\begin{proposition}
\label{prop:wef01 upper bound}
With identical additive valuations, for every WEF$(0,1)$ allocation $A$,
$\ell_i(A) \leq \frac{V}{w_i}$, for all $i\in N$.
\end{proposition}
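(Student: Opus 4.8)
The plan is to use \Cref{cost_of_path_identical_valuatoins} to reduce the maximum-cost path $\ell_i(A)$ to a single pairwise comparison, and then invoke the WEF$(0,1)$ property to bound that comparison by $V/w_i$. First I would fix an agent $i$ and let $P$ be a maximum-cost path in $G_{A,w}$ starting at $i$; say it ends at agent $j$. By \Cref{cost_of_path_identical_valuatoins}, since valuations are identical, $\ell_i(A) = cost_A(P) = \frac{v(A_j)}{w_j} - \frac{v(A_i)}{w_i}$, so the length of the path collapses and only the endpoints matter. If $j = i$ (e.g.\ the trivial path) the cost is $0 \le V/w_i$ and we are done, so assume $j \neq i$.

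Next I would apply the WEF$(0,1)$ condition for the ordered pair $(i,j)$: there exists $B \subseteq A_j$ with $|B| \le 1$ such that $\frac{v(A_i) + v(B)}{w_i} \ge \frac{v(A_j)}{w_j}$ (taking $x=0$, $y=1$ in the definition). Rearranging gives $\frac{v(A_j)}{w_j} - \frac{v(A_i)}{w_i} \le \frac{v(B)}{w_i}$. Since $B$ contains at most one item and each item is worth at most $V$, we have $v(B) \le V$, hence $\ell_i(A) = \frac{v(A_j)}{w_j} - \frac{v(A_i)}{w_i} \le \frac{v(B)}{w_i} \le \frac{V}{w_i}$, which is exactly the claimed bound.

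I do not anticipate a genuine obstacle here; the statement is essentially a direct combination of the telescoping observation and the definition of WEF$(0,1)$. The only point requiring a little care is making sure the endpoint agent $j$ of the maximum-cost path is handled correctly in both the $j = i$ and $j \neq i$ cases, and being explicit that "the maximum-cost path starting at $i$" is well-defined — which follows because by \Cref{thm:wefable--iff-no-cycles1} every allocation with identical valuations is WEF-able, so $G_{A,w}$ has no positive-cost cycles and $\ell_i(A)$ is finite.
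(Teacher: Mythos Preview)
Your proposal is correct and follows essentially the same approach as the paper: telescope the path cost via \Cref{cost_of_path_identical_valuatoins} to the endpoint difference $\frac{v(A_j)}{w_j}-\frac{v(A_i)}{w_i}$, then bound this by $\frac{v(B)}{w_i}\le \frac{V}{w_i}$ using the WEF$(0,1)$ inequality. Your additional care about the $j=i$ case and the well-definedness of $\ell_i(A)$ (via \Cref{thm:wefable--iff-no-cycles1}) is a welcome clarification that the paper leaves implicit.
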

\begin{proof} 
For each agent $i\in N$, 
denote the highest-cost path starting at $i$ in that graph by $P_{i}(A) = (i, ..., j)$ for some agent $j\in N$. 
Then by \Cref{cost_of_path_identical_valuatoins},
$\ell_i(A) = cost_{A}(P_i(A)) = \frac{v(A_{j})}{w_{j}} - \frac{v(A_{i})}{w_i}$.

From the definition of WEF$(0,1)$, \Cref{prop:wef01 upper bound} implies that this difference is at most $\frac{v(o)}{w_i}$ for some object $o\in A_j$. 
Therefore, the difference is at most $\frac{V}{w_i}$.
\end{proof}
\begin{theorem}\label{thm:wefable--iff-no-cycles3}
    For additive identical valuations, \Cref{alg:identical-additive} computes a WEF-able allocation $A$ in $O(mn)$ time, where $\forall i \in N$: $s_{i} \leq V$. 
    Therefore, the total subsidy required is at most $(n-1)  V$.
\end{theorem}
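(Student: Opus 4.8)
The plan is to combine three facts already established in the excerpt: that Algorithm~\ref{alg:identical-additive} produces a WEF$(0,1)$ allocation (\Cref{identical additive wef01}), that every allocation under identical valuations is WEF-able (\Cref{thm:wefable--iff-no-cycles1}), and that under a WEF$(0,1)$ allocation the longest path from any agent $i$ in the weighted envy-graph has cost at most $V/w_i$ (\Cref{prop:wef01 upper bound}). The subsidy bound then follows from the optimal-subsidy formula $s_i^* = w_i\cdot\ell_i(A)$ of \Cref{max_path_subsidy}.

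First I would invoke \Cref{thm:wefable--iff-no-cycles1} to conclude that the output allocation $A$ is WEF-able, so by \Cref{max_path_subsidy} the minimum envy-eliminating subsidy vector is $\mathbf{s}^*$ with $s_i^* = w_i\cdot\mathrm{cost}_A(\ell_i(A)) = w_i\,\ell_i(A)$, computable in strongly polynomial time. Next I would apply \Cref{identical additive wef01} to get that $A$ is WEF$(0,1)$, and then \Cref{prop:wef01 upper bound} to bound $\ell_i(A)\le V/w_i$ for every $i\in N$. Multiplying by $w_i$ gives $s_i^* \le w_i\cdot (V/w_i) = V$ for all $i$, which is the per-agent bound. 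For the total, I would use part~\ref{agent with 0 subsidy} of \Cref{max_path_subsidy}: at least one agent, say the agent who receives the ``last'' maximal assignment, has $s_i^* = 0$; hence at most $n-1$ agents receive a positive subsidy, each bounded by $V$, giving total subsidy at most $(n-1)V$.

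For the running time I would argue directly from the pseudocode: the outer loop runs $m$ times, and each iteration computes $\arg\min_{i\in N} v(A_i\cup\{o\})/w_i$ and the tie-breaking maximum over that set, both of which take $O(n)$ time (and updating $v(A_{i^t})$ is $O(1)$ since valuations are additive), so the total is $O(mn)$. Computing the subsidies themselves is subsumed, or can be noted separately via \Cref{max_path_subsidy}; in the identical-valuations case one can even read off $\ell_i(A)$ directly from \Cref{cost_of_path_identical_valuatoins} as $\max_j\big(v(A_j)/w_j\big) - v(A_i)/w_i$ in $O(n)$ per agent.

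I do not expect a genuine obstacle here, since all the substantive work is done in the preceding propositions; the only thing to be careful about is making sure the three cited lemmas are quoted with the right hypotheses (WEF$(0,1)$, identical additive valuations, WEF-ability) and that the ``one agent gets zero subsidy'' step is justified — that is exactly part~\ref{agent with 0 subsidy} of \Cref{max_path_subsidy}, which applies because $A$ is WEF-able. One minor subtlety worth a sentence: \Cref{max_path_subsidy} as stated assumes $w_i\ge 1$ in the proof of part~\ref{agent with 0 subsidy}, but the conclusion that some $\ell_i(A)=0$ holds for any positive weights, since a cycle of positive cost would contradict \Cref{thm:wefable--iff-no-cycles1}; I would note this so the argument goes through for arbitrary positive (not necessarily integer) weights.
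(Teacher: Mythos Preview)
Your proposal is correct and follows essentially the same approach as the paper: invoke \Cref{thm:wefable--iff-no-cycles1} for WEF-ability, combine \Cref{identical additive wef01} with \Cref{prop:wef01 upper bound} to get $\ell_i(A)\le V/w_i$ and hence $s_i^*\le V$, then use part~\ref{agent with 0 subsidy} of \Cref{max_path_subsidy} to drop one agent and obtain the $(n-1)V$ total, with the $O(mn)$ runtime argued directly from the loop structure. Your extra remarks about reading off $\ell_i(A)$ via \Cref{cost_of_path_identical_valuatoins} and the $w_i\ge 1$ subtlety go slightly beyond what the paper writes but are sound and harmless additions.
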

\begin{proof}
    Let $A$ be the allocation output by \Cref{alg:identical-additive} under additive identical valuations after $m$ iterations. 
    From \Cref{thm:wefable--iff-no-cycles1}, it can be deduced that $A$ is WEF-able. Together
    \Cref{identical additive wef01} and
    \Cref{prop:wef01 upper bound} imply that, to achieve weighted-envy-freeness under identical additive valuations for the allocation computed by \Cref{alg:identical-additive}, the required subsidy per agent $j \in N$ is at most $ w_j   \frac{V}{w_j} = V$. 
    In combination with \Cref{max_path_subsidy}, the total required subsidy is at most $(n-1)  V$. 
    Note that $W \geq n   w_{1}$. Therefore, this bound is better than the one proved in 
    \Cref{theorem: sub general additive}: $\left(W - w_i \right)   V \geq \left(n - 1 \right) w_i  V \geq \left(n - 1 \right)   V $.
    
    We now analyze \Cref{alg:identical-additive}'s time complexity.
    
    The loop in the algorithm runs for $m$ times. where at each iteration $t\in[m]$, finding the set of agents $I^{t}$ and the agent $i^t$ within it takes $O(n)$. 
    Also, allocating the item $o$ to the agent $i^t$ takes $O(1)$.
    
    To sum up, \Cref{alg:identical-additive} runs in $O(mn)$.
\end{proof}

Note that $W \geq n   w_1$. Therefore, this bound is better than the one proved in 
    \Cref{theorem: sub general additive} for integer weights: $\left(W - w_i \right)   V \geq \left(n - 1 \right) w_i  V \geq \left(n - 1 \right)   V $.

The upper bound of $(n-1)V$ is tight even for equal entitlements (\citet{halpern2019fair}).
Interestingly, when either the valuations or the entitlements are identical, the worst-case upper bound depends on $n$, whereas when both valuations and entitlements are different, the bound depends on $W$.

\section{WEF Solutions for binary additive Valuations} 
\label{sec:binary-additive}
In this section we focus on the special case of agents with binary additive valuations.
We start with a lower bound on the subsidy.

\begin{proposition}
\label{prop:lower-bound-binary}
    For every $n \geq 2$ and weight vector $\mathbf{w}$, there is an instance with $n$ agents with binary valuations in which the required subsidy in any WEF allocation is at least
    $\frac{W}{w_2} - 1$.
\end{proposition}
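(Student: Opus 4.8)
The plan is to construct a simple instance with a single positively-valued item and reuse the single-item characterization from \Cref{lem:single-item}. Specifically, I would take $n$ agents with the given weights $w_1 \leq \cdots \leq w_n$, one item $o$ with $v_i(o) = 1$ for all $i \in N$, and all remaining items valued at $0$ by everyone (these can be distributed arbitrarily). Here $V = 1$ since valuations are binary. Because every agent has the same (maximum) value for $o$, \Cref{lem:single-item} tells us that \emph{any} allocation that gives $o$ to some agent is WEF-able, and in particular we are free to consider giving $o$ to the agent who is cheapest to compensate the others for.

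The key computation is then to determine, for a WEF allocation in which $o$ goes to agent $k$, the minimum total subsidy. If $o$ is given to agent $k$, then by the WEF condition (and by the subsidy formula in \Cref{max_path_subsidy}, or directly), the envy of each other agent $j \neq k$ toward $k$ is $\frac{v_j(o)}{w_k} - \frac{v_j(\emptyset)}{w_j} = \frac{1}{w_k}$, while all other pairwise envies are $0$; agent $k$ envies no one. So the optimal subsidy vector sets $s_k = 0$ and $s_j = w_j \cdot \frac{1}{w_k} = \frac{w_j}{w_k}$ for every $j \neq k$, giving a total subsidy of $\frac{1}{w_k}\sum_{j \neq k} w_j = \frac{W - w_k}{w_k} = \frac{W}{w_k} - 1$. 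This is minimized over $k$ by choosing $k = n$ (the largest weight), yielding total subsidy $\frac{W}{w_n} - 1$ — but this is \emph{not} what the proposition claims.

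So the subtlety is that the proposition asks for the subsidy required in \emph{any} WEF allocation, i.e.\ a lower bound that holds no matter which WEF allocation the algorithm outputs; to get the bound $\frac{W}{w_2}-1$ rather than $\frac{W}{w_n}-1$, I need a modified instance in which the agent with the largest weight $w_n$ \emph{cannot} be the recipient of $o$ in any WEF allocation. The fix is to perturb valuations slightly away from strict binariness is not allowed (we must stay binary), so instead I would add a second copy of the situation or give agent $n$ value $0$ for $o$. Concretely: let $v_n(o) = 0$ and $v_i(o) = 1$ for $i \neq n$, with all other items worth $0$. Now by \Cref{lem:single-item} the item $o$ must go to an agent with the maximum value for it, i.e.\ to some agent in $\{1,\dots,n-1\}$, so the best case is giving $o$ to agent $n-1$. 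Wait — I want the divisor $w_2$, so I should instead force $o$ to go to agent with weight at most $w_2$: set $v_i(o) = 1$ only for $i \in \{1,2\}$ and $v_i(o) = 0$ for $i \geq 3$. Then $o$ must be allocated to agent $1$ or agent $2$; giving it to agent $2$ is the cheaper option, and it requires total subsidy $\frac{1}{w_2}\sum_{j\neq 2} w_j = \frac{W}{w_2} - 1$. Since giving $o$ to agent $1$ costs even more ($\frac{W}{w_1}-1 \geq \frac{W}{w_2}-1$), every WEF allocation requires total subsidy at least $\frac{W}{w_2}-1$.

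The main obstacle — and the thing to be careful about in writing this up — is getting the forcing argument exactly right: I must invoke \Cref{lem:single-item} correctly to conclude that the positively-valued item goes only to a maximum-value agent, and then argue over the (at most two) remaining WEF-able placements that the total subsidy is minimized at $\frac{W}{w_2}-1$. I should also double-check the edge case $n = 2$, where agents $1$ and $2$ are the only agents, both value $o$ at $1$, and giving $o$ to agent $2$ costs $\frac{w_1}{w_2} = \frac{W}{w_2} - 1$, consistent with the claim. A minor point: I should state explicitly that the minimum subsidy over WEF allocations equals the minimum over the legal placements of $o$ of the per-placement optimal subsidy computed via \Cref{max_path_subsidy}, so that the lower bound genuinely applies to \emph{any} WEF allocation and not merely to some distinguished one.
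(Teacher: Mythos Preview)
Your corrected construction is exactly the paper's: one item valued at $1$ by agents $1$ and $2$ only, forcing the item (via \Cref{lem:single-item}) to one of them, with the cheaper choice (agent~$2$) yielding total subsidy $\frac{W}{w_2}-1$. The only point to be careful about in the write-up is that agents $j \geq 3$ have \emph{zero} direct envy toward the item-holder (since $v_j(o)=0$), so their required subsidy $\frac{w_j}{w_2}$ does not come from the one-edge formula you used in your first attempt but from the two-step path $j \to 1 \to 2$ in the weighted envy graph --- equivalently, from their envy of agent~$1$'s subsidy; this is precisely how the paper argues it, and your plan to invoke \Cref{max_path_subsidy} will handle it correctly once you compute longest paths rather than single edges.
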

\begin{proof}
    Agents $1$ and $2$ value the item at $1$ and the others at $0$.
If agent $i \in \{1,2\}$ gets the item,
then the other agent $j\neq i \in \{1,2\}$ must get subsidy $\frac{w_j}{w_i}$.
To ensure that other agents do not envy $j$'s subsidy, every other agent $k \not\in \{1,2\}$ 
must get subsidy $\frac{w_k}{w_i}$.
The total subsidy is $\frac{W}{w_i} - 1$.
The subsidy is minimized by giving the item to agent $2$, since $w_2\geq w_1$.
This gives a lower bound of $\frac{W}{w_2} - 1$.
\end{proof}

Below, we show how to compute a WEF-able allocation where the subsidy given to each agent $i\in N$ is at most $\frac{w_i}{w_1} V = \frac{w_i}{w_1}$, and the total subsidy is at most $\frac{W}{w_1} - 1$.

In the case of binary valuations, \Cref{alg:general-additive} is inefficient in three ways: 
\begin{enumerate}
    \item The maximum-cost matching does not always prioritize agents with higher entitlements.
    \item There may be situations where an agent prefers items already allocated in previous iterations, while the agent holding those items could instead take unallocated ones.
    \item The algorithm works only for agents with integer weights.
\end{enumerate}

We address these issues by adapting the \emph{General Yankee Swap (GYS)} algorithm introduced by Viswanathan et al. in \cite{viswanathan2023general}.

GYS starts with an empty allocation for all agents.
We add a dummy agent $i_0$ and assume that all items are initially assigned to $i_0$:
$A_{i_0} = M$.

\Cref{alg:binary-additive} presents our approach for finding a WEF-able allocation with a bounded subsidy.
The algorithm runs in $T$ iterations. 
We denote by $A^t$ the allocation at the end of iteration $t$.
Throughout this algorithm, we 
divide the agents into two sets:
\begin{enumerate}
    \item $R$: The agents remaining in the game at the beginning of the iteration $t$.
    \item $N \setminus R$: The agents who were removed from the game in earlier iteration $t' < t$. Agents are removed from the game when the algorithm deduces that their utility cannot be improved.
\end{enumerate}
As long as not all the objects have been allocated, at every iteration $t\in[T]$, the algorithm looks for the agents maximizing the \emph{gain function} (\citet{viswanathan2023general}) among $R$, i.e., the agents remaining in the game at this iteration.

We use the gain function: 
$\frac{w_i}{v_i(A_i^{t-1}) + 1}$, 
which selects agents with the minimal potential for increasing envy.
If multiple agents have the same value, we select one arbitrarily.





The selected agent then chooses either to acquire an unallocated item or take an item from another agent. In either case, their utility increases by 1. If the agent takes an item from another, the affected agent must decide whether to take an unallocated item or another allocated item to preserve their utility, and so on. This process creates a \emph{transfer path} from agent $i$ to the dummy agent $i_0$ , where items are passed until an unallocated item is reached.

Formally, we represent this as a directed graph, where nodes are agents, and an edge $(i,j)$ if and only if there exists an item in $j$'s bundle that $i$ values positively.
A \emph{transfer path} is any directed path in that graph, that ends at the dummy agent $i_0$.

When an agent is selected, the algorithm attempts to find a transfer path from that agent, preserving utilities for all agents except the initiator, whose utility increases by 1. If no path is found, the agent is removed from the game. We use the polynomial-time method by \citet{viswanathan2023general} to find transfer paths.

\Cref{alg:binary-additive} differs from $GYS$ in the following way: at the beginning of iteration $t$, the algorithm first removes all agents without a transfer path originating from them (line \ref{line:R}). Then, it selects an agent based on the gain function to allocate a new item to that agent.
For convenience, we denote by $R(t)$ the agents who have a transfer path originating from them at the beginning of iteration $t$ (line \ref{line:R}).

\begin{algorithm}[t]
\caption{
\label{alg:binary-additive}
Weighted Sequence Protocol For Additive Binary Valuations}
\begin{algorithmic}[1]
\Require Instance $(N,M,v, \mathbf{w})$ with additive binary valuations.
\Ensure WEF-able allocation $A$ with total required subsidy of at most $\frac{W}{w_1} - 1$.
\State $A_{i_{0}} \gets M$, and $A_{i}^0 \gets \emptyset$ for each $i\in N$ \Comment{All items initially are unassigned}
\State $t \gets 1$
\State $R \gets N$
\While{$R \neq \emptyset$}
    \State Remove from $R$ all agents who do not have a transfer path starting from them \label{line:R}
    \State $i^t \gets \arg \max_{i \in R} \left( \frac{w_i}{v_i(A_i^{t-1}) + 1}\right)$ \Comment{\footnotesize Choose the agent who maximizes the gain function \normalsize}
    \State Find a transfer path starting at $i^t$ 
    \Comment{\footnotesize For example, one can use the BFS algorithm to find a shortest path from $i^t$ to $i_0$. \normalsize}
    \State Transfer the items along the path and update the allocation $A^{t}$
        \State $t\gets t+1$
\EndWhile
\State return $A^t$
\end{algorithmic}
\end{algorithm}

The following example demonstrates \Cref{alg:binary-additive}:
\begin{example} \label{app:example: binary additive}
Consider two agents with weights $w_1 = 1$ and $w_2 = 2$, and five items. The valuation functions are:
\[
\begin{bmatrix}
      & o_1 & o_2 & o_3 & o_4 & o_5\\
  i_1 & 1   & 1   & 1   & 1   & 1\\
  i_2 & 1   & 1   & 1   & 1   & 0
\end{bmatrix}
\]
The algorithm is executed as follows:

\begin{enumerate}
  \item For $t=1$, the algorithm compares $\frac{1}{w_1} = \frac{1}{1}$, $\frac{1}{w_2} = \frac{1}{2}$. Consequently, the algorithm searches for a transfer path starting at $i_2$ and ending at $i_{\text{0}}$, and finds the path $(i_2, i_{\text{0}})$. The algorithm transfers the item $o_1$ to agent $i_2$ from $i_{\text{0}}$'s bundle, resulting in $A_1^1 = \emptyset$ and $A_2^1 = \{o_1\}$.
  \item For $t=2$, the algorithm compares $\frac{1}{w_1} = \frac{1}{1}$ and $\frac{v_2(A_2^1) + 1}{w_2} = \frac{2}{2}$. Since those values are equal, the algorithm arbitrarily selects agent $i_2$ and searches
  for a transfer path starting at $i_2$ and ending at $i_{\text{0}}$, and finds the path $(i_2, i_{\text{0}})$. The algorithm transfers the item $o_2$ to agent $i_2$, yielding $A_1^2 = \emptyset$ and $A_2^2 = \{o_1, o_2\}$.
  \item For $t=3$, the algorithm compares $\frac{1}{w_1} = 1$ and $\frac{v_2(A_2^2) + 1}{w_2} = \frac{3}{2}$. As a result, the algorithm searches for a transfer path starting at $i_1$ and ending at $i_{\text{0}}$, and finds the path $(i_1, i_{\text{0}})$. The algorithm transfers the item $o_3$ to agent $i_1$, producing $A_1^3 = \{o_3\}$ and $A_2^3 = \{o_1, o_2\}$.
  \item For $t=4$, the algorithm compares $\frac{v_1(A_1^3) +1}{w_1} = 2$ and $\frac{v_2(A_2^3) + 1}{w_2} = \frac{3}{2}$. Thus, the algorithm searches
   for a transfer path starting at $i_2$ and ending at $i_{\text{0}}$, and finds the path $(i_2, i_{\text{0}})$. The algorithm transfers the item $o_4$ to agent $i_2$, leading $A_1^4 = \{o_3\}$ and $A_2^4 = \{o_1, o_2, o_4\}$.
  \item For $t=5$, the algorithm compares $\frac{v_1(A_1^4) +1}{w_1} = 2$ and $\frac{v_2(A_2^4) + 1}{w_2} = \frac{4}{2} = 2$. Since those values are equal, the algorithm arbitrarily selects agent $i_2$ and searches
  for a transfer path starting at $i_2$ and ending at $i_{\text{0}}$, and finds the path $(i_2, i_1, i_{\text{0}})$. The algorithm transfers the item $o_3$ to agent $i_2$ from $i_1$'s bundle and the item $o_5$ to agent $i_1$ from $i_{\text{0}}$'s bundle, leading $A_1^5 = \{o_5\}$ and $A_2^5 = \{o_1, o_2, o_3, o_4\}$.
  \item Agent 1 envies agent 2 by $\frac{4}{2} - \frac{1}{1} = 1$, while agent 2 envies agent 1 by $0 - \frac{4}{2} < 0$.
  \item In order to mitigate envy, $s_{1} = 1$ and $s_{2} = 0$.
\end{enumerate}
\end{example}

\begin{definition} (\citet{viswanathan2023general})
    An allocation $A$ is said to be \textit{non-redundant} if for all $i\in N$, we have $v_{i}(A_i) = |A_i|$.
\end{definition}
That is, $v_j(A_i) \leq |A_i| = v_i(A_i)$ for every $i,j \in N$.
Similarly to \Cref{thm:wefable--iff-no-cycles1}, we can prove the following:

\begin{lemma} \label{non-redundant is WEF-able}
    With binary valuations, every non-redundant allocation is WEF-able.
\end{lemma}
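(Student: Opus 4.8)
<br>

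The plan is to mimic the proof of Lemma~\ref{thm:wefable--iff-no-cycles1}, which established that every allocation with identical valuations is WEF-able. Recall that the crucial property used there was that the cost of a cycle $C = (i_1,\ldots,i_r)$ telescopes: because $cost_A(i,j) = \frac{v(A_j)}{w_j} - \frac{v(A_i)}{w_i}$ with a \emph{common} valuation $v$, the sum around the cycle collapses to $0$. With binary additive (but not identical) valuations, the term $\frac{v_{i_k}(A_{i_{k+1}})}{w_{i_{k+1}}}$ depends on \emph{whose} valuation we use, so the telescoping fails in general --- but the non-redundancy hypothesis is exactly what we need to patch this.

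First I would invoke Theorem~\ref{thm:wefable--iff-no-cycles}: it suffices to show that $G_{A,w}$ has no positive-cost cycle. So fix an arbitrary directed cycle $C = (i_1,\ldots,i_r)$ with $i_{r+1} := i_1$. Its cost is
\begin{equation*}
cost_A(C) = \sum_{k=1}^{r} \left( \frac{v_{i_k}(A_{i_{k+1}})}{w_{i_{k+1}}} - \frac{v_{i_k}(A_{i_k})}{w_{i_k}} \right).
\end{equation*}
Now I would use non-redundancy in two ways. For the ``own bundle'' terms, non-redundancy gives $v_{i_k}(A_{i_k}) = |A_{i_k}|$. For the ``other bundle'' terms, the remark right after the definition notes $v_j(A_i) \le |A_i| = v_i(A_i)$ for all $i,j$; applying this with $j = i_k$ and $i = i_{k+1}$ yields $v_{i_k}(A_{i_{k+1}}) \le |A_{i_{k+1}}| = v_{i_{k+1}}(A_{i_{k+1}})$. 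Substituting these bounds,
\begin{equation*}
cost_A(C) \le \sum_{k=1}^{r} \left( \frac{|A_{i_{k+1}}|}{w_{i_{k+1}}} - \frac{|A_{i_k}|}{w_{i_k}} \right) = 0,
\end{equation*}
where the last equality is the telescoping/reindexing argument exactly as in Lemma~\ref{thm:wefable--iff-no-cycles1}, since $\sum_k \frac{|A_{i_{k+1}}|}{w_{i_{k+1}}} = \sum_k \frac{|A_{i_k}|}{w_{i_k}}$ over a cycle. Hence every cycle has non-positive cost, and by Theorem~\ref{thm:wefable--iff-no-cycles} the allocation $A$ is WEF-able.

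I do not expect a serious obstacle here: the only subtlety is making sure the inequality goes in the right direction, i.e. that replacing $v_{i_k}(A_{i_{k+1}})$ by the (weakly larger) quantity $|A_{i_{k+1}}|$ and replacing $v_{i_k}(A_{i_k})$ by the (exactly equal) quantity $|A_{i_k}|$ can only \emph{increase} $cost_A(C)$, so an upper bound of $0$ on the modified sum transfers back to the original. That direction is fine because $v_{i_k}(A_{i_{k+1}})$ appears with a $+$ sign. The rest is the same telescoping bookkeeping already carried out for the identical-valuations case, so the proof is short.
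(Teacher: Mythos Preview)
Your proposal is correct and matches the paper's own proof essentially line for line: both invoke Theorem~\ref{thm:wefable--iff-no-cycles}, bound $v_{i_k}(A_{i_{k+1}})$ above by $|A_{i_{k+1}}|=v_{i_{k+1}}(A_{i_{k+1}})$ using non-redundancy, and then observe that the resulting sum telescopes to~$0$ around the cycle. The only cosmetic difference is that the paper writes the upper bound as $v_{i_{k+1}}(A_{i_{k+1}})$ rather than $|A_{i_{k+1}}|$, which are equal by non-redundancy.
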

\begin{proof}
Consider a non-redundant allocation $A$ over $m$ items and $n$ agents with binary additive valuations $v$ and weights $w$. 
Let $C=(i_{1}, ..., i_{r})$ cycle in $G_{A,w}$. 
Then,
\begin{equation}
\begin{split}
    & cost_{A}(C) 
    = \\ & \sum_{k = 1}^{r-1} \left(\frac{v_{i_{k}}(A_{i_{k+1}})}{w_{i_{k+1}}} - \frac{v_{i_{k}}(A_{i_{k}})}{w_{i_{k}}}\right) + \frac{v_{i_{r}}(A_{i_1})}{w_{i_1}} - \frac{v_{i_{r}}(A_{i_r})}{w_{i_r}} \leq \\ 
    & \sum_{k = 1}^{r-1} \left(\frac{v_{i_{k+1}}(A_{i_{k+1}})}{w_{i_{k+1}}} - \frac{v_{i_{k}}(A_{i_{k}})}{w_{i_{k}}}\right) + \frac{v_{i_{1}}(A_{i_1})}{w_{i_1}} - \frac{v_{i_{r}}(A_{i_r})}{w_{i_r}} = \\
    &\sum_{k=1}^{r}\left(\frac{v_{i_{k}}(A_{i_k})}{w_{i_k}} - \frac{v_{i_{k}}(A_{i_k})}{w_{i_k}}\right) = 0,
\end{split}
\end{equation}
where the inequality holds due to non-redundancy. Hence, by condition \ref{condition_b_theroem_1} of \Cref{thm:wefable--iff-no-cycles}, $A$ is WEF-able. 
\end{proof}
%
Lemma 3.1 in \cite{viswanathan2023general} shows that the allocation produced by GYS is non-redundant. The same is true for our variant:
\begin{lemma}
\label{app:thm:wefable--iff-no-cycles9} 
At the end of any iteration $t$ of 
\Cref{alg:binary-additive},
the allocation $A^t$ is non-redundant. 
\end{lemma}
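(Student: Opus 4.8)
The plan is to argue by induction on the iteration count $t$, using the transfer-path mechanism to show that non-redundancy is preserved. Recall that non-redundancy of $A^t$ means $v_i(A_i^t) = |A_i^t|$ for every $i \in N$, i.e., every agent values every item in their own bundle at $1$. The base case $t = 0$ is immediate: $A_i^0 = \emptyset$ for every real agent, so $v_i(A_i^0) = 0 = |A_i^0|$, and the dummy agent $i_0$ is excluded from the non-redundancy requirement (or we note that only real agents' bundles matter for the weighted envy graph $G_{A,w}$).

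For the inductive step, assume $A^{t-1}$ is non-redundant and consider iteration $t$. The algorithm selects agent $i^t \in R$ and finds a transfer path $(i^t = j_0, j_1, \dots, j_\ell, i_0)$ ending at the dummy agent. Along this path, for each consecutive pair $(j_k, j_{k+1})$, an item $o_k \in A_{j_{k+1}}^{t-1}$ with $v_{j_k}(o_k) = 1$ is moved from $j_{k+1}$'s bundle to $j_k$'s bundle; the final item moved to $j_\ell$ comes from $A_{i_0}^{t-1}$ and, by definition of the transfer graph, is valued at $1$ by $j_\ell$. First I would observe that each intermediate agent $j_{k+1}$ ($k \geq 0$) loses one item and gains one item, so $|A_{j_{k+1}}^t| = |A_{j_{k+1}}^{t-1}|$; moreover the item it gains, $o_{k+1}$, satisfies $v_{j_{k+1}}(o_{k+1}) = 1$ by construction of the edge $(j_{k+1}, j_{k+2})$ in the transfer graph, and the item it loses was valued at $1$ by the induction hypothesis. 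Hence $v_{j_{k+1}}(A_{j_{k+1}}^t) = v_{j_{k+1}}(A_{j_{k+1}}^{t-1}) = |A_{j_{k+1}}^{t-1}| = |A_{j_{k+1}}^t|$. For the initiator $i^t = j_0$: it only gains the item $o_0$, which it values at $1$ (again by the transfer-graph edge definition), so $v_{i^t}(A_{i^t}^t) = v_{i^t}(A_{i^t}^{t-1}) + 1 = |A_{i^t}^{t-1}| + 1 = |A_{i^t}^t|$. For every agent not on the path, the bundle is unchanged, so non-redundancy is inherited directly. This establishes that $A^t$ is non-redundant for every $t$.

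The main obstacle — really the only point requiring care — is handling the semantics of the transfer graph correctly: one must be precise that an edge $(i,j)$ in the transfer graph exists only when $j$ holds an item that $i$ values positively (hence at $1$, by binary valuations), and that the transfer-path procedure of \citet{viswanathan2023general} moves exactly such a positively-valued item along each edge. Once this bookkeeping is set up, the invariant "gained item always valued at $1$, lost item always valued at $1$ (by induction)" propagates cleanly, and the net-zero size change at intermediate agents together with the $+1$ change at the initiator closes the argument. I would also briefly remark that removing agents from $R$ in line~\ref{line:R} does not affect the allocation and hence does not threaten the invariant.
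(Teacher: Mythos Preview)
Your proposal is correct and follows essentially the same approach as the paper: induction on $t$, with the inductive step handled by tracking the transfer path and observing that each gained item is valued at $1$ by the transfer-graph edge definition, each lost item is valued at $1$ by the induction hypothesis, and uninvolved agents are unaffected. Your additional remark that line~\ref{line:R} does not alter the allocation is a small clarification not present in the paper's version, but otherwise the arguments coincide.
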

\begin{proof}
We prove by induction that at the end of each iteration $t$, $A^t$ remains non-redundant.

For the base case, $A^0$ is an empty allocation and is therefore non-redundant. 
Now, assume that at the end of iteration $t-1$, $A^{t-1}$ is non-redundant. 

If $A^{t} = A^{t-1}$, meaning no agent received a new item, the process is complete. Otherwise, let $i^t$ be the agent who receives new item. Agent $i^t$ obtains an item via the transfer path $P = (i^t = i_1, \ldots, i_k)$. For each $1 \leq j < k$, agent $i_j$ receives the item $o_j$ from the bundle of $i_{j+1}$, given that $v_{i_j}(o_j) = 1$. Agent $i_k$ receives a new item $o_k$ from the bundle of $i_0$, with $v_{i_k}(o_k) = 1$.

Additionally, for each $1 < j \leq k$, item $o_{j-1}$ is removed from agent $i_j$'s bundle where $v_{i_j}(o_{j-1}) = 1$, since $A^{t-1}$ is non-redundant.

For agents not on the transfer path $P$, their bundles remain unchanged. Thus, for each agent $i \in N$, it holds that $v_i(A_i^t) = v_i(A_i^{t-1}) = |A_i^{t-1}| = |A_i^t|$, confirming that $A^t$ is non-redundant.
\end{proof}

    Based on \Cref{app:thm:wefable--iff-no-cycles9} it is established that at the end of every iteration $t\in[T]$, $A^{t}$ is 
    WEF-able. 
    The remaining task is to establish subsidy bounds. 
    
    We focus on two groups: $R$ and $N \setminus R$. 
    
    The selection rule simplifies limit-setting for $R$ and ensures a subsidy bound of 1 (\Cref{app:Alg4_sub_in_game}). 
%
    However, understanding the dynamics of the second group, $N\setminus R$, presents challenges, as the selection rule is not applicable for them.
    For an agent $i \in N \setminus R$, we prove a subsidy bound of $w_i\cdot \frac{1}{w_j}$,
    for some $j\in R$. In particular, the bound is at most $\frac{w_i}{w_1}$.

 \begin{observation}
\label{obs:non-redundant}
Let $A$ be any non-redundant allocation.
Let $P=(i,\ldots,j)$ be a path in $G_{A,w}$.
Then $cost_A(P) \leq \frac{|A_j|}{w_j} - \frac{|A_{i}|}{w_{i}}$.
\end{observation}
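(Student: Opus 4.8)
The statement to prove is Observation~\ref{obs:non-redundant}: for a non-redundant allocation $A$ and a path $P=(i,\ldots,j)$ in $G_{A,w}$, we have $cost_A(P) \leq \frac{|A_j|}{w_j} - \frac{|A_i|}{w_i}$. The plan is to mimic the telescoping argument used in Observation~\ref{cost_of_path_identical_valuatoins} and Lemma~\ref{non-redundant is WEF-able}, but replacing equalities by the non-redundancy inequality at each intermediate vertex.

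First I would write $P = (i_1, \ldots, i_k)$ with $i_1 = i$ and $i_k = j$, and expand $cost_A(P) = \sum_{\ell=1}^{k-1} cost_A(i_\ell, i_{\ell+1}) = \sum_{\ell=1}^{k-1}\left(\frac{v_{i_\ell}(A_{i_{\ell+1}})}{w_{i_{\ell+1}}} - \frac{v_{i_\ell}(A_{i_\ell})}{w_{i_\ell}}\right)$. The key step is then to bound each ``forward'' term $\frac{v_{i_\ell}(A_{i_{\ell+1}})}{w_{i_{\ell+1}}}$ using non-redundancy: since $A$ is non-redundant, $v_{i_\ell}(A_{i_{\ell+1}}) \leq |A_{i_{\ell+1}}| = v_{i_{\ell+1}}(A_{i_{\ell+1}})$, so the term is at most $\frac{v_{i_{\ell+1}}(A_{i_{\ell+1}})}{w_{i_{\ell+1}}} = \frac{|A_{i_{\ell+1}}|}{w_{i_{\ell+1}}}$. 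Similarly, by non-redundancy $v_{i_\ell}(A_{i_\ell}) = |A_{i_\ell}|$ exactly, so the ``backward'' term is $\frac{|A_{i_\ell}|}{w_{i_\ell}}$. After substituting these bounds, the sum $\sum_{\ell=1}^{k-1}\left(\frac{|A_{i_{\ell+1}}|}{w_{i_{\ell+1}}} - \frac{|A_{i_\ell}|}{w_{i_\ell}}\right)$ telescopes to $\frac{|A_{i_k}|}{w_{i_k}} - \frac{|A_{i_1}|}{w_{i_1}} = \frac{|A_j|}{w_j} - \frac{|A_i|}{w_i}$, which is exactly the claimed bound.

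There is essentially no obstacle here; the only thing to be careful about is handling the degenerate cases cleanly — a path of length zero ($i = j$, $k=1$) gives $cost_A(P) = 0 = \frac{|A_i|}{w_i} - \frac{|A_i|}{w_i}$, so the bound holds with equality, and the empty sum convention makes the telescoping argument go through uniformly. One should also note that the inequality is applied to the forward terms only (where $i_\ell \neq i_{\ell+1}$ along a simple directed path), while the diagonal terms use the exact non-redundancy identity $v_i(A_i) = |A_i|$; both directions are immediate consequences of the definition of non-redundancy, so no additional case analysis is needed.
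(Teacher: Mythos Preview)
Your proposal is correct and is exactly the argument the paper has in mind: expand $cost_A(P)$ edge by edge, use non-redundancy to replace each $v_{i_\ell}(A_{i_{\ell+1}})$ by $|A_{i_{\ell+1}}|$ and each $v_{i_\ell}(A_{i_\ell})$ by $|A_{i_\ell}|$, and then telescope, just as in the proof of Lemma~\ref{non-redundant is WEF-able}. The paper states this as an observation without writing out the proof, since it is the path analogue of that lemma's cycle computation.
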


\label{Omitted Details: binary additive}
\begin{lemma}
\label{lem:non-redundant-difference}
    Let $j\in N$ be any agent, if
    $i\in R(t)$, then 
    $
        \frac{|A^t_j|}{w_j}
        -
        \frac{|A^t_i|}{w_i}
        \leq
        \frac{1}{w_i}.
    $
\end{lemma}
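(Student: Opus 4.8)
The statement to prove is \Cref{lem:non-redundant-difference}: for any agent $j \in N$, if $i \in R(t)$ then $\frac{|A^t_j|}{w_j} - \frac{|A^t_i|}{w_i} \leq \frac{1}{w_i}$. The plan is to exploit the selection rule of \Cref{alg:binary-additive} together with the fact that the iteration in which $i$ last received an item determines a lower bound on $v_i(A^t_i) = |A^t_i|$ (using non-redundancy, \Cref{app:thm:wefable--iff-no-cycles9}). The key observation is that whenever an agent in $R$ is chosen, it is the one \emph{minimizing} the gain-function value $\frac{w_i}{v_i(A_i^{t-1})+1}$ — wait, actually \emph{maximizing} it; I will need to be careful about the direction, since the gain function $\frac{w_i}{v_i(A_i^{t-1})+1}$ is large when $v_i(A_i^{t-1})$ is small relative to $w_i$, i.e., the chosen agent is the one with the lowest current ``value per unit weight'' after the prospective increment. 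So the invariant I want is roughly: for every $i \in R(t)$ and every $j$, the quantity $\frac{|A^t_i|+1}{w_i}$ is at least as small as the corresponding quantity for any agent who received an item up to iteration $t$, and in particular $\frac{|A^t_i| + 1}{w_i} \geq \frac{|A^t_j|}{w_j}$, which rearranges to exactly the claimed inequality.

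\textbf{Main steps.} First I would fix $i \in R(t)$ and consider the last iteration $t' \leq t$ at which agent $i$ was selected and received an item (if $i$ never received an item, then $|A^t_i| = 0$ and since $i \in R(t)$ it still had a transfer path, but I'd handle this case by noting $j$'s bundle size can be bounded by the iteration count and the selection rule — actually the cleaner route is: if $|A^t_i| = 0$, I need $\frac{|A^t_j|}{w_j} \leq \frac{1}{w_i}$; this should follow because at each iteration where some agent $k$ got their first item, the gain function comparison with $i$ — who was in $R$ at that time, or wait, $i$ might not have been in $R$). Let me restructure: the robust approach is induction on $t$. Base case $t=0$: all bundles empty, inequality is $0 \leq \frac{1}{w_i}$, true. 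Inductive step: assume the claim holds at $t-1$ for all agents in $R(t-1)$; show it at $t$. Let $i^t$ be the selected agent and $P$ the transfer path. The bundle sizes only change along $P$: $i^t$ gains one, intermediate agents are unchanged, $i_0$ loses one. So for agents $i \in R(t)$ with $i \neq i^t$ whose bundle didn't change, I need the inequality to still hold — but $|A^t_j|$ may have increased if $j = i^t$. When $j = i^t$: I use $i \in R(t) \subseteq R(t)$, so at the selection step $i$ was a candidate and $i^t$ was chosen over $i$, meaning $\frac{w_{i^t}}{v_{i^t}(A^{t-1}_{i^t})+1} \geq \frac{w_i}{v_i(A^{t-1}_i)+1}$, i.e. $\frac{|A^{t-1}_{i^t}|+1}{w_{i^t}} \leq \frac{|A^{t-1}_i|+1}{w_i}$. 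Since $|A^t_{i^t}| = |A^{t-1}_{i^t}|+1$ and $|A^t_i| = |A^{t-1}_i|$ (as $i \neq i^t$ and $i$ is not strictly interior to $P$ in a bundle-changing way — intermediate agents keep their size), this gives $\frac{|A^t_{i^t}|}{w_{i^t}} \leq \frac{|A^t_i|+1}{w_i}$, which is exactly the claim for $j = i^t$. For $j \neq i^t$, $|A^t_j| \leq |A^{t-1}_j|$, so if $i$ was in $R(t-1)$ the inductive hypothesis gives it directly; if $i$ was newly in $R$ at time $t$ (i.e., $i \notin R(t-1)$), I need a separate argument — but note $R$ only ever shrinks across the main loop (agents are removed, never re-added), so actually $R(t) \subseteq R(t-1)$ and this case does not arise. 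That is the crucial structural fact that makes the induction go through cleanly.

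\textbf{Expected obstacle.} The main subtlety is carefully tracking which bundle sizes change during a transfer: only the initiator $i^t$ and the dummy $i_0$ have their cardinalities altered; every interior agent on the transfer path swaps one item for another and keeps $|A_i|$ fixed (this is precisely what \Cref{app:thm:wefable--iff-no-cycles9} and its proof establish via non-redundancy). I must also make sure that when $j = i_0$ the statement is either vacuous or handled — but $i_0$ is a dummy agent not in $N$, so $j$ ranges only over real agents and the only bundle that can grow is $i^t$'s. The other delicate point is the monotonicity of $R$: I should state explicitly that once an agent is removed from $R$ it is never re-inserted, so $R(t) \subseteq R(t-1)$, which both rules out the annoying ``newly added'' case and guarantees the inductive hypothesis applies to every $i \in R(t)$. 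With these facts in hand the inequality $\frac{|A^t_j|}{w_j} - \frac{|A^t_i|}{w_i} \leq \frac{1}{w_i}$ follows in two short cases (whether or not $j = i^t$), so the real work is entirely in setting up the bookkeeping correctly rather than in any computation.
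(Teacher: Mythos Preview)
Your proposal is correct and takes essentially the same approach as the paper: both hinge on the selection-rule comparison between $i$ and $j$ at the iteration where $j$ receives an item, combined with the monotonicity of $R$ (so $i\in R(t)\subseteq R(t')$) and the non-decrease of real agents' bundle sizes along transfer paths. The only difference is organizational --- you package the argument as induction on $t$ (with the nontrivial case being $j=i^t$), whereas the paper jumps directly to the last iteration $t'\le t$ at which $j$ was selected and applies the selection rule once there.
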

\begin{proof}
If $j$ has never been selected to receive an item, then $|A^t_j|=0$ and the lemma is trivial.

Otherwise, let $t'\leq t$ be the latest iteration in which $j$ was selected.
As agents can not be added to $R$ and by the selection rule, 
$\frac{v_j(A_j^{t'-1}) + 1}{w_j} 
\leq 
\frac{v_i(A_i^{t'-1}) + 1}{w_i}$.
Then by non-redundancy, 
$$\frac{|A_j^{t'}|}{w_j} = \frac{v_j(A_j^{t'})}{w_j} = \frac{v_j(A_j^{t'-1}) + 1}{w_j} \leq \frac{v_i(A_i^{t'-1}) + 1}{w_i} = \frac{v_i(A_i^{t'})}{w_i} + \frac{1}{w_i} = \frac{|A_i^{t'}|}{w_i} + \frac{1}{w_i}.$$ 
As $|A_j^{t}|=|A_j^{t'}|$
and $|A_i^{t}| \geq |A_i^{t'}|$,
the lemma follows.
\end{proof}
From \Cref{lem:non-redundant-difference}, we can conclude the following: 
\begin{proposition}\label{app:Alg4_sub_in_game}
If 
 $i\in R(t)$, 
 then 
 $\ell_i(A^t)\leq \frac{1}{w_i}$.
\end{proposition}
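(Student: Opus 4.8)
The plan is to combine Observation~\ref{obs:non-redundant}, which bounds the cost of a path in a non-redundant allocation by the difference of normalized bundle sizes at the two endpoints, with Lemma~\ref{lem:non-redundant-difference}, which bounds precisely that difference by $1/w_i$ whenever the starting agent $i$ is in $R(t)$. Concretely, fix an iteration $t$ and an agent $i \in R(t)$, and let $P = (i,\ldots,j)$ be the maximum-cost path in $G_{A^t,w}$ starting at $i$, so that $\ell_i(A^t) = cost_{A^t}(P)$. Since $A^t$ is non-redundant by Lemma~\ref{app:thm:wefable--iff-no-cycles9}, Observation~\ref{obs:non-redundant} gives $cost_{A^t}(P) \leq \frac{|A^t_j|}{w_j} - \frac{|A^t_i|}{w_i}$. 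Now apply Lemma~\ref{lem:non-redundant-difference} with this $j$ and with $i \in R(t)$: the right-hand side is at most $\frac{1}{w_i}$. Chaining the two inequalities yields $\ell_i(A^t) \leq \frac{1}{w_i}$, which is exactly the claim.

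There is essentially no obstacle here: both ingredients are already proved, and the only thing to check is that they can legitimately be composed. The one point worth stating explicitly is that $\ell_i(A^t)$ is well-defined and finite — this follows because $A^t$ is WEF-able (being non-redundant, by Lemma~\ref{non-redundant is WEF-able}), hence $G_{A^t,w}$ has no positive-cost cycles by Theorem~\ref{thm:wefable--iff-no-cycles}, so a maximum-cost path from $i$ exists. One should also note that the maximum-cost path from $i$ can be taken to be simple (a path, not containing a cycle), since any cycle on it has non-positive cost and can be removed without decreasing the total cost; this ensures the endpoint $j$ is a genuine agent to which Lemma~\ref{lem:non-redundant-difference} applies.

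In short, the proof is a two-line deduction: write $\ell_i(A^t) = cost_{A^t}(P)$ for the optimal path $P=(i,\ldots,j)$, bound it above by $\frac{|A^t_j|}{w_j} - \frac{|A^t_i|}{w_i}$ via Observation~\ref{obs:non-redundant}, and bound that in turn by $\frac{1}{w_i}$ via Lemma~\ref{lem:non-redundant-difference}. The "hard part," such as it is, was already absorbed into Lemma~\ref{lem:non-redundant-difference} (tracking the selection rule and the latest iteration in which $j$ was chosen); here we merely invoke it.
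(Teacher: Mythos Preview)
Your proof is correct and follows essentially the same approach as the paper: take the maximum-cost path $P=(i,\ldots,j)$, apply \Cref{obs:non-redundant} to bound $\ell_i(A^t)\le \frac{|A^t_j|}{w_j}-\frac{|A^t_i|}{w_i}$, and then invoke \Cref{lem:non-redundant-difference} (using $i\in R(t)$) to bound this by $\frac{1}{w_i}$. Your additional remarks about well-definedness of $\ell_i(A^t)$ and simplicity of the path are sound but not strictly needed, as the paper tacitly relies on the same facts.
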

\begin{proof}
Assume $P_i= (i, \ldots, j)$ is the path with the highest total cost starting at $i$ in the $G_{A^t,w}$, i.e., $cost_{A^t}(P_i) = \ell_i(A^t)$. 
\Cref{obs:non-redundant} implies $\ell_i(A^t) \leq \frac{|A_j^t|}{w_{j}} - \frac{|A_i^t|}{w_{i}}$. 
As $i\in R(t)$,
\Cref{lem:non-redundant-difference} implies 
$\frac{|A_j^{t}|}{w_j} - \frac{|A_i^t|}{w_{i}} \leq \frac{1}{w_i}$.
\end{proof}


To prove this upper bound, we need to establish several claims about agents removed from the game. First, we show that an agent removed from the game does not desire any item held by an agent who remains in the game (\Cref{prop: valuation of removed agent}). As a result, these removed agents will not be included in any transfer path (\Cref{prop: removed agent and transfer path}).

Next, we demonstrate that if the cost of a path originating from one of these removed agents at the end of iteration $t$ exceeds the cost at the end of iteration $t' \leq t$ --- the iteration when the agent was removed --- then there exists an edge in this path, $(i_j, i_{j+1})$, such that $v_i(A_j^t) = 0$ (\Cref{at least one 0 valuation}). Based on these claims, we prove that if at the end of iteration $t$, the cost of the maximum-cost path starting from agent removed from the game at $t' < t$ exceeds its cost at $t'$, we can upper-bound it by $\frac{1}{w_1}$. 

\begin{proposition} \label{prop: valuation of removed agent}
    Let $i$ be an agent removed from the game at the start of iteration $t'$.
Then for all $j \in R(t')$, $v_i(A^{t'}_j)=0$.

Moreover, for all $t > t'$
 and all $j\in R(t)$, 
 $v_i(A^{t}_j)=0$.
\end{proposition}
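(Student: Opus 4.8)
The statement has two parts: a ``base'' claim about iteration $t'$ itself, and a ``persistence'' claim for all later iterations $t>t'$. I would prove the base claim first and then bootstrap to persistence.

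For the base claim, recall that $i$ is removed from the game at the start of iteration $t'$ precisely because, at that moment, \emph{no transfer path originates from $i$} in the current graph. A transfer path is a directed path ending at the dummy agent $i_0$, where there is an edge $(a,b)$ iff $b$ holds an item that $a$ values positively. Suppose, for contradiction, that $v_i(A^{t'}_j)>0$ for some $j\in R(t')$, i.e.\ $j$ is still in the game at iteration $t'$. Then there is an edge $i\to j$ in the transfer graph. Since $j\in R(t')$, by definition $j$ has a transfer path starting from it --- a directed path from $j$ to $i_0$. Prepending the edge $i\to j$ to this path yields a transfer path starting at $i$, contradicting the fact that $i$ was removed. (One minor point to check: the prepended path can be taken to be simple, i.e.\ not revisiting $i$; if $i$ appears on $j$'s transfer path we can shortcut it, since the portion from $i$ onward is itself a transfer path from $i$ --- again a contradiction.) This establishes $v_i(A^{t'}_j)=0$ for all $j\in R(t')$.

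For the persistence claim, I would argue by induction on $t\geq t'$ that for every $j\in R(t)$, $v_i(A^t_j)=0$. The base case $t=t'$ is the claim just proved (noting $R(t')$ is the relevant set). For the inductive step, assume it holds at iteration $t-1$ for all $j\in R(t-1)$, and consider iteration $t$ and some $j\in R(t)$. The set $R$ only shrinks over time, so $R(t)\subseteq R(t-1)$, hence $v_i(A^{t-1}_j)=0$ by the inductive hypothesis. Now examine how $j$'s bundle can change from $A^{t-1}$ to $A^t$: it changes only if $j$ lies on the transfer path chosen in iteration $t$. Along a transfer path, the only items that enter $j$'s bundle are items that some agent on the path \emph{already held} (or, at the end of the path, an unallocated item from $i_0$). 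Here is where I would invoke the companion fact (\Cref{prop: removed agent and transfer path}, referenced in the text as forthcoming) that a removed agent such as $i$ never appears on any transfer path after $t'$; but more directly, I need that every item $j$ receives at iteration $t$ is an item that $i$ does not value. The items $j$ receives come from the bundle of its predecessor on the path, call it $a$, or from $i_0$. If the source is $i_0$, the item was unallocated, and since we assumed every item is valued positively by at least one \emph{agent}, but not necessarily by $i$ --- so I need a cleaner argument: the item transferred from $a$ to $j$ is an item $j$ values positively ($v_j(\cdot)=1$); I must show $i$ does not value it.

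\textbf{Main obstacle.} The delicate point is exactly the last one: controlling which items can migrate \emph{into} $j$'s bundle at iteration $t$ and showing $i$ values none of them. The clean way is to strengthen the induction: maintain the invariant that, for \emph{every} agent $a$ still in the game at iteration $t$ (i.e.\ $a\in R(t)$), $v_i(A^t_a)=0$, and separately track that items held by $i_0$ are never valued by $i$ either --- but that last part is false in general (an unallocated item may well have $v_i=1$). So the real structure must be: once $i$ is removed, the algorithm never routes an $i$-valued item through an in-game agent. I would prove the contrapositive cycle-style: if at iteration $t$ some in-game agent $j$ ends up holding an item $o$ with $v_i(o)=1$, trace back along the transfer path that delivered $o$ to $j$; the path started at the selected agent $i^t$ and $o$ travelled from $i_0$ (or from an earlier in-game holder who, by induction at $t-1$, did not hold an $i$-valued item --- contradiction) meaning $o$ was unallocated at the start of iteration $t$. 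But then at the start of iteration $t$, $i$ had an edge to $i_0$ (who held $o$), so $i\to i_0$ is a transfer path from $i$ --- impossible once $i$ is removed, \emph{provided $i\in R(t)$}; since $i\notin R(t)$ this is vacuous. The resolution: show instead that if $o$ was unallocated and $v_i(o)=1$, then $i$ would \emph{not have been removed at $t'$} in the first place --- i.e.\ the item $o$ must have been allocated to some agent (possibly $i_0$... ) already at time $t'$. This requires tracking that the set of items held by $i_0$ only shrinks, so if $o\in A^t_{i_0}$ then $o\in A^{t'}_{i_0}$, and then at iteration $t'$ agent $i$ had edge $i\to i_0$, contradicting its removal. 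That is the crux, and I would present it as a short lemma: \emph{if $v_i(o)=1$ and $i$ was removed at $t'$, then $o$ was allocated to some agent in $N$ (not $i_0$) at every iteration $\geq t'$}, combined with \Cref{prop: removed agent and transfer path} to conclude such $o$ can only ever sit in the bundle of an already-removed agent, never an in-game one.
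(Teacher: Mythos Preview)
Your core inductive idea is sound and, once cleaned up, is actually simpler than the paper's argument --- but the proposal as written has two wrinkles worth flagging.

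\textbf{Base-case timing.} Your base argument really only shows $v_i(A^{t'-1}_j)=0$ for $j\in R(t')$: both the edge $i\to j$ you prepend and the transfer path from $j$ you concatenate with live in the graph built from $A^{t'-1}$ (that is where removal is decided). The statement asks about $A^{t'}$, i.e.\ \emph{after} one transfer step, so an item $i$ values might have arrived at $j$ during iteration $t'$. The paper handles this by a case split (``$o\in A^{t'-1}_j$'' versus ``$o$ arrived at $j$ during iteration $t'$''). You can instead simply anchor your induction at $A^{t'-1}$ and let the inductive step absorb the passage from $A^{t'-1}$ to $A^{t'}$.

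\textbf{Circularity with \Cref{prop: removed agent and transfer path}.} Your final stated plan invokes \Cref{prop: removed agent and transfer path}, but that proposition is proved \emph{using} the present one, so you cannot cite it here. Fortunately you do not need it: the argument you sketch just before it is already complete. The item $o$ that $j$ receives at iteration $t$ comes from $j$'s successor $a$ on the transfer path. If $a\in N$, then the suffix of the path starting at $a$ is itself a transfer path, so $a$ survived the removal at line~\ref{line:R} and hence $a\in R(t)\subseteq R(t-1)$; by the inductive hypothesis $v_i(A^{t-1}_a)=0$, whence $v_i(o)=0$. If $a=i_0$, then $o\in A^{t-1}_{i_0}\subseteq A^{t'-1}_{i_0}$ (as $i_0$'s bundle only shrinks), so $i$ had the edge $i\to i_0$ at the start of $t'$ --- contradicting its removal. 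That is all you need; the detour through \Cref{prop: removed agent and transfer path} is both circular and unnecessary. (You should, however, make explicit the one-line observation that every agent on the iteration-$t$ transfer path lies in $R(t)$.)

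\textbf{Comparison with the paper.} The paper's inductive step takes a different route: rather than invoke the hypothesis at $t-1$, it takes the offending item $o_j$, locates its holder $k$ back at time $t'$, then follows the chain of compensating items $o_k,o_l,\ldots$ (each desired by the previous holder when it released its item) until it reaches something that was in $A^{t'}_{i_0}$, thereby assembling a transfer path $(i,k,l,\ldots,i_0)$ \emph{at time $t'$}. Your one-step inductive argument is shorter and avoids that multi-iteration trace-back; the paper's argument, by contrast, makes the contradiction live entirely at time $t'$ and does not rely on the intermediate hypothesis.
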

\begin{proof}
    Suppose that $v_i(A_j^{t'}) \neq 0$. This implies there exists some item $o \in A_j^{t'}$ such that $v_i(o) = 1$. We consider two cases:
    \begin{enumerate}
        \item $o \in A_j^{t'-1}$. In this case, at the start of iteration $t'$, there exists a transfer path from $i$ to $j$.
        Moreover, there is a transfer path from $j$ to $i_0$ at the start of iteration $t'$ (otherwise, $j$ would have been removed from the game at $t'$ as well). 
        Concatenating these paths gives a transfer path from $i$ to $i_0$.
        
        \item $o \not \in A_j^{t'-1}$, that is, $j$ received item $o$ during iteration $t'$, from some other agent $j'$ (where $j' = i_0$ is possible).
        At the start of iteration $t'$, there exists a transfer path from $i$ to $j'$.
        Moreover, there is a transfer path from $j'$ to $i_0$, which is used to transfer the newly allocated item. 
        Concatenating these paths gives a transfer path from $i$ to $i_0$.
    \end{enumerate}
    Both cases contradict the assumption that $i$ was removed at $t'$.

To prove the claim for $t>t'$, we use induction over $t$. 
    
We assume the claim holds for iteration $t-1 > t'$ and prove it for iteration $t$. 
Assume, contrary to the claim, that there exists an agent $i$ who was removed at the start of iteration $t'$, and an agent $j \in R(t)$, such that $v_i(A_j^t) \neq 0$. By the induction hypothesis and the fact that an agent can not be added to $R$, we have $v_i(A_j^{t-1}) = 0$. Therefore, during iteration $t$, $j$ must have received a new item $o_j$ that $i$ values at $1$.

    If $o_j$ was part of $i_0$'s bundle at the start of iteration $t'$, then the transfer path starting at $i$, $(i, i_0)$, must have already existed at the start of iteration $t'$.

    Alternatively, if $o_j$ was originally in another agent's bundle at the start of iteration $t'$, say agent $k \in N$,
    then there must have been an iteration between $t'$ and $t$ in which
    $o_j$ has been transferred from $k$ to another agent, while agent $k$ is compensated by some other item $o_k$, that agent $k$ wants.
    
    If $o_k$ was part of $i_0$'s bundle at the start of iteration $t'$, then the transfer path starting at $i$, $(i, k, i_0)$, must have existed at the start of iteration $t'$. Otherwise, $o_k$ was in another agent's bundle at $t'$, and it, too, would be transferred to a different bundle in a later iteration. 
    
    Since the number of items is finite, this process must eventually lead to an item that was in $A_{i_0}^{t'}$, forming a transfer path starting at $i$ at the start of iteration $t'$ --- a contradiction.
    \end{proof}

\begin{proposition}
    \label{prop: removed agent and transfer path}
    Let $i$ be an agent removed from the game at the start of iteration $t'$. Then, for all $t \geq t'$, $i$ will not be included in any transfer path.
\end{proposition}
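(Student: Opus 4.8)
## Proof Proposal

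The plan is to reduce this statement to \Cref{prop: valuation of removed agent}, which we already have. The key observation is that a transfer path, by definition, is a directed path in the graph whose nodes are agents and whose edge $(a,b)$ exists if and only if there is an item in $b$'s bundle that $a$ values positively, and this path must terminate at the dummy agent $i_0$. So to show that an agent $i$ removed at iteration $t'$ never appears on a transfer path at any iteration $t \geq t'$, it suffices to show that no agent who could be on a transfer path at iteration $t$ ever has an item that $i$ would want to receive — i.e., that for every candidate successor of $i$ on such a path, $i$'s valuation of that agent's bundle is zero.

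First I would observe that if $i$ appears on a transfer path at iteration $t \geq t'$, then $i$ is not the terminal node (that is always $i_0 \neq i$), so $i$ has an outgoing edge $(i, j)$ on that path, meaning $v_i(A_j^{t-1}) \geq 1$ for some agent $j$ (or, depending on bookkeeping, $v_i(A_j^t) \geq 1$ with the pre-transfer allocation). Then I would argue that any such $j$ lying on a transfer path at iteration $t$ must itself be in $R(t)$: a transfer path is exactly what the algorithm searches for starting from the selected agent, and every internal node of such a path has, by construction, a transfer path originating from it to $i_0$ (namely, the suffix of the path), so it has not been removed from the game. Hence $j \in R(t)$ (or $j = i_0$, but $v_i(A_{i_0}^{t}) = 0$ is not guaranteed — see below).

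The subtle point, and the main obstacle, is the dummy agent $i_0$: the argument above shows every proper internal successor of $i$ on the path is in $R(t)$, so \Cref{prop: valuation of removed agent} gives $v_i(A_j^{t}) = 0$, contradicting $v_i(A_j^{t}) \geq 1$. But $i$'s successor on the path could be $i_0$ directly — i.e., the path is $(\dots, i, i_0)$ — and then we need $v_i(A_{i_0}^{t}) = 0$, i.e., $i$ values no unallocated item. This is precisely the negation of "$i$ has a transfer path of the form $(i, i_0)$", which would contradict the fact that $i$ was removed at $t'$ and never re-added (an agent is removed exactly when it has no transfer path, and $R$ only shrinks). More carefully: if $v_i(A_{i_0}^{t}) \geq 1$ at some $t \geq t'$, that unallocated item $o$ with $v_i(o)=1$ was either unallocated already at iteration $t'$ — giving $i$ the transfer path $(i,i_0)$ at $t'$, contradiction — or it became unallocated later, which is impossible since the algorithm never returns items to $i_0$ (transfers only move items out of $A_{i_0}$). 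So $v_i(A_{i_0}^{t}) = 0$ for all $t \geq t'$ as well, handling the remaining case.

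Putting it together: suppose for contradiction $i$ lies on a transfer path $P$ at some iteration $t \geq t'$. Then $i$ has a successor $j$ on $P$ with $v_i$ of $j$'s current bundle at least $1$. If $j = i_0$, this contradicts $v_i(A_{i_0}^{t}) = 0$ established above. If $j \in N$, then $j$ is an internal node of $P$ with a transfer path (the suffix of $P$) originating from it, so $j \in R(t)$, and \Cref{prop: valuation of removed agent} forces $v_i(A_j^{t}) = 0$, again a contradiction. Hence no such $P$ exists, proving the claim. I expect the only real care needed is the precise handling of which allocation snapshot ($A^{t-1}$ versus $A^t$) the edge condition refers to during the transfer, and making the "$i_0$ never regains items" observation explicit; everything else is a direct appeal to \Cref{prop: valuation of removed agent} and the monotonicity of $R$.
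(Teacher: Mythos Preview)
Your proposal is correct and follows essentially the same approach as the paper: both arguments handle the successor-equals-$i_0$ case by noting $v_i(A_{i_0})=0$ (the paper states this tersely, while you spell out the monotonicity of $A_{i_0}$), and both handle the successor-in-$N$ case by observing that any such successor has a transfer-path suffix to $i_0$, hence lies in $R(t)$, and then invoking \Cref{prop: valuation of removed agent}. Your write-up is more explicit about the suffix-of-path observation and the $A^{t-1}$ versus $A^t$ bookkeeping, but structurally it is the same proof.
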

\begin{proof}
    First, note that $v_i(A_{i_0}^t) = 0$; otherwise, $i$ would not have been removed at the start of iteration $t'$. Next, any agent $j$, who receives an item from $i_0$'s bundle at iteration $t$, must be in $R(t)$. By \Cref{prop: valuation of removed agent}, $v_i(A_j^t) = 0$, which means that $i$ can not receive any item from $j$, who in $R(t)$, as compensation for another item from their own bundle. Thus, any transfer path in iteration $t$ includes only agents in $R(t)$.
\end{proof}

\begin{proposition} \label{at least one 0 valuation}
    Consider an iteration $t$ and an agent $ i \notin R(t)$, who was removed from the game at the start of iteration $t' < t$.  
Let $ P = (i = i_1, \ldots, i_k) $ be a path in the weighted envy graph starting at $ i $.  
If $cost_{A^t}(P) > cost_{A^{t'-1}}(P) $, then there must exist $ j \in \{1, \ldots, k-1\}$ such that $ i_j \notin R $ at $ t $, $ i_{j+1} \in R(t) $ at $t$, and $ v_{i_j}(A_{i_{j+1}}^t) = 0 $.
\end{proposition}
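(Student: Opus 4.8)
The plan is to prove the statement by contradiction, after two reductions. First I would record a few structural facts about \Cref{alg:binary-additive}: a removed agent is never selected again and, by \Cref{prop: removed agent and transfer path}, never lies on a transfer path, so its bundle is \emph{frozen} from the iteration of its removal on; the sets $R(1)\supseteq R(2)\supseteq\cdots$ are nested since agents are only removed, never re-added; and each bundle size $|A^{s}_{a}|$ is non-decreasing in $s$, because in every iteration only the selected agent's bundle grows, by exactly one, while every other transfer-path bundle trades one item for another. Second, I would observe that the condition $v_{i_{j}}(A^{t}_{i_{j+1}})=0$ in the statement is automatic as soon as $i_{j}\notin R(t)$ and $i_{j+1}\in R(t)$: this is exactly \Cref{prop: valuation of removed agent}. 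Consequently it suffices to produce an index $j$ with $i_{j}\notin R(t)$ and $i_{j+1}\in R(t)$; and since $i_{1}=i\notin R(t)$, this is in turn equivalent to showing that \emph{some} vertex of $P$ belongs to $R(t)$.

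So suppose, for contradiction, that no vertex of $P$ lies in $R(t)$. A one-line induction along $P$ (if $i_{j}\notin R(t)$, the hypothesis rules out $i_{j+1}\in R(t)$) shows that every $i_{j}$ has been removed by iteration $t$; let $\tau_{j}\le t$ be the iteration at which $i_{j}$ is removed, so $\tau_{1}=t'$ and $A^{t}_{i_{j}}=A^{\tau_{j}-1}_{i_{j}}$. The core step is the identity
\[
cost_{A^{t}}(P)=\sum_{j=1}^{k-1} cost_{A^{\tau_{j}-1}}(i_{j},i_{j+1}),
\]
proved edge by edge: the tail term satisfies $v_{i_{j}}(A^{t}_{i_{j}})=v_{i_{j}}(A^{\tau_{j}-1}_{i_{j}})$ by freezing, and for the head term one checks $v_{i_{j}}(A^{t}_{i_{j+1}})=v_{i_{j}}(A^{\tau_{j}-1}_{i_{j+1}})$ as follows. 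If $v_{i_{j}}(A^{\tau_{j}-1}_{i_{j+1}})>0$ there is an arc $i_{j}\to i_{j+1}$ in the transfer graph of $A^{\tau_{j}-1}$; since $i_{j}$ has no transfer path at $A^{\tau_{j}-1}$ (it is removed at the start of iteration $\tau_{j}$), neither does $i_{j+1}$, so $\tau_{j+1}\le\tau_{j}$, whence $A^{t}_{i_{j+1}}=A^{\tau_{j+1}-1}_{i_{j+1}}=A^{\tau_{j}-1}_{i_{j+1}}$. If instead $v_{i_{j}}(A^{\tau_{j}-1}_{i_{j+1}})=0$, then $\tau_{j+1}\le\tau_{j}$ again (or directly), and \Cref{prop: valuation of removed agent} gives $v_{i_{j}}(A^{t}_{i_{j+1}})=0$ as well.

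It then remains to compare $\sum_{j} cost_{A^{\tau_{j}-1}}(i_{j},i_{j+1})$ with $cost_{A^{t'-1}}(P)=\sum_{j} cost_{A^{t'-1}}(i_{j},i_{j+1})$ and conclude that the former does not exceed the latter, contradicting the hypothesis. The tools available are non-redundancy ($v_{a}(A_{a})=|A_{a}|$ and $v_{a}(A_{b})\le|A_{b}|$, see \Cref{app:thm:wefable--iff-no-cycles9} and \Cref{obs:non-redundant}), the monotonicity of bundle sizes, the fact just used that $v_{i_{j}}(A^{\tau_{j}-1}_{i_{j+1}})>0$ forces $\tau_{j+1}\le\tau_{j}$ (so removal times are non-increasing along each maximal run of ``positive'' edges of $P$), and \Cref{prop: valuation of removed agent}. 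The intended argument decomposes $P$ into maximal runs of positive edges separated by ``zero'' edges; the first run starts at $\tau_{1}=t'$, so all its vertices are already frozen at $A^{t'-1}$ and its edge costs are unchanged between $A^{t'-1}$ and $A^{t}$, while the zero edge closing a run contributes $-|A^{t}_{i}|/w_{i}$ at its tail $i$, which is no larger than its cost at $t'-1$. I expect the main obstacle to be the bookkeeping for the later runs --- those beginning after iteration $t'$ --- where one must use ``no transfer path at the removal-confirming allocation'' to control when the successor bundles stop changing and then sum the per-run bounds; everything preceding that (the two reductions, the freezing/monotonicity facts, and the displayed identity) is routine.
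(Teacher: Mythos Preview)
Your reductions are sound: observing that $v_{i_j}(A^t_{i_{j+1}})=0$ is automatic once $i_j\notin R(t)$ and $i_{j+1}\in R(t)$ (via \Cref{prop: valuation of removed agent}), and hence that it suffices to exhibit a vertex of $P$ in $R(t)$, is exactly right. Your identity $cost_{A^t}(P)=\sum_j cost_{A^{\tau_j-1}}(i_j,i_{j+1})$ is also correct, and your case analysis proving the head-term equality is clean.

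But the proof proper lives entirely in the final comparison, and that is where your argument stops. You correctly note that along the first maximal run of positive edges all removal times are at most $\tau_1=t'$, so those bundles are already frozen at $A^{t'-1}$ and the edge costs are unchanged; and that the zero edge closing that run can only lose cost. The problem is the second and later runs: if the first vertex $i_{p+1}$ of the next run has $\tau_{p+1}>t'$, then along that run the removal times are only bounded above by $\tau_{p+1}$, not by $t'$, and the head term of an edge $(i_q,i_{q+1})$ in that run involves $v_{i_q}(A^{\tau_q-1}_{i_{q+1}})$, which can genuinely exceed $v_{i_q}(A^{t'-1}_{i_{q+1}})$ (since $i_{q+1}$'s bundle may have grown between $t'-1$ and $\tau_{q+1}-1$). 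There is no evident telescoping that absorbs these increases into the tail decreases. So the step you flag as ``the main obstacle'' is not just bookkeeping --- it is the entire content of the proposition, and your machinery does not obviously close it.

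The paper sidesteps this comparison entirely. Instead of assuming every $i_j\notin R(t)$ and trying to contradict the cost increase via your identity, it argues directly: if $cost_{A^t}(P)>cost_{A^{t'-1}}(P)$, then at the first iteration where the cost exceeds its value at $t'-1$, some edge cost must have strictly gone up, which forces some $i_{j'+1}$ to have received a new item that $i_{j'}$ values. That puts $i_{j'+1}$ on a transfer path, hence (by \Cref{prop: removed agent and transfer path}) in $R$ at that iteration. This immediately yields a vertex of $P$ in $R$, and your own first reduction finishes the argument. The key idea you are missing is to look at a single iteration where the cost jumps and exploit the transfer that occurred there, rather than to compare the frozen terminal allocation globally against $A^{t'-1}$.
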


\begin{proof}
Let $t > t'$ be the earliest iteration in which $cost_{A^t}(P) > cost_{A^{t'-1}}(P)$. This implies that there is at least one agent, say $1 \leq j' \leq k-1$, such that agent $i_{j'+1}$ has received a new item that $i_{j'}$ desires. In other words, $i_{j'+1}$ was part of a transfer path at the start of iteration $t$, and by \Cref{prop: removed agent and transfer path}, $i_{j'+1} \in R(t)$ (in particular, $j' \geq 2$).

Since $i_1 \notin R(t)$ and $i_{j'+1} \in R(t)$, there must exist some $1 \leq p < j'+1$ such that $i_p \notin R(t)$ at $t$ and $i_{p+1} \in R(t)$.
By \Cref{prop: valuation of removed agent}, $v_{i_p}(A^t_{i_{p+1}}) = 0$.
\end{proof}

\begin{proposition} \label{subsidy of agent not in the game}
Let $i\notin R(t)$, be an agent who was removed from the game at the start of iteration $t'<t$.
Then, for $A^t$ the resulting allocation from iteration $t$, $\ell_i(A^{t}) \leq \frac{1}{w_1}$.
\end{proposition}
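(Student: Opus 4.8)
The plan is to bound $\ell_i(A^t)$, the cost of the highest-cost path starting at $i$ in $G_{A^t,w}$, by comparing it against the moment $i$ left the game. First I would record the base estimate: since $i$ survived every removal step through iteration $t'-1$, either $t'\geq 2$ and $i\in R(t'-1)$, so \Cref{app:Alg4_sub_in_game} gives $\ell_i(A^{t'-1})\leq \tfrac{1}{w_i}\leq \tfrac{1}{w_1}$, or $t'=1$ and $A^{t'-1}=A^0$ is the empty allocation with $\ell_i(A^0)=0$. Either way $\ell_i(A^{t'-1})\leq \tfrac{1}{w_1}$, and hence $cost_{A^{t'-1}}(P)\leq \ell_i(A^{t'-1})\leq \tfrac{1}{w_1}$ for \emph{every} path $P$ starting at $i$ (the weighted envy graph is complete, so the same vertex sequence is a path under any allocation).

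Next I would fix $P=(i=i_1,\dots,i_k)$ to be a maximum-cost path starting at $i$ in $G_{A^t,w}$ (well-defined since $A^t$ is non-redundant by \Cref{app:thm:wefable--iff-no-cycles9}, hence WEF-able, hence $G_{A^t,w}$ has no positive-cost cycles by \Cref{thm:wefable--iff-no-cycles}), so $\ell_i(A^t)=cost_{A^t}(P)$, and split into two cases. If $cost_{A^t}(P)\leq cost_{A^{t'-1}}(P)$, the base estimate finishes it. The substantive case is $cost_{A^t}(P) > cost_{A^{t'-1}}(P)$, where I would invoke \Cref{at least one 0 valuation} to obtain an index $j\in\{1,\dots,k-1\}$ with $i_j\notin R(t)$, $i_{j+1}\in R(t)$, and $v_{i_j}(A^t_{i_{j+1}})=0$.

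With that $j$ fixed, I would decompose $cost_{A^t}(P)=cost_{A^t}(i_1,\dots,i_j)+cost_{A^t}(i_j,i_{j+1})+cost_{A^t}(i_{j+1},\dots,i_k)$ and bound each piece. The prefix is at most $\tfrac{|A^t_{i_j}|}{w_{i_j}}-\tfrac{|A^t_{i_1}|}{w_{i_1}}$ by \Cref{obs:non-redundant} (using non-redundancy of $A^t$). The middle edge equals $\tfrac{v_{i_j}(A^t_{i_{j+1}})}{w_{i_{j+1}}}-\tfrac{v_{i_j}(A^t_{i_j})}{w_{i_j}}=-\tfrac{|A^t_{i_j}|}{w_{i_j}}$, using $v_{i_j}(A^t_{i_{j+1}})=0$ and $v_{i_j}(A^t_{i_j})=|A^t_{i_j}|$. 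The suffix is a path starting at $i_{j+1}\in R(t)$, so it is at most $\ell_{i_{j+1}}(A^t)\leq \tfrac{1}{w_{i_{j+1}}}$ by \Cref{app:Alg4_sub_in_game}. Adding the three, the $\tfrac{|A^t_{i_j}|}{w_{i_j}}$ terms cancel and $-\tfrac{|A^t_{i_1}|}{w_{i_1}}\leq 0$, leaving $cost_{A^t}(P)\leq \tfrac{1}{w_{i_{j+1}}}\leq \tfrac{1}{w_1}$ since $w_1$ is the smallest weight.

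The main obstacle is the second case: one has to be sure \Cref{at least one 0 valuation} applies verbatim and delivers exactly the structural feature needed — an edge crossing from $N\setminus R(t)$ into $R(t)$ with zero value — because that is precisely what makes the middle-edge cost cancel the prefix bound from \Cref{obs:non-redundant}. Everything else is telescoping and the monotonicity $w_i\geq w_1$. Minor bookkeeping: the boundary case $t'=1$ (no iteration $0$), and the degenerate sub-paths when $j=1$ or $j+1=k$, all of which remain consistent with the displayed inequalities.
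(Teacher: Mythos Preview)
Your proof is correct and follows essentially the same approach as the paper: establish the base bound $\ell_i(A^{t'-1})\leq \tfrac{1}{w_i}$, split on whether the cost of the maximal path $P$ at time $t$ exceeds its cost at $t'-1$, and in the nontrivial case invoke \Cref{at least one 0 valuation} to locate a zero-value crossing edge that makes the telescoping decomposition collapse to $\ell_{i_{j+1}}(A^t)\leq \tfrac{1}{w_1}$. Your comparison of $cost_{A^t}(P)$ directly against $cost_{A^{t'-1}}(P)$ (rather than $\ell_i(A^t)$ against $\ell_i(A^{t'-1})$) matches the hypothesis of \Cref{at least one 0 valuation} more transparently, and your explicit handling of the $t'=1$ edge case is a nice bit of tidiness the paper glosses over.
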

\begin{proof}
Denote by $P_i^{t'-1}, P_i^t$ the highest-cost paths starting from $i$ at iterations $t'-1$ (before agent $i$ removed) and $t$, correspondingly. In particular, 
$$cost_{A^{t'-1}}(P_i^{t}) \leq cost_{A^{t'-1}}(P_i^{t'-1}) = \ell_i(A^{t'-1}).$$ 
Moreover, by \Cref{obs:non-redundant} we have $$cost_{A^{t'-1}}(P_i^{t'-1}) \leq \frac{|A_{i_{k}}^{t'-1}|}{w_{i_{k}}} - \frac{|A_{i}^{t'-1}|}{w_{i}}$$ when $i_k$ is the last agent in $P_i^{t'-1}$. Combined with \Cref{lem:non-redundant-difference}, this gives $\ell_i(A^{t'-1}) \leq \frac{1}{w_i}$.
Therefore, if
$cost_{A^{t}}(P_i^t) = \ell_i(A^t) \leq \ell_i(A^{t'-1})$, then we are done. 

Assume now that $\ell_i(A^t) > \ell_i(A^{t'-1})$. 
Denote the path $P_i^t$ by $(i = i_1,\ldots, i_k)$.

From \Cref{at least one 0 valuation}, there exists $j \in \{1,\ldots,k-1\}$ such that $i_j\notin R(t)$, $i_{j+1}\in R(t)$ and $v_{i_{j}}(A_{i_{j+1}}^t) = 0$.
Then, by \Cref{obs:non-redundant}: 
   \begin{align}
       & \ell_i(A^t) = 
        cost_{A^t}(i,...,i_j) + cost_{A^t}(i_{j}, i_{j+1}) + cost_{A^t}(i_{j+1},...,i_k) \leq \nonumber \\
       &\leq \left(\frac{|A_{i_j}^t|}{w_{i_j}} - \frac{|A_i^t|}{w_i}\right) + \left(0 - \frac{|A_{i_j}^t|}{w_{i_j}}\right) + \ell_{i_{j+1}}(A^t) \leq  \nonumber\\
       \label{cost of maximum path}
       & \ell_{i_{j+1}}(A^t).
   \end{align}
   Since $i_{j+1} \in R(t)$, it follows from \eqref{cost of maximum path} and \Cref{app:Alg4_sub_in_game} that \[\ell_i(A^t) \leq \ell_{i_{j+1}}(A^t) \leq \frac{1}{w_{i_{j+1}}} \leq 
   \frac{1}{w_1}.\]
\end{proof}
    
\begin{theorem}\label{app:theorem_21}
    For additive binary valuations, \Cref{alg:binary-additive} computes a WEF-able allocation where the subsidy to each agent $i\in N$ is at most $\frac{w_i}{w_1}$ in polynomial-time.
    Moreover, the total subsidy is bounded by $\frac{W}{w_1} - 1$.
\end{theorem}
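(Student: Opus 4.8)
The plan is to assemble \Cref{app:theorem_21} from the structural facts already proved about \Cref{alg:binary-additive}, so that little genuinely new work remains. \emph{Weighted-envy-freeability} is immediate: \Cref{app:thm:wefable--iff-no-cycles9} shows the returned allocation $A$ is non-redundant, and \Cref{non-redundant is WEF-able} then gives that $A$ is WEF-able. By \Cref{max_path_subsidy}, the optimal envy-eliminating subsidy is $s^*_i = w_i\cdot\ell_i(A)$ for each $i\in N$, so the entire task reduces to proving $\ell_i(A)\le\tfrac{1}{w_1}$ for every agent (which immediately gives $s^*_i\le\tfrac{w_i}{w_1}$, and in fact $s^*_i\le 1$ for agents still in the game) and then converting this per-agent bound into the claimed total bound.

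First I would dispose of termination, completeness and running time. Each non-terminating iteration transfers an item along a path that ends at the dummy agent $i_0$, so $|A_{i_0}|$ decreases by one; since the set $R$ only shrinks (line~\ref{line:R}), the algorithm performs at most $m$ transferring iterations followed by one emptying sweep, and each iteration --- building the transfer graph, the removal sweep, the $\arg\max$, and a BFS for a transfer path using the routine of \cite{viswanathan2023general} --- runs in polynomial time. For completeness: if some item $o$ is still held by $i_0$ and is valued by some agent, that agent has the length-one transfer path to $i_0$ and is therefore never removed; so at termination $A_{i_0}$ contains only items nobody values, and by our standing assumption that every item is valued by someone, the output $A$ is complete. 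In particular, when the loop exits, every agent has been removed at some iteration.

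Next I would prove $\ell_i(A)\le\tfrac{1}{w_1}$ by a case analysis on each agent's status at the terminal allocation. Write $A=A^T$ where $T$ is the last iteration in which a transfer occurred, so $A^t=A$ for all $t\ge T$, and let $t'_i$ be the iteration at which agent $i$ was removed. If $i\in R(T)$ (equivalently $t'_i>T$, i.e.\ $i$ is removed only in the final sweep), then $i$ has a transfer path at the start of iteration $T$ and \Cref{app:Alg4_sub_in_game} gives $\ell_i(A)=\ell_i(A^T)\le\tfrac{1}{w_i}\le\tfrac{1}{w_1}$. Otherwise $t'_i\le T$: if $t'_i<T$ then \Cref{subsidy of agent not in the game} applied at iteration $T>t'_i$ gives $\ell_i(A^T)\le\tfrac{1}{w_1}$ directly, and if $t'_i=T$ I apply \Cref{subsidy of agent not in the game} at the trailing no-op iteration $T+1>t'_i$, using $A^{T+1}=A$, to reach the same conclusion. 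In all cases $\ell_i(A)\le\tfrac{1}{w_1}$, hence $s^*_i=w_i\ell_i(A)\le\tfrac{w_i}{w_1}$, and since $w_i\ge w_1$ this is at least $1$. Finally, by part~(\ref{agent with 0 subsidy}) of \Cref{max_path_subsidy} some agent $j$ has $s^*_j=0$, so
\[
\sum_{i\in N}s^*_i=\sum_{i\neq j}s^*_i\le\sum_{i\neq j}\frac{w_i}{w_1}=\frac{W-w_j}{w_1}\le\frac{W-w_1}{w_1}=\frac{W}{w_1}-1,
\]
using $w_j\ge w_1$; together with the polynomial running time this proves the theorem.

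The main obstacle is not any individual inequality --- the genuinely technical lemmas (that removed agents never reappear on a transfer path, and that a high-cost path leaving a removed agent must contain a zero-value edge into $R$) are already carried out in \Cref{prop: valuation of removed agent} through \Cref{subsidy of agent not in the game} --- but rather the careful accounting of each agent's in-game/removed status \emph{relative to the terminal allocation}: \Cref{app:Alg4_sub_in_game} and \Cref{subsidy of agent not in the game} are stated for a generic iteration with a fixed status, so one must check they jointly cover every agent at the end, handling in particular the boundary case of an agent removed exactly at the last transferring iteration via a trailing no-op iteration. A minor side point to verify is that part~(\ref{agent with 0 subsidy}) of \Cref{max_path_subsidy} remains valid here even though the weights need not be at least $1$, since its argument only uses $w_i>0$.
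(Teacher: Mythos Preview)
Your proposal is correct and follows essentially the same route as the paper: deduce WEF-ability from non-redundancy (\Cref{app:thm:wefable--iff-no-cycles9} plus \Cref{non-redundant is WEF-able}), combine \Cref{app:Alg4_sub_in_game} and \Cref{subsidy of agent not in the game} to bound $\ell_i(A)\le 1/w_1$ for every agent, and use part~(\ref{agent with 0 subsidy}) of \Cref{max_path_subsidy} to pass to the total bound $W/w_1-1$. Your write-up is in fact more careful than the paper's on two fringe points --- the explicit completeness/termination argument and the handling of the boundary case $t'_i=T$ via a trailing no-op iteration --- but these are cosmetic refinements of the same proof.
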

\begin{proof}
    Together \Cref{app:Alg4_sub_in_game} and \Cref{subsidy of agent not in the game} establish that for every $i\in N$ and $t \in [T]$, $\ell_i(A^t)\leq \frac{1}{w_1}$. Along with \Cref{app:thm:wefable--iff-no-cycles9}, \Cref{alg:binary-additive} computes a WEF-able allocation $A^{T}$ where the required subsidy per agent $i \in N$ is at most $\frac{w_i}{w_1}$. 

As there is at least one agent who requires no subsidy (see \Cref{max_path_subsidy}), the total required subsidy is at most $\frac{W-w_1}{w_1} = \frac{W}{w_1} - 1$.
 
We complete the proof of \Cref{app:theorem_21} by demonstrating that \Cref{alg:binary-additive} runs in polynomial-time. 
We represent the valuations using a binary matrix $X$ where $v_i(o_j) = 1 \Longleftrightarrow X(i,j) = 1$. Hence, the allocation of items to the bundle of $i_0$ at line 1 can be accomplished in $O(mn)$ time.

    At each iteration $t$ of the while loop, either $A_{i_0}$ or $R$ reduced by 1, ensuring that the loop runs at most $m+n$ times.

    Let $T_v$ represent the complexity of computing the value of a bundle of items, and $T_\phi$ denote the complexity of computing the gain function. Both are polynomial in $m$.

According to \citet{viswanathan2023general}, finding a transfer path starting from agent $i \in N$ (or determining that no such path exists) takes $O(T_v \log m)$. Removing agents at the start of each iteration incurs a complexity of $O(n T_v \log m)$. Furthermore, as stated in \citet{viswanathan2023general}, identifying $i^t$ requires $O(nT_v)$. Updating the allocation based on the transfer path, according to the same source, takes $O(m)$.

Thus, each iteration has a total complexity of $$O(nT_v \log m + nT_v + T_v \log m + m) = O\left( nT_v \log m + m\right).$$

    In conclusion, \Cref{alg:binary-additive} runs in $O\left(\left(m+n\right)\left(nT_v \log m + m\right)\right)$, which is polynomial in both $m$ and $n$.
    \end{proof}
    In Appendix \ref{alg:binary-additive tightness}, we present a tighter bound that is closer to the lower bound, along with a detailed discussion on its tightness.
    
Notice that since the output allocation from \Cref{alg:binary-additive} is non-redundant, $A$ maximizes the social welfare. 
Moreover, as shown in \Cref{app:example: binary additive}, $A$ might not be \emph{WEF(1,0)} (No matter which item is removed from $i_2$'s bundle, $i_1$ still envies). 
However, it is \emph{WEF(0,1)}. 
\begin{proposition} \label{app:binary additive wef01}
    For additive binary valuations, \Cref{alg:binary-additive} computes a $WEF(0,1)$ allocation.
\end{proposition}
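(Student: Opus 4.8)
The plan is to recast WEF$(0,1)$ as a bound on single edges of the weighted envy graph of the output allocation $A$ and then deduce it from the structural properties already established. Since WEF$(0,1)$ is the case $x=0,y=1$, it requires that for every ordered pair $i,j\in N$ there be $B\subseteq A_j$ with $|B|\le 1$ and $\frac{v_i(A_i)+v_i(B)}{w_i}\ge\frac{v_i(A_j)}{w_j}$. If $v_i(A_j)=0$, this holds with $B=\emptyset$. Otherwise binariness gives some $o\in A_j$ with $v_i(o)=1$, and with $B=\{o\}$ the condition becomes $\frac{v_i(A_i)+1}{w_i}\ge\frac{v_i(A_j)}{w_j}$, i.e., $cost_A(i,j)\le \frac{1}{w_i}$. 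So it suffices to prove: for every $i,j\in N$ with $v_i(A_j)>0$, $cost_A(i,j)\le \frac{1}{w_i}$. By \Cref{app:thm:wefable--iff-no-cycles9}, $A$ is non-redundant, so $v_i(A_i)=|A_i|$ and $v_i(A_j)\le|A_j|$; hence it is enough to show $\frac{|A_j|}{w_j}-\frac{|A_i|}{w_i}\le\frac{1}{w_i}$, which is exactly the inequality in \Cref{lem:non-redundant-difference}.

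Since the while loop terminates only when $R=\emptyset$, every agent is removed at some iteration; let $r_i$ be the iteration in which $i$ is removed, so $i\in R(t)$ for all $t\le r_i-1$. The easy intuition is that if $i$ is still in the game at the end of the run then \Cref{lem:non-redundant-difference} applies to $i$ directly; the obstacle — and the heart of the proof — is handling agents $i$ that were removed earlier, for which the gain-function selection rule (hence \Cref{lem:non-redundant-difference}) no longer says anything. The plan for this is to use \Cref{prop: valuation of removed agent} as follows. Suppose $v_i(A_j)>0$ in the final allocation, and suppose for contradiction that $A_j$ changed at some iteration $t\ge r_i$; take the last such $t$. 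A bundle changes only along a transfer path, so $j$ lies on the transfer path of iteration $t$, and thus $j\in R(t)$ by \Cref{prop: removed agent and transfer path}; since $i$ was already removed by iteration $r_i\le t$, \Cref{prop: valuation of removed agent} gives $v_i(A_j^t)=0$, and since $A_j$ does not change after $t$ this forces $v_i(A_j)=0$, a contradiction. Hence $A_j$ equals $A_j^{r_i-1}$. Moreover $A_i=A_i^{r_i-1}$ as well, because after iteration $r_i-1$ agent $i$ is removed and removed agents never occur on transfer paths (\Cref{prop: removed agent and transfer path}). Now $i\in R(r_i-1)$, so \Cref{lem:non-redundant-difference} at iteration $r_i-1$ yields $\frac{|A_j|}{w_j}-\frac{|A_i|}{w_i}=\frac{|A_j^{r_i-1}|}{w_j}-\frac{|A_i^{r_i-1}|}{w_i}\le\frac{1}{w_i}$; with non-redundancy this gives $cost_A(i,j)\le\frac{1}{w_i}$, as required.

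What remains is bookkeeping: if $r_i=1$ then $i$ has no transfer path at the first iteration, which (since initially only $A_{i_0}=M$ is nonempty) means $v_i(M)=0$, so $v_i(A_j)=0$ and we are in the trivial branch; thus whenever $v_i(A_j)>0$ we have $r_i\ge 2$ and $A^{r_i-1}$ is a genuine allocation produced by an executed iteration. One should also keep the convention that $R(t)$ denotes the contents of $R$ after the removal step of iteration $t$, so that indeed $i\in R(t)$ for all $t\le r_i-1$. Combining the two cases, $cost_A(i,j)\le\frac{1}{w_i}$ whenever $v_i(A_j)>0$, and WEF$(0,1)$ follows from the reduction in the first paragraph.
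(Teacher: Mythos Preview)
Your proof is correct and takes a genuinely different route from the paper's. The paper proceeds by induction on iterations, maintaining the invariant that $A^t$ is WEF$(0,1)$ after every step; the inductive step is a multi-way case analysis on how $v_i(A_j)$ changes during iteration $t$, with the selection rule invoked at carefully chosen earlier iterations. You instead work only with the final allocation: you reduce WEF$(0,1)$ to the single-edge bound $cost_A(i,j)\le\tfrac{1}{w_i}$ (using binariness to pick the witness $B$), pass to $\tfrac{|A_j|}{w_j}-\tfrac{|A_i|}{w_i}\le\tfrac{1}{w_i}$ via non-redundancy, and then observe that whenever $v_i(A_j)>0$, both $A_i$ and $A_j$ are already ``frozen'' by the end of iteration $r_i-1$ --- $A_j$ by \Cref{prop: valuation of removed agent} (if $A_j$ changed later, $j$ would lie in $R$ at that moment and $v_i$ of its bundle would be zero), and $A_i$ by \Cref{prop: removed agent and transfer path} --- so that \Cref{lem:non-redundant-difference} applied at that single iteration finishes the job. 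The payoff is modularity: you avoid any case analysis and reuse the structural lemmas already proved for the subsidy bound. The paper's induction, by contrast, is more self-contained but in its critical case (where $v_i(A_j^t)>v_i(A_j^{t-1})$) implicitly needs $i\in R(t)$ for the selection-rule comparison to be valid; that fact is exactly what \Cref{prop: valuation of removed agent} provides, and your argument makes this dependence explicit rather than tacit.
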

\begin{proof}
We prove by induction that at the end of each iteration $t\in [T]$
    $A^t$ satisfies $WEF(0,1)$. This means that for every $i,j \in N$, there exists a set of items $B \subseteq A_j^t$ of size at most $1$ such that $\frac{v_i(A_i^t) + v_i(B)}{w_i} \geq \frac{v_i(A_j^t)}{w_j}$.

The claim is straightforward for the first iteration. We assume the claim holds for the $(t-1)$-th iteration and prove it for the $t$-th iteration. Note that $v_i(A_i^{t-1}) \leq v_i(A_i^{t})$.
\begin{enumerate}
    \item $A_j^t = A_j^{t-1}$ and $A_i^t = A_i^{t-1}$: the claim holds due to the induction step.
    \item $v_i(A_j^t) = v_i(A_j^{t-1})$: This is the case where $j$ was not included in a transfer path, or was included but was not the first agent in the path, and exchanged an item for a new one, both having the same value for $i$. 
    By the induction assumption, there exists some singleton $B^{t-1} \subseteq A_j^{t-1}$
    such that $\frac{v_i(A_i^{t-1}) + v_i(B^{t-1})}{w_i} \geq \frac{v_i(A_j^{t-1})}{w_j} = \frac{v_i(A_j^t)}{w_j}$.
    There exists some singleton $B^{t} \subseteq A_j^{t}$, with $v_i(B^t) = v_i(B^{t-1})$.
    Hence, 
    $$\frac{v_i(A_i^{t}) + v_i(B^t)}{w_i} \geq \frac{v_i(A_i^{t-1}) + v_i(B^{t-1})}{w_i} \geq \frac{v_i(A_j^{t-1})}{w_j} = \frac{v_i(A_j^t)}{w_j}.$$
    \item  $v_i(A_j^t) < v_i(A_j^{t-1})$: This is the case where $j$ was included in a transfer path but exchanged an item $i$ values for an item $i$ does not value. Then $$\frac{v_i(A_i^{t}) + v_i(B)}{w_i} \geq \frac{v_i(A_i^{t-1}) + v_i(B)}{w_i} \geq \frac{v_i(A_j^{t-1})}{w_j} > \frac{v_i(A_j^t)}{w_j}$$
        for a set $B \subseteq A_j^t$ of size at most $1$.
    \item $v_i(A_j^t) > v_i(A_j^{t-1})$: there are two subcases:
    \begin{enumerate}
        \item If $j$ is the first agent in the transfer path and received a new item $o$ such that $v_i(o) = 1$, then 
        $\frac{v_i(A_i^{t-1}) + 1}{w_i} \geq \frac{v_j(A_j^{t-1}) + 1}{w_j}$ due to the selection rule,
        and 
        $\frac{v_j(A_j^{t-1}) + 1}{w_j}
        \geq \frac{v_i(A_j^{t-1}) + 1}{w_j}$ due to non-redundancy.
        
        We can conclude that $$\frac{v_i(A_i^{t}) + 1}{w_i} \geq \frac{v_i(A_i^{t-1}) + 1}{w_i} \geq \frac{v_i(A_j^{t-1}) + 1}{w_j} = \frac{v_i(A_j^{t})}{w_j}.$$ The claim holds for $B = \{o\}\subseteq A_j^t$.
        \item If $j$ was not the first agent in the path, but exchanged an item that $i$ does not value for an item $o$ that $i$ values, $v_i(o) = 1$.
        Let $t' < t$ represent the most recent iteration in which agent $j$ was selected and received a new item. Note that $\frac{v_i(A_i^{t}) + 1}{w_i} \geq \frac{v_i(A_i^{t'-1}) + 1}{w_i} \geq \frac{v_j(A_j^{t'-1}) + 1}{w_j}$ due to the selection rule.
        
        Assume to the contrary that $\frac{v_i(A_i^t) + 1}{w_i} < \frac{v_i(A_j^t)}{w_j}$. Then,
        \begin{align*}
            & \frac{v_i(A_i^t) + 1}{w_i} < \frac{v_i(A_j^t)}{w_j} \leq \frac{v_j(A_j^{t})}{w_j} = \frac{v_j(A_j^{t'})}{w_j} =  \\ 
            &\frac{v_j(A_j^{t'-1}) + 1}{w_j} \leq \frac{v_i(A_i^{t'-1}) + 1}{w_i} \leq \frac{v_i(A_i^{t}) + 1}{w_i},
        \end{align*}
        a contradiction.
        Hence, $\frac{v_i(A_i^t) + 1}{w_i} \geq \frac{v_i(A_j^t)}{w_j}$ and the claim holds for $B = \{o\}\subseteq A_j^t$.
    \end{enumerate}
\end{enumerate}
\end{proof}
\section{Conclusions And Future Work}
We studied the problem of subsidy minimization required for achieving weighted envy-freeness among agents with varying entitlements when allocated indivisible items.

Previous work in the unweighted setting of subsidies relied on basic characterizations of EF that fail in the weighted settings. This poses interesting challenges in the new setting. 

We show that an allocation is WEF-able only if its weighted envy graph does not contain positive cost cycles. Unlike the unweighted scenario (\citet{halpern2019fair}), 
a given allocation that maximizes the utilitarian welfare across all reassignments of its bundles to agents is not a condition for WEF-able. 
The gap between the weighted setting and the unweighted setting in EF characterization raises the open question of other conditions under which weighted EF allocations may exist.

We've shown polynomial-time algorithms to compute WEF-able allocations for general, identical, and binary additive valuations in the weighted setting. The proved subsidy bounds are $\left(W-w_1\right) V$, $(n-1) V$, and $\frac{W}{w_1} - 1$, respectively. 

While our $(n-1) V$ bound is tight, some gap remains for $\left(W-w_1\right) V$ and  $\frac{W}{w_1} - 1$ for the weighted setting. However for identical and binary additive valuations, our bounds align with those of the unweighted setting. Like the previous literature, we focus on additive valuations. This highlights the need for further investigation into refining subsidy bounds for the general additive case and extending results to non-additive valuations in both weighted and unweighted contexts.  

\nocite{*}

\newpage

\bibliography{mybibfile}
\appendix

\section*{APPENDIX}
\section{Tightness of the Subsidy Bounds}
\subsection{Subsidy Bound of \Cref{alg:general-additive}} \label{alg:general-additive tightness}
     As \Cref{theorem: sub general additive} implies, \Cref{alg:general-additive} computes a WEF-able allocation with a total subsidy of at most $(W-w_1)V$. However, this
     bound is not tight. To understand why, consider the case of $2$ items, each valued at $V$ by agent $i\in \{1, \ldots, n-1\}$, who has an entitlement of $w_i \geq 2$, and $V - \epsilon$ by all other agents. Our algorithm will allocate all the items to agent $i$, resulting in a subsidy of $\frac{w_j}{w_i} 2 (V-\epsilon)$ by each other agent $j\neq i \in N$, leading to a total subsidy of $(W-w_i)\frac{2(V-\epsilon)}{w_i}$, for arbitrarily small $\epsilon>0$.
     
     In general, a WEF-able allocation can achieve a lower subsidy by allocating one item to another agent with higher index $j > i$, i.e, $w_j \geq w_i$. For instance, if one item is allocated to such agent $j$, agent $i$ envies agent $j$ by $\frac{V}{w_j} - \frac{V}{w_i} \leq 0$, and agent $j$ envies agent $i$ by $\frac{V - \epsilon}{w_i} - \frac{V - \epsilon}{w_j} < \frac{2(V - \epsilon)}{w_i}$. If $\frac{V - \epsilon}{w_i} - \frac{V - \epsilon}{w_j} \leq 0$, then no subsidy is required. Otherwise, the subsidy required by agent $j$ is $\left( \frac{V - \epsilon}{w_i} - \frac{V - \epsilon}{w_j} \right) w_j < \frac{w_j}{w_i} \cdot 2(V - \epsilon)$. The subsidy required by each other agent $k \neq i,j$ is significantly lower than $w_k \cdot \frac{w_j}{w_j} \cdot \frac{2(V - \epsilon)}{w_i} = \frac{w_k}{w_i} \cdot 2(V - \epsilon)$. Therefore, the required total subsidy is significantly lower than $(W - w_i) \frac{2(V - \epsilon)}{w_i}$.

     In both cases, the resulting total subsidy bound is better than the bound obtained by allocating all items to agent $i$.

     \subsection{Subsidy Bound of \Cref{alg:binary-additive}} \label{alg:binary-additive tightness}
As \Cref{app:theorem_21} implies, \Cref{alg:binary-additive} computes a WEF-able allocation with a total subsidy of at most $\frac{W}{w_1} - 1$. However, with more careful analysis, we can prove a tighter bound.

There are two cases to consider:
\begin{enumerate}
    \item \textbf{Agent 1 with the minimum entitlement receives a positive subsidy.} Together, \Cref{app:Alg4_sub_in_game} and \Cref{subsidy of agent not in the game} imply that $s_i \leq \frac{w_i}{w_1}$ for each agent $i \in N$. Since agent 1 does receive a positive subsidy, and by \Cref{max_path_subsidy}, there exists at least one agent who requires no subsidy, the total required subsidy is bounded by $\frac{W - w_2}{w_1}$.
    \item \textbf{Agent 1 with the minimum entitlement receives no subsidy.} We can modify \Cref{subsidy of agent not in the game} in the following way: for each agent $i \notin R(t)$, where $t \in [T]$, $\ell_i(A^t) \leq \frac{1}{w_2}$. By the proof of \Cref{subsidy of agent not in the game}, $\ell_i(A^t) \leq \ell_{i_{j+1}}(A^t)$. If $i_{j+1} = i_1$, then $\ell_i(A^t) \leq \ell_{i_{j+1}}(A^t) \leq 0$ (because agent 1 requires no subsidy). Otherwise, $\ell_i(A^t) \leq \ell_{i_{j+1}}(A^t) \leq \frac{1}{w_{i_{j+1}}} \leq \frac{1}{w_2}$.
    Overall, the subsidy required by each agent is bounded by $\frac{w_i}{w_2}$, and by \Cref{max_path_subsidy}, there exists at least one agent who requires no subsidy. Therefore, the total required subsidy is bounded by $\frac{W - w_1}{w_2}$.
\end{enumerate}
To sum up, the total required subsidy is at most $\max\Big\{\frac{W - w_1}{w_2}, \frac{W - w_2}{w_1}\Big\}$.
\end{document}